\newcommand{\eq}[1]{Eq.~\hyperref[eq:#1]{(\ref*{eq:#1})}}
\renewcommand{\sec}[1]{\hyperref[sec:#1]{Section~\ref*{sec:#1}}}
\newcommand{\app}[1]{\hyperref[app:#1]{Appendix~\ref*{app:#1}}}
\newcommand{\tab}[1]{\hyperref[tab:#1]{Table~\ref*{tab:#1}}}
\newcommand{\fig}[1]{\hyperref[fig:#1]{Figure~\ref*{fig:#1}}}
\newcommand{\figx}[2]{\hyperref[fig:#1]{Figure~\ref*{fig:#1}(#2)}}
\newcommand{\micro}{RWSWT}
\newcommand{\caltech}{LLDUC}
\newcommand{\step}{k}
\def\ket#1{\mathinner{|{#1}\rangle}}
\newcommand{\nn}{\nonumber\\}
\newcommand{\wid}{M}
\newcommand{\dm}{d}
\newcommand{\chunk}{k}
\newtheorem{theorem}{Theorem}
\newcommand{\qw}[1][-1]{\ar @{-} [0,#1]}
\newcommand{\qwx}[1][-1]{\ar @{-} [#1,0]}
\newcommand{\cw}[1][-1]{\ar @{=} [0,#1]}
\newcommand{\cwx}[1][-1]{\ar @{=} [#1,0]}
\newcommand{\gate}[1]{*+<.6em>{#1} \POS ="i","i"+UR;"i"+UL **\dir{-};"i"+DL **\dir{-};"i"+DR **\dir{-};"i"+UR **\dir{-},"i" \qw}
\newcommand{\meter}{*=<1.8em,1.4em>{\xy ="j","j"-<.778em,.322em>;{"j"+<.778em,-.322em> \ellipse ur,_{}},"j"-<0em,.4em>;p+<.5em,.9em> **\dir{-},"j"+<2.2em,2.2em>*{},"j"-<2.2em,2.2em>*{} \endxy} \POS ="i","i"+UR;"i"+UL **\dir{-};"i"+DL **\dir{-};"i"+DR **\dir{-};"i"+UR **\dir{-},"i" \qw}
\newcommand{\control}{*!<0em,.025em>-=-<.2em>{\bullet}}
\newcommand{\controlo}{*+<.01em>{\xy -<.095em>*\xycircle<.19em>{} \endxy}}
\newcommand{\ctrl}[1]{\control \qwx[#1] \qw}
\newcommand{\ctrlo}[1]{\controlo \qwx[#1] \qw}
\newcommand{\targ}{*+<.02em,.02em>{\xy ="i","i"-<.39em,0em>;"i"+<.39em,0em> **\dir{-}, "i"-<0em,.39em>;"i"+<0em,.39em> **\dir{-},"i"*\xycircle<.4em>{} \endxy} \qw}
\newcommand{\qswap}{*=<0em>{\times} \qw}
\newcommand{\multigate}[2]{*+<1em,.9em>{\hphantom{#2}} \POS [0,0]="i",[0,0].[#1,0]="e",!C *{#2},"e"+UR;"e"+UL **\dir{-};"e"+DL **\dir{-};"e"+DR **\dir{-};"e"+UR **\dir{-},"i" \qw}
\newcommand{\ghost}[1]{*+<1em,.9em>{\hphantom{#1}} \qw}
\newcommand{\rstick}[1]{*!L!<-.5em,0em>=<0em>{#1}}
\newcommand{\lstick}[1]{*!R!<.5em,0em>=<0em>{#1}}
\newcommand{\ustick}[1]{*!D!<0em,-.5em>=<0em>{#1}}
\newcommand{\Qcircuit}{\xymatrix @*=<0em>}
\begin{document}

\title{Qubitization of Arbitrary Basis Quantum Chemistry Leveraging Sparsity and Low Rank Factorization}

\date{\today}
\author{Dominic W. Berry}
\email{dominic.berry@mq.edu.au}
\affiliation{Department of Physics and Astronomy, Macquarie University, Sydney, NSW 2109, Australia}
\orcid{0000-0003-3446-1449}
\author{Craig Gidney}
\affiliation{Google Research, Venice, CA 90291, United States}
\author{Mario Motta}
\affiliation{Division of Chemistry,
California Institute of Technology, Pasadena, CA 91125, United States}
\author{Jarrod R. McClean}
\affiliation{Google Research, Venice, CA 90291, United States}
\orcid{0000-0002-2809-0509}
\author{Ryan Babbush}
\email{babbush@google.com}
\affiliation{Google Research, Venice, CA 90291, United States}
\orcid{0000-0001-6979-9533}
\date{2019-11-27}

\begin{abstract}
Recent work has dramatically reduced the gate complexity required to quantum simulate chemistry by using linear combinations of unitaries based methods to exploit structure in the plane wave basis Coulomb operator. Here, we show that one can achieve similar scaling even for arbitrary basis sets (which can be hundreds of times more compact than plane waves) by using qubitized quantum walks in a fashion that takes advantage of structure in the Coulomb operator, either by directly exploiting sparseness, or via a low rank tensor factorization. We provide circuits for several variants of our algorithm (which all improve over the scaling of prior methods) including one with $\widetilde{\cal O}(N^{3/2} \lambda)$ T complexity, where $N$ is number of orbitals and $\lambda$ is the 1-norm of the chemistry Hamiltonian. We deploy our algorithms to simulate the FeMoco molecule (relevant to Nitrogen fixation) and obtain circuits requiring about seven hundred times less surface code spacetime volume than prior quantum algorithms for this system, despite us using a larger and more accurate active space.
\end{abstract}

\maketitle

\section{Introduction}

Quantum computers were originally proposed as special purpose tools for efficiently modeling physical quantum mechanical systems \cite{Feynman1982}. Ever since then quantum simulation has been central to the study of quantum computing \cite{Lloyd1996} while also regarded as one of its most promising applications. In recent years, progress in quantum hardware has led to great optimism for the field. However, a large gap remains between expectations for the technology and the expected value of the relatively few known applications that appear viable on even a small fault-tolerant quantum computer \cite{Mohseni2017}. This disparity has underscored the importance of estimating and reducing the resources required to implement quantum algorithms within a fault-tolerant cost model.

The most widely studied and anticipated application of quantum simulation is chemistry \cite{Aspuru-Guzik2005}. Most work on this topic has focused on providing solutions to the electronic structure problem by using phase estimation to sample molecular eigenstates and estimate eigenvalues \cite{Kitaev1995,Abrams1997}. Even at small problem sizes of around a hundred qubits, efficient and accurate solutions to this problem could prove transformative for various fields of study and the development of technologies such as better batteries, pharmaceuticals and industrial catalysts.

In order to represent molecular systems on a quantum computer one usually discretizes the many-body wavefunction using a basis of single-particle functions referred to as orbitals. The vast majority of quantum chemistry calculations use either plane wave orbitals, or more elaborate orbitals that are commonly composed of linear combinations of Gaussians.
Plane waves are often chosen in calculations of periodic materials and lead to highly structured Hamiltonians. The work of \cite{BabbushLow} showed that exploiting this structure leads to asymptotic advantages for quantum algorithms. Today, the best-scaling quantum algorithms for chemistry in second quantization use plane waves; with either ${\cal O}(N^3)$ gate complexity (with small constant factors) \cite{BabbushSpectra,Kivlichan2019} or  ${\cal O}(N^2 \log N)$ gate complexity (with large constant factors and more spatial complexity) \cite{Low2018}.

A major limitation to using plane waves in second quantization is that one needs a very large number of spin-orbitals to represent many molecular systems to chemical accuracy. The work of \cite{BabbushContinuum} suggests resolving this problem by simulating the plane wave Hamiltonian in first quantization to achieve $\widetilde{\cal O}(N^{1/3}\eta^{8/3})$ gate complexity, where $\eta$ is the number of electrons. With such low scaling in $N$, one might be able to use an extremely large plane wave basis. Unfortunately, the practicality of that algorithm is unclear because it has not been compiled to explicit circuits, and it is unclear how large the basis would need to be \cite{Low2018}.

The more obvious remedy to the low resolution of plane waves is to use a more compact basis. Indeed, the majority of proposals for the quantum simulation of chemistry focus on using very compact molecular orbitals. However, using molecular orbitals leads to complex Hamiltonians with coefficients defined in terms of integrals and ${\cal O}(N^4)$ distinct terms. As a consequence, the first quantum algorithms in this representation had gate complexity ${\cal O}(N^{11})$ \cite{Whitfield2010,Wecker2014}. Since then, a large community of researchers has worked to significantly reduce the cost of simulation in this representation through tighter bounds \cite{Wecker2014,BabbushTrotter,Poulin2014}, better mappings between fermions and qubits \cite{Seeley2012,Setia2017,Bravyi2017,Steudtner2017,Jiang2018}, improved state preparation techniques \cite{Veis2014,Berry2018,Poulin2017,Tubman2018}, application of new time-evolution strategies \cite{Low2016,BabbushSparse1,Campbell2019}, considerations of fault-tolerant overheads \cite{Jones2012,Reiher2017,Litinski2018} and other representational and algorithmic insights \cite{Kassal2008,Toloui2013,Hastings2015,Sugisaki2016,Motzoi2017,Motta2018}.

The lowest rigorous complexity of prior work on second quantized arbitrary basis chemistry simulation is either the $\widetilde{\cal O}(N^5)$ scaling of \cite{BabbushSparse1}, or the $\widetilde{\cal O}(\lambda^2)$ scaling of \cite{Campbell2019}, where $\lambda$ is the 1-norm of the Hamiltonian. However, the \cite{BabbushSparse1} algorithm suffers from large constant factors in the scaling, and the approach of \cite{Campbell2019} scales quadratically worse than post-Trotter methods with respect to the evolution time. In practice, we expect the most competitive prior method would be Lie-Trotter product formulae \cite{Motta2018}, but the step size for that approach has not been studied.
These results are challenging to compare directly because the scaling of $\lambda$ with respect to $N$ depends on whether $N$ is growing towards the thermodynamic (large system) or continuum (large basis) limit. Here we provide an algorithm with $\widetilde{\cal O}(N^{3/2} \lambda)$ T complexity, which appears better than all prior work so long as $\lambda = \Omega(N^{3/2})$, which is usually the case.

Prior papers to compile a quantum chemistry algorithm to the level of Clifford + T gates and estimate the resources required within an error-correcting code are \cite{Reiher2017,BabbushSpectra,Kivlichan2019}. These papers focus on minimizing T complexity or Toffoli complexity because these gates cannot be transversely implemented within practical codes \cite{Litinski2018,Gidney2018pub}. To implement these gates one must distill magic states or Toffoli states, which takes orders of magnitude more spacetime volume (qubitseconds) than executing Clifford gates and also consumes a very large number of physical qubits \cite{Fowler2012,Fowler2018}.

The work of \cite{Reiher2017} focused on the simulation of an active space of the FeMo cofactor of the Nitrogenase enzyme, aka ``FeMoco'' (stoichiometry ${\rm Fe}_7 {\rm Mo} {\rm S}_9 {\rm C}$). FeMoco is the active site for the catalytic conversion of Nitrogen gas into ammonia (fertilizer) in biological processes \cite{Beinert1997}. This reaction is of great importance; while the mechanism is not understood due to complex electronic structure, biological Nitrogen fixation is significantly more efficient than the industrial alternative. The paper by {\micro} \cite{Reiher2017} focused on a 108 qubit active space,  and determined that roughly $10^{14}$ T gates would be required. If implemented in the surface code using gates with $10^{-3}$ error rates, the most efficient protocols for magic state distillation in this context require roughly 14 qubitseconds \cite{Litinski2018,Gidney2018pub} of spacetime volume. At those rates, just distilling the magic states needed for \cite{Reiher2017} would require over four million qubitdecades (e.g., four million qubits running for a decade or a billion qubits running for two weeks), which is not practical.

The works of \cite{BabbushSpectra,Kivlichan2019} show that one can perform similar sized chemistry simulations with roughly $10^8$ T gates, but in a plane wave rather than Gaussian basis. By application of techniques from \cite{Gidney2018pub} such calculations could be implemented in the surface code at $10^{-3}$ physical error rates with fewer than a million physical qubits in just hours. However, one would require far more plane waves to treat FeMoco, so these algorithms are not appropriate. In this paper, we develop an approach that has T counts somewhere in between those discussed in \cite{BabbushSpectra,Kivlichan2019} and \cite{Reiher2017} and is compatible with compact molecular orbitals appropriate for a system like FeMoco.

Our approach is to perform phase estimation directly  on a quantum walk \cite{Szegedy2004} generated using qubitization oracles \cite{Low2016}, designed to simulate Hamiltonians in the linear combinations of unitaries query model \cite{Childs2012}. Our analysis of the phase estimation algorithm is nearly identical to that in \cite{BabbushSpectra}, which realizes a proposal suggested in \cite{Berry2018,Poulin2017} based on qubitization \cite{Low2016}. We make heavy use of the unary iteration technique introduced in \cite{BabbushSpectra} (see also a similar idea in \cite{Childs2017}) as well as the QROM based state preparation and coherent alias sampling technique that was originally developed in \cite{BabbushSpectra} and then improved to lower T gate complexity in \cite{Lowpreparation}. Finally, a key aspect of our algorithm is to leverage the sparse nature of the Coulomb operator, using a low rank representation recently discussed in \cite{Motta2018}.

In the case where we limit the number of ancilla qubits used, mostly using the system qubits as ``dirty'' ancilla, our algorithm can obtain chemical accuracy for FeMoco with about $2\times 10^{13}$ Toffoli gates, using the active spaces of either Reiher, Wiebe, Svore, Wecker, and Troyer (\micro) \cite{Reiher2017} or Li, Li, Dattani, Umrigar, and Chan (\caltech) \cite{Li2019}.
If we allow a large number of ancilla then the number of Toffoli gates achieved with our most efficient approach is about $2\times 10^{11}$ for the {\micro} orbitals, or $8 \times 10^{10}$ for the {\caltech} orbitals.
Throughout we focus on complexities in terms of Toffoli counts, because the non-Clifford gates we use are exclusively Toffolis.
The cost in terms of T gates will be four times as large but since we are bottlenecked by Toffolis we can directly distill Toffoli states, which is possible with roughly the same cost as distilling two magic states for T gates \cite{Gidney2018pub}. Although we improve upon the distillation spacetime volume required by \cite{Reiher2017}, at $10^{-3}$ error rates we still require about three million qubitweeks of state distillation, which improves over previous results by roughly a factor of seven hundred, but is still substantial.

The paper is organized as follows.
In \sec{low_rank} we review how it is possible to truncate the Coulomb operator to low rank, and establish notation.
In \sec{LCU} we describe the Hamiltonian as a linear combination of unitaries, and how to perform the state preparation and controlled unitary operations.
We give calculations of the complexities obtained with the low rank truncation for FeMoco in \sec{complex}.
In \sec{sparsity} and \sec{sparse_complex} we discuss techniques that can be used to further lower the cost of qubitization based quantum chemistry simulations by leveraging unstructured sparsity that may exist in the Coulomb operator. We conclude in \sec{conc}.
In \app{lookup},  \app{clean-lookup}, and \app{unlookup} we discuss technical details relating to how the qubitization state preparation oracle is implemented. In \app{lambda_general} we discuss the scaling of the $\lambda$ parameter for more general chemical systems.
In \app{mincost} we give the details for minor contributions to the cost, and in \app{arith} we give circuits and exact costings for arithmetic.

\section{Low Rank Tensor Factorization of the Coulomb Operator}
\label{sec:low_rank}

In this section we review representations of the Coulomb operator based on low rank tensor decompositions. These ideas have existed in some form in the classical electronic structure literature for decades \cite{Martinez_2013,DF1,DF2,mCD1,mCD2,mCD3}, and were recently discussed in the context of Trotter based electronic structure simulations in \cite{Motta2018}.

We first define the electronic structure Hamiltonian in an arbitrary second quantized basis as
\begin{align}
	H
    & = \sum_{\sigma \in \{\uparrow, \downarrow\}} \sum_{p,q=1}^{N/2} h_{pq} a^\dagger_{p,\sigma} a_{q,\sigma} + \frac{1}{2}\sum_{\alpha,\beta\in \{\uparrow, \downarrow\}} \sum_{p,q,r,s=1}^{N/2} h_{pqrs} a^\dagger_{p, \alpha} a_{q,\beta}^\dagger a_{r,\beta} a_{s,\alpha} \\
    & = \sum_{\sigma \in \{\uparrow, \downarrow\}}\sum_{p,q=1}^{N/2} T_{pq} a^\dagger_{p,\sigma} a_{q,\sigma} + \sum_{\alpha,\beta\in \{\uparrow, \downarrow\}} \sum_{p,q,r,s=1}^{N/2} V_{pqrs}a^\dagger_{p,\alpha} a_{q,\alpha} a^\dagger_{r,\beta} a_{s,\beta}
    \label{eq:full_hamiltonian}
\end{align}
where $a^\dagger_p$ and $a_p$ are fermionic creation and annihilation operators for spin-orbital $\phi_p(r)$. 
The scalar coefficients $h_{pq}$ and $h_{pqrs}$ are the one- and two-electron integrals over the basis functions,
\begin{align}
h_{pq} &= \int dr_1 \, \phi_p^* (r_1) \left(-\frac{\nabla^2}{2} + U(r_1)\right) \phi_q (r_1), \\
h_{pqrs} &= \int dr_1 \, dr_2 \,  \phi_p^*(r_1) \phi_q^* (r_2) V(r_1, r_2) \phi_r (r_2)  \phi_s (r_1)
\label{eq:two_body_ints},
\end{align}
where $U(r_1)$ and $V(r_1,r_2)$ are the nuclear and electron-electron potentials, respectively.
In \eq{full_hamiltonian} we have implicitly defined $T_{pq}$ and $V_{pqrs}$ with respect to the usual integrals by rearranging the Coulomb operator in so-called ``chemist notation'' with $a^\dagger a\, a^\dagger a$ instead of $a^\dagger  a^\dagger a \, a$, and absorbing the factor of 1/2 into the coefficients. Reordering operators according to the fermionic anticommutation relations $\{a^\dagger_p, a_q\} = \delta_{pq}$ also slightly changes the one-body coefficients.

Assuming real basis functions (such as molecular orbitals), $T_{pq}$ and $V_{pqrs}$ are real and have the symmetries \cite{Helgaker2002},
\begin{equation}
\label{eq:symmetries}
T_{pq}=T_{qp}, \qquad  \qquad V_{pqrs}=V_{srqp}=V_{pqsr}=V_{qprs}=V_{qpsr}=V_{rsqp}=V_{rspq}=V_{srpq},
\end{equation}
which are important for properties of the tensor decompositions we will discuss. The rank-4 tensor $V$ has $N/2$ elements along each axis and we can reshape $V$ (e.g.\ using ``numpy.reshape'') into an $N^2/4 \times N^2/4$ matrix $W$ which has composite indices $pq$ (representing the first electron) and $rs$ (representing the second electron). This procedure is commonly referred to in the applied math literature as the matricization of a tensor.

The $W$ matrix is symmetric and positive semidefinite. It is important to emphasize here that we have focused on the spatial orbital representation of the two-electron integrals in the chemist ordering.  In the physicist ordering, the resulting matrix has full rank, and no reduction of cost is possible.  Similarly, introduction of fermionic symmetries induced in the full spin-orbital Hamiltonian into the coefficients (e.g.\ removing coefficients of terms like $a_i^\dagger a_j^\dagger a_k a_k$) will destroy the required structure for efficient simulation.  This is because we are exploiting the low rank nature of the underlying spatial Coulomb interaction through matrix factorization, and it is easy to lose this structure if one is not careful.  We will diagonalize it as
\begin{equation}
\label{eq:diag}
W g^{(\ell)} = \omega_\ell \, g^{(\ell)},
\qquad \qquad
W = \sum_{\ell=1}^{L} \omega_\ell \, g^{(\ell)} \left(g^{(\ell)}\right)^{\rm T},
\end{equation}
where $g^{(\ell)}$ is the $\ell^{\rm th}$ eigenvector of $W$ having size $N^2/4$ and $\omega_\ell \geq 0$ is its associated eigenvalue. 
Since we are taking $V$ and hence $W$ to be real, $g^{(\ell)}$ will also be real.
The rank of $W$ is denoted $L$.

If $W$ were of full rank then it would be the case that $L = N^2/4$; however, the integrals that one encounters in molecular electronic structure applications contain considerable structure. As a consequence of this structure, it turns out that $W$ is not full rank, and instead $L \in {\cal O}(N)$. The physical basis for this result is the pairwise nature of the Hamiltonian interactions, arising from the Coulomb kernel in a real-space representation. This property is regularly exploited in classical approaches to electronic structure in techniques such as ``density fitting'' \cite{DF1,DF2} which is commonly performed using a Cholesky decomposition \cite{mCD1,mCD2,mCD3} (which is similar to the diagonalization in \eq{diag} but is numerically more efficient and permits different left and right eigenvectors).

We use the notation $g^{(\ell)}_{pq}$ to denote the entry of $g^{(\ell)}$ indexed by the composite index $pq$ (the same composite index we used to flatten $V$ into $W$).
The eigenvectors $g^{(\ell)}$ inherit certain symmetry properties of $V$, with the result that $g^{(\ell)}_{pq}$ is symmetric between $p$ and $q$.
We can express the two-electron operator in terms of $g^{(\ell)}_{pq}$ as
\begin{equation}
\label{eq:diag1}
\sum_{\alpha,\beta \in \{\uparrow,\downarrow\}} \sum_{p,q,r,s=1}^{N/2} V_{pqrs}a^\dagger_{p,\alpha} a_{q,\alpha} a^\dagger_{r,\beta} a_{s,\beta} =  \sum_{\ell=1}^{L} \omega_\ell \left(\sum_{\sigma \in \{\uparrow,\downarrow\}}\sum_{p,q=1}^{N/2} g^{(\ell)}_{pq} a^\dagger_{p,\sigma} a_{q,\sigma}\right)^2.
\end{equation}
While common in electronic structure, this representation was first proposed in a quantum computing context in \cite{Poulin2014}; however, that work did not appear to appreciate the low rank aspect, which was first exploited for advantage in quantum computing in \cite{Motta2018}. Whereas there are ${\cal O}(N^4)$ seemingly distinct coefficients on the left side of this equation, there are ${\cal O}(N^2 L) = {\cal O}(N^3)$ distinct coefficients on the right side of the equation. Due to symmetry, the number of independent coefficients for each $\ell$ is $N^2/8+N/4$, giving a total number of $L(N^2/8+N/4)$.

The work of \cite{Motta2018} also discussed further factorizations of the Hamiltonian based on the results of \cite{Peng_2017}. There, they showed that one can also diagonalize and truncate the matrix with elements $g_{pq}^{(\ell)}$ which turns out to have only ${\cal O}(\log N)$ significant eigenvalues in certain asymptotic limits. Furthermore, one can rotate into the basis where the operator is diagonal using ${\cal O}(N \log N)$ operations. One might think this would be useful for linear combinations of unitaries based quantum simulation, and with sufficient cleverness, that might be the case. However, we do not focus on that second factorization in this work because it adds many intricacies, is less well understood than the first factorization, and might not offer an asymptotic advantage in T complexity for technical reasons related to the improved scaling that comes from using the improved ``QROAM'' discussed later in this work.

\section{LCU based simulation}
\label{sec:LCU}
A number of techniques for simulating Hamiltonian evolution are based upon the linear combination of unitaries approach \cite{Childs2012,Berry2013}.
This approach enables one to achieve a sum of unitaries that yields another unitary operation.
Say the operation to perform is given in the form $U=\sum_j w_j U_j$, where $w_j$ are real and positive.
First a control register is prepared in the state $\sum_j \sqrt{w_j/\lambda} \ket{j}$, where $\lambda=\sum_j w_j$ is needed for normalization.
We call this preparation operation ``\textsc{prepare}''.
Then a controlled $U_j$ operation is performed on the target system, an operation we will call ``\textsc{select}''.
The inverse \textsc{prepare} operation is performed, then if the control system is measured in the state $\ket{0}$, then the operation $U$ will have been applied to the target system.
This operation only has probability $1/\lambda^2$ of success, so to achieve $U$ with unit success probability one can use $\sim\lambda$ steps of oblivious amplitude amplification.

This formalism was generalized by the block encoding, or ``qubitization'', formalism of \cite{Low2016}, where one can take the Hamiltonian to be a linear combination of unitaries, and use quantum signal processing \cite{Low2017} to obtain Hamiltonian evolution.
For quantum chemistry, we are typically interested in the eigenvalues of the Hamiltonian.
In that case, instead of performing the Hamiltonian evolution, one can instead consider performing phase estimation on a single step from the qubitization formalism of \cite{Low2016}.
This step corresponds to expressing the Hamiltonian as a linear combination of unitaries using two \textsc{prepare} operations and one \textsc{select} operation, as well as a reflection as one would do for oblivious amplitude amplification.
The eigenvalues of this step will then be $e^{\pm i \arccos (E_k/\lambda)}$, where $E_k$ are the eigenvalues of the Hamiltonian.
The complexity is fundamentally dependent on the quantity $\lambda=\sum_j w_j$.
The overall complexity will be proportional to $\lambda$ multiplied by the complexity of the \textsc{select} and \textsc{prepare} operations.

\subsection{The Hamiltonian as a linear combination of unitaries}

We can map $a^\dagger_p a_q + a^\dagger_q a_p$ and the number operator $n_p=a^\dagger_p a_p$ to qubits using the Jordan-Wigner transformation as
\begin{equation}\label{eq:jw}
a^\dagger_{p,\sigma} a_{q,\sigma} + a^\dagger_{q,\sigma} a_{p, \sigma} \mapsto \left(X_{p,\sigma} \vec Z X_{q,\sigma} + Y_{p,\sigma} \vec Z Y_{q,\sigma}\right), \qquad
a^\dagger_{p,\sigma} a_{p,\sigma} \mapsto \frac{1}{2} \left(\openone - Z_{p,\sigma}\right),
\end{equation}
where $X$, $Y$ and $Z$ are the Pauli operators, the subscripts indicate the qubits these operators act on, and $A_p \vec Z A_q$ is shorthand for $A_p Z_{p+1} \cdots Z_{q-1} A_q$. Thus, our Hamiltonian can be represented as the linear combination of unitaries:
\begin{align}\label{eq:firstham}
H &\mapsto \frac 12 \sum_{p\ne q,\sigma} T_{pq}\left(X_{p,\sigma} \vec Z X_{q,\sigma} + Y_{p,\sigma} \vec Z Y_{q,\sigma}\right)
+ \frac 12 \sum_{p,\sigma} T_{pp} \left(\openone - Z_{p,\sigma}\right) \nonumber \\ & \quad
+ \frac 14 \sum_{\ell} \omega_\ell \left(\sum_{p\ne q,\sigma} g^{(\ell)}_{pq} \left(X_{p,\sigma} \vec Z X_{q,\sigma} + Y_{p,\sigma} \vec Z Y_{q,\sigma}\right) + \sum_{p,\sigma} g^{(\ell)}_{pp}\left(\openone - Z_{p,\sigma}\right) \right)^2.
\end{align}
The terms in the first line in this expression correspond to the one-body operator $T$, and the terms in the second line correspond to the factorized two-body operator.
Here the ranges of the summations are the same as for $H$.

Given the Hamiltonian expressed as a linear combination of unitaries, we can now give the expression for $\lambda$.
In the following we will use $\lambda_T$ to refer to the sum of weights for the one-body term as in \eq{full_hamiltonian}, and use $\lambda_W$ to refer to the sum of weights for Coulomb operator in its factorized form, as on the right-hand side of \eq{diag1}.
We have $\lambda = \lambda_T + \lambda_W$, with
\begin{align}
    \label{eq:lambdas}
    \lambda_T = 2 \sum_{p,q=1}^{N/2} \left | T_{pq} \right |,
    \qquad \qquad
    \lambda_W = 4 \sum_{\ell=1}^L \omega_\ell \left(\sum_{p,q=1}^{N/2} \left |g_{pq}^{(\ell)} \right |\right)^2.
\end{align}
Here the factors of 2 and 4 in front of the sums are due to summation over the up and down spins. By simulating the factorized Hamiltonian we are slightly increasing $\lambda$ over what it would be if we used the Hamiltonian in its original form from \eq{full_hamiltonian}. Denoting by $\lambda_V$ the sum of weights for the potential term in the form \eq{full_hamiltonian}, one has
\begin{equation}
\lambda_V = 4 \sum_{p,q,r,s=1}^{N/2} \left | V_{pqrs} \right|
    = 4 \sum_{p,q,r,s=1}^{N/2} \left| \sum_{\ell=1}^L \omega_\ell \, g^{(\ell)}_{pq} g^{(\ell)}_{rs} \right| \leq  4 \sum_{\ell=1}^L \omega_\ell \!\! \sum_{p,q,r,s=1}^{N/2}  \left|g^{(\ell)}_{pq}\right| \left|g^{(\ell)}_{rs} \right| = \lambda_W.
    \label{eq:lambda_V}
\end{equation}
However, we do not expect any difference between the asymptotic scaling of $\lambda_V$ and $\lambda_W$ with respect to $N$.

Because these quantities directly scale the complexity of our approach, techniques for reducing the effective value of $\lambda$ are potentially important. Perhaps the simplest idea to reduce $\lambda$ might be to optimize it under rotations of the single-particle basis (and to accordingly rotate the initial state using the Givens rotation technique in \cite{Kivlichan2017}). Another example of such a technique is to modify the Hamiltonian by adding to it a linear combination of $n$-representability \cite{Mazziotti2012} equality constraints that have provably zero expectation value. This strategy was introduced and shown to be effective in Section V of \cite{Rubin2018}. Other techniques that might be useful for this include mean-field background subtraction \cite{Al-Saidi2006} and the use of soft pseudopotentials \cite{Vanderbilt1990}. However, one should make sure that these methods are applied in a way that does not increase the rank of the Coulomb operator, which would be counterproductive for the overall complexity. Note finally that interaction picture techniques \cite{Low2018} do not appear to be helpful here because while $\lambda_V \gg \lambda_T$, the $V$ operator cannot be fast-forwarded.

In order to perform phase estimation via the qubitization/LCU approach, we need to be able to perform the state preparation \textsc{prepare} and controlled unitaries \textsc{select}.
The techniques to achieve these operations are described in the following subsections.

\subsection{State preparation}
The state we would need to prepare is
\begin{align}
\ket{\psi}&=\ket{0}\ket{+}\ket{0}\sum_{p,q,\sigma} \sqrt{\frac{|T_{pq}|}{\lambda}} \ket{\theta_{pq}^{(0)}}\ket{0}\ket{p,q,\sigma}\ket{0}\nn
    &\quad +\sum_{\ell} \sqrt{\frac{\omega_\ell}{\lambda}} \ket{\ell}\ket{+}\ket{+} \!\!\!\! \sum_{p,q,r,s,\alpha,\beta} \!\!\!\! \sqrt{|g_{pq}^{(\ell)}g_{rs}^{(\ell)}|} \ket{\theta_{pq}^{(\ell)}}\ket{\theta_{rs}^{(\ell)}}\ket{p,q,\alpha}\ket{r,s,\beta},
\end{align}
where $\lambda$ is as defined in \eq{lambdas}, $\ket{+}=(\ket{0}+\ket{1})/\sqrt{2}$, $\alpha$ and $\beta$ are spin labels, and $\theta_{pq}^{(\ell)}$ are used to obtain the correct signs on the terms.
They are given by
\begin{equation}
\theta_{pq}^{(0)} = \begin{cases}
    0, & T_{pq}>0, \\
    1, & T_{pq}<0,
\end{cases} \qquad 
\theta_{pq}^{(\ell)} = \begin{cases}
    0, & g_{pq}^{(\ell)}>0, \\
    1, & g_{pq}^{(\ell)}<0.
\end{cases}
\end{equation}
The first register gives $\ell$, with $\ket{0}$ reserved for the first term.
The first term gives the first two terms in \eq{firstham}, with the first term in \eq{firstham} obtained for $p\ne q$ and the second term obtained for $p=q$. The $\ket{+}$ state on the second qubit selects between $\openone$ and $Z_{p,\sigma}$ for $p=q$.
We use $p<q$ and $p>q$ to select between $X_{p,\sigma} \vec Z X_{q,\sigma}$ and $Y_{p,\sigma} \vec Z Y_{q,\sigma}$ for $p\ne q$, in a similar way as in \cite{BabbushSpectra}.
The second term gives the third term in \eq{firstham}, with the sums over $p,q$ and $r,s$ yielding the square.
The two $\ket{+}$ states in registers 2 and 3 select between $\openone$ and $Z_{p,\alpha}$ for $p=q$ and between $\openone$ and $Z_{r,\beta}$ for $r=s$.

There are $(L+1)(N^2/8+N/4)={\mathcal O}(N^3)$ unique coefficients, which indicates the state preparation can be performed with similar complexity.
Using the QROM and subsampling techniques from \cite{BabbushSpectra}, the T complexity can be expected to be ${\cal O}(N^3+\log(1/\epsilon))$, where $\epsilon$ is the required precision.
By using a more sophisticated preparation scheme it will be possible to reduce the number of T gates, as will be described below.

To perform the state preparation, we start with all registers in the $\ket{0}$ state, then perform the following steps.
\begin{enumerate}
    \item Prepare a superposition over the first register, to produce the state
\begin{equation}
\left(\ket{0}\sqrt{\sum_{p,q} \frac{2|T_{pq}|}{\lambda}}
    +2\sum_{\ell} \sqrt{\frac{\omega_\ell}{\lambda}} \ket{\ell} \sum_{p,q} |g_{pq}^{(\ell)}| \right)\ket{0}\ket{0}\ket{0}\ket{0}\ket{0}\ket{0} .
\end{equation}
\item Perform a Hadamard on the second register, and a Hadamard on the third register controlled on the state of the first register being $\ket{\ell}$ for $\ell>0$, giving
\begin{equation}
\left(\ket{0}\ket{+}\ket{0}\sqrt{\sum_{p,q} \frac{2|T_{pq}|}{\lambda}}
    +2 \sum_{\ell} \sqrt{\frac{\omega_\ell}{\lambda}} \ket{\ell}\ket{+}\ket{+} \sum_{p,q} |g_{pq}^{(\ell)}| \right)\ket{0}\ket{0}\ket{0}\ket{0}.
\end{equation}
\item Prepare a superposition over register six controlled on the first register.
For the first register in the state $\ket{0}$, we prepare weights $\sqrt{|T_{pq}|}$, and for $\ket{\ell}$ with $\ell>0$ we prepare weights proportional to $\sqrt{|g_{pq}^{(\ell)}|}$.
The state is then
\begin{align}
&\ket{0}\ket{+}\ket{0}\sum_{p,q,\sigma} \sqrt{\frac{|T_{pq}|}{\lambda}} \ket{0}\ket{0}\ket{p,q,\sigma}\ket{0}\nn
    &\quad +\sqrt{2} \sum_{\ell} \sqrt{\frac{\omega_\ell}{\lambda}} \ket{\ell}\ket{+}\ket{+} \sum_{p,q,\alpha} \sqrt{|g_{pq}^{(\ell)}|}\sqrt{\sum_{r,s} |g_{rs}^{(\ell)}|} \ket{0}\ket{0}\ket{p,q,\alpha}\ket{0}.
\end{align}
\item Perform another state preparation on register seven, controlled on register one.
For register one in the state $\ket{\ell}$ with $\ell>0$ we prepare weights proportional to $\sqrt{|g_{rs}^{(\ell)}|}$, giving
\begin{align}
&\ket{0}\ket{+}\ket{0}\sum_{p,q,\sigma} \sqrt{\frac{|T_{pq}|}{\lambda}} \ket{0}\ket{0}\ket{p,q,\sigma}\ket{0}\nn
    &\quad +\sum_{\ell} \sqrt{\frac{\omega_\ell}{\lambda}} \ket{\ell}\ket{+}\ket{+} \sum_{p,q,r,s,\alpha,\beta} \sqrt{|g_{pq}^{(\ell)}g_{rs}^{(\ell)}|} \ket{0}\ket{0}\ket{p,q,\alpha}\ket{r,s,\beta}.
\end{align}
\item
Use QROM to output $\ket{\theta_{pq}^{(\ell)}}$ in register four and
$\ket{\theta_{rs}^{(\ell)}}$ in register five.
\end{enumerate}

We will allow total error $\epsilon$.
Because there are a number of steps, each step will have an allowable error some fraction of $\epsilon$.
Here we aim to estimate the leading-order term in the complexity, and the allowable error will only appear in logarithms, so we will simply give $\log(1/\epsilon)$, rather than subdividing the allowable error between the different steps.
Throughout we will use $\log$ to indicate logarithms to base 2.

For the state preparation in step 1, the approach in \cite{BabbushSpectra} gives complexity in terms of Toffolis $L+{\mathcal O}(\log(1/\epsilon))$.
The complexity in step 1 will be negligible compared to the complexity of later steps.
The second step is just controlled operations on two qubits, and has negligible complexity compared to the other steps.

Steps 3 and 4 have the greatest complexity.
A simple method is to use the unary iteration procedure as described in \cite{BabbushSpectra} (Section IIIA) combined with the state preparation procedure in \cite{BabbushSpectra} (Section IIID).
The unary iteration procedure allows us to progressively perform an operation controlled on a register being $\ket{0}$, then $\ket{1}$, and so forth, with the overall complexity of the control in terms of the number of Toffoli gates being $L$ (since there are here $L+1$ possible values).
That unary iteration procedure is performed on the first register, and for each value of this register, the state preparation is performed on the sixth register (for step 3) or the seventh register (for step 4).

There is a subtlety in that the values of $p$ and $q$ range over $N/2$ values, which need not be a power of $2$.
The result from \cite{BabbushSpectra} is for contiguous binary values.
In the case where there are two subregisters, then we can iterate through the register for $p$, and use its output qubit as the control for the unary iteration over the register for $q$.
The complexity for iterating over $p$ is $N/2-1$, and for each of the $N/2$ values of $p$ the complexity of iterating over $q$ is $N/2-1$.
That gives a total complexity for iterating over $p$ and $q$ that is $N^2/4-1$.

As a result the complexity of the state preparation is $N^2/4+{\cal O}(\log(1/\epsilon))$ for each $\ell$ in both steps 3 and 4.
In each case the complexity is $N/2+{\cal O}(\log(1/\epsilon))$ for zero on the first register for step 3 only.
The total complexity is then $N/2+LN^2/2 +{\cal O}( L\log(1/\epsilon))$.
We can significantly reduce the complexity using three techniques.
\begin{enumerate}[A)]
    \item Take advantage of the symmetry of $g_{pq}^{(\ell)}$ and $T_{pq}$.
    \item Combine the preparation for all values of $\ell$ together.
    \item Use the QROM of \cite{Lowpreparation} which allows one to further reduce Toffoli complexity at the cost of extra ancilla.
\end{enumerate}
To take advantage of the symmetry, we can initially prepare a state proportional to
\begin{equation}
\sqrt{2}\sum_{p>q} \sqrt{|g_{pq}^{(\ell)}|}\ket{p,q,\alpha}+\sum_{p} \sqrt{|g_{pp}^{(\ell)}|}\ket{p,p,\alpha}.
\end{equation}
Then we have this state tensored with a register in a $\ket{+}$ state.
For preparation on register six, the $\ket{+}$ state is on register two (step 3), or for preparation on register seven the $\ket{+}$ state is on register three (step 4).

We can then perform a swap between the registers storing $p$ and $q$ controlled by this register, giving state
\begin{equation}
\sum_{p>q} \sqrt{|g_{pq}^{(\ell)}|}\ket{0}\ket{p,q,\alpha}
+\sum_{p<q} \sqrt{|g_{pq}^{(\ell)}|}\ket{1}\ket{p,q,\alpha}
+\sum_{p} \sqrt{|g_{pp}^{(\ell)}|}\ket{+}\ket{p,p,\alpha}.
\end{equation}
This gives the correct weighting for each of the terms in the superposition.
As always with these state preparations for LCU, the prepared state is permitted to be entangled with junk registers.
For $p\ne q$ the additional ancilla may be regarded as a junk register, whereas for $p=q$ this register will be used to distinguish between $\openone$ and $Z_{p,\alpha}$ operations.
The controlled swap costs ${\cal O}(\log N)$ Toffolis, which is negligible compared to other steps.
As a result of this simplification, there are $(L+1)(N^2/8+N/4)$ distinct values required in step 3, and $L(N^2/8+N/4)$ distinct values in step 4.

To explain technique B for reducing the complexity, the state preparation is performed in the following way.
\begin{enumerate}[(i)]
\item Create an equal superposition over $j$ for the register where we are performing the state preparation.
\item Output alternate indices ($\ket{{\rm alt}_j}$ in \cite{BabbushSpectra}) and probabilities ($\ket{{\rm keep}_j}$ in \cite{BabbushSpectra}) using a QROM.
\item Perform an inequality test between the probability register and an ancilla in an equal superposition state.
\item Perform a controlled swap between the register where we are performing the state preparation and the alternate index register based on the result of the inequality test.
\end{enumerate}
We also need to create superpositions over the spin registers, but that can be done trivially with Hadamards.
If we were to iterate through $\ell$ and perform state preparation for each value of $\ell$, we would be performing the entire procedure for each value of $\ell$.
The insight here is to note that we can call the QROM for all $\ell$, then perform the inequality test and controlled swap.
That means we only need to perform the inequality test and controlled swap once, instead of $L$ times.

Technique C for reducing the complexity is the most significant.
The dominant cost in the procedure is that of the QROM, which has cost of $(2L+1)(N^2/8+N/4)$ Toffolis if we use the procedure of \cite{BabbushSpectra}.
That is, we need to output $(L+1)(N^2/8+N/4)$ or $L(N^2/8+N/4)$ numbers in QROM, with outputs of size $\log(N^2)+\log(1/\epsilon)+{\mathcal O}(1)$.
Here $\log(N^2)$ is the size of the register for the alternate values,
and the size of the register for the probability is $\log(1/\epsilon)+{\mathcal O}(1)$.

The complexity in terms of Toffolis can be reduced by using a more advanced QROM based on that of \cite{Lowpreparation}.
This QROM uses a combination of the QROM of \cite{BabbushSpectra} and a technique for trading between spatial complexity and gate complexity in a fashion that accomplishes something  reminiscent of what authors aspired to demonstrate with the original concept of ``QRAM'' \cite{Giovannetti2008}. Thus, here we will refer to the more advanced QROM of \cite{Lowpreparation} as ``QROAM''.
In the following we will use $d$ for the number of entries that we must look up using the QROM (here we have $\dm\approx L(N^2/8+N/4)$), $\chunk$ for an arbitrary power of two, and $\wid$ for the size of the output in qubits (here we have $\wid=\log(N^2/\epsilon)+{\mathcal O}(1)$).
Then the complexity for computing the QROM is $\lceil\dm/\chunk\rceil+\wid(\chunk-1)$, and for uncomputing the QROM is $\lceil\dm/\chunk\rceil+\chunk$ (see \app{unlookup}).
Moreover, it is possible to choose the $k$ for the uncompute to be different from that for the compute step.
The number of additional ancillae needed is $(\chunk-1)\wid$.
It is also possible to use ancillae that are already being used for some other purpose, called ``dirty'' ancillae.
Using these dirty ancillae, the compute cost is over twice as much, $2\lceil\dm/\chunk\rceil + 4\wid(\chunk-1)$ (see \app{lookup}), and the uncompute cost is changed to $2\lceil d/\chunk\rceil+4\chunk$.
The compute and uncompute are used for the state preparation and inverse state preparation, so the combined cost is what needs to be considered.

The results we use here are improved slightly over those in \cite{Lowpreparation}.
The Toffoli count achieved in \cite{Lowpreparation} is $2\dm/\chunk + 8 \wid \chunk$ (from the last column and last row of Table II, after dividing by 4 to account for the fact they are counting T gates and also after substituting $\dm$, $\chunk$ and $\wid$ for $N$, $\lambda$ and $b$ respectively).
Our corresponding Toffoli count is $2\dm/\chunk + 4 \wid(\chunk - 1)$.
The factor of two improvement in the $\wid \chunk$ term is because we use a linear depth swapping network instead of a logarithmic depth swapping network.
A logarithmic depth network requires spreading control qubits for parallel CSWAPs over many ancillae, but because the ancillae are dirty each CSWAP must be toggled-controlled which involves repeating the operation twice.
The small improvement from $\wid\chunk$ to $\wid (\chunk - 1)$ in the ancilla count is due to using $\ket{+}$ states instead of a spare register in order to ensure the output is only toggled once.
There is also an improvement from $\wid\chunk$ to $\wid (\chunk - 1)$ in the Toffoli count, but that is due to a more careful accounting of the number of controlled swaps needed.

The most significant improvement we make is the application of measurement based uncomputation, as described in \app{unlookup}, which removes the dependence on $\wid$ in the complexity when uncomputing a lookup.
The principle is similar to that used to reduce the Toffoli complexity of addition in \cite{GidneyAdder}.  Instead of just reversing the circuit for a table lookup, you can perform $X$ measurements on the output qubits.  Based on the measurement result you can perform a classically-conditioned phase fixup.
This procedure also means that the ancillae used by the forward QROAM only need to be used temporarily, and can be erased after the QROAM and reused.

There is a subtlety in using these results in that the QROAM is for a \emph{single} control register which can take a contiguous set of values.
In contrast, here we have three registers with $\ell$, $p$, and $q$.
In this case we can simply convert to a single contiguous register for the iteration.
We can compute a value for a single contiguous register $s$ from $\ell$, $p$, and $q$ as
\begin{equation}\label{eq:qromreg}
s = \ell (N^2/8+N/4) + p(p+1)/2 + q .
\end{equation}
The $p(p+1)/2$ term takes account of the fact that we are preparing $p$ and $q$ only for $p\ge q$.
We can use QROAM directly on $s$ with just an additional logarithmic overhead for the arithmetic.

We will consider two cases.
First is that where we attempt to minimize the cost in terms of Toffolis, but use a large number of ancillae.
In that case, for the compute we can take $\chunk\approx\sqrt{d/\wid}$, in which case the cost of the compute step is approximately $2\sqrt{\dm\wid}$.
For the uncompute, we can take $\chunk\approx \sqrt{d}$, which gives an uncompute cost of approximately $2\sqrt{d}$, for a total cost of the compute and uncompute of $2\sqrt{d}(\sqrt{M}+1)$.
For our $\dm\approx LN^2/8$ and $\wid\approx \log(N^2/\epsilon)$, we get a combined cost of approximately
\begin{equation}
\sqrt{LN^2\log(N^2/\epsilon)/2}\, .
\end{equation}
We find that this approach needs a number of extra ancillae
\begin{equation}
\sqrt{LN^2\log(N^2/\epsilon)/8}\, .
\end{equation}
As $L={\mathcal O}(N)$ the complexity in terms of Toffolis is ${\mathcal O}(N^{3/2}\sqrt{\log(N/\epsilon)})$, and a similar number of ancillae are needed.

Alternatively, if we are attempting to minimize the number of additional ancillae needed, we can use ``dirty'' ancillae instead (ones that are not initialized to zero).
Fortunately, we happen to have $N$ dirty ancilla lying around because the system register is not acted upon while implementing the state preparation operation.
Moreover, there are multiple steps of state preparation that are performed, and there are qubits that will be used in some steps of state preparation that can be used as dirty qubits for the other steps of state preparation.
We will find that we can take the number of dirty qubits to be somewhat larger than $N$, but not a lot larger.
Assuming the number of dirty qubits is about $N$, we can take $\chunk\approx N/\wid$.
Then we would have compute cost $2\dm\wid /N - 4\wid + 4N\approx \frac 14 LN\log(N^2/\epsilon)$.
For the uncompute step we can take $\chunk\approx N$, giving cost approximately $LN/4$.
In both cases the costs need to be multiplied by 2 to account for steps 3 and 4.

Finally we consider the cost of outputting $\ket{\theta_{pq}^{(\ell)}}$ in register four and
$\ket{\theta_{rs}^{(\ell)}}$ in register five.
This use of QROM can simply be combined with that in steps 3 and 4.
For example, for step 3, when calling the QROM for the state preparation, output the value of $\theta_{pq}^{(\ell)}$, as well as that for the alternate values of $p$ and $q$.
Then, when doing the controlled swap, also swap these registers.
There is a net increase in the size of the output of 2 qubits, and one extra Toffoli for the controlled swaps.
This cost is negligible compared to the overall cost in steps 3 and 4.

\subsection{Controlled unitaries}
For the controlled unitaries (the \textsc{select} circuit) in the case of using only the first diagonalization, we need to implement the terms in the Hamiltonian as in \eq{firstham}.
The general principle is that we do a pair of operations, each of which has $X_p \vec Z X_q$ and $Y_p \vec Z Y_q$ for $p\ne q$, with the term selected by whether $p$ or $q$ is larger.
For $p=q$ we use an ancilla qubit to select between $\openone$ and $Z_p$.
The way the state preparation is chosen, this can be performed in the same way for $\ell=0$ and $\ell>0$, because the ancillae will only select the identity operation.
The operations we need are
\begin{align}
\textsc{select}_1&\ket{q_1,q_2,\theta_1,\theta_2,\{p,q,\alpha\},\{r,s,\beta\}}\ket{\psi}\nonumber \\
&=(-1)^{\theta_1} \ket{q_1,q_2,\theta_1,\theta_2,\{p,q,\alpha\},\{r,s,\beta\}}\otimes \begin{cases}
X_{p,\alpha}\vec Z X_{q,\alpha}\ket{\psi}, & p<q, \\
Y_{p,\alpha}\vec Z Y_{q,\alpha}\ket{\psi}, & p>q, \\
\ket{\psi}, & p=q \wedge q_1=1, \\
-Z_{p,\alpha}\ket{\psi}, & p=q \wedge q_1=0,
\end{cases}
\end{align}
\begin{align}
\textsc{select}_2&\ket{q_1,q_2,\theta_1,\theta_2,\{p,q,\alpha\},\{r,s,\beta\}}\ket{\psi}\nonumber \\
&=(-1)^{\theta_2}\ket{q_1,q_2,\theta_1,\theta_2,\{p,q,\alpha\},\{r,s,\beta\}}\otimes \begin{cases}
X_{r,\beta}\vec Z X_{s,\beta}\ket{\psi}, & r<s, \\
Y_{r,\beta}\vec Z Y_{s,\beta}\ket{\psi}, & r>s, \\
\ket{\psi}, & r=s \wedge q_2=1, \\
-Z_{r,\beta}\ket{\psi}, & r=s \wedge q_2=0.
\end{cases}
\end{align}

Note that the selected operations we need are similar to those in \cite{BabbushSpectra,BabbushSYK}, and we can use a similar approach.
The complexity is linear in $N$, and will therefore be smaller than the complexity of the state preparation.
The technique is shown in \fig{selecth} for $\textsc{select}_1$, and $\textsc{select}_2$ is equivalent. 
If $p < q$ then the $Y$ operation in $Z\ldots ZY$ acts on the same qubit as one of the $Z$s in the $Z\ldots ZX$ operation.
As a result, the $Y$ gets multiplied by $Z$ and becomes $ZY=-iX$.
Therefore the operation implemented is of the form $-iX_{p,\alpha}Z\ldots ZX_{q,\alpha}$.
If $p > q$ then the $X$ operation in $Z\ldots ZX$ acts on the same qubit as one of the $Z$s in the $Z\ldots ZY$ operation.
Thus we have $X$ times $Z$ on that qubit giving $XZ=-iY$.
Therefore the operation implemented is of the form $-iY_pZ\ldots ZY_q$.
If $p=q$ then all the $Z$s cancel leaving only $X_{p,\alpha}Y_{p,\alpha} = iZ_{p,\alpha}$.

Now note that the register $q_1$ is $0$ for $p>q$ and $1$ for $p<q$.
Before and after the ranged operations, we perform an inequality test between $p$ and $q$, controlled on $q_1$, with the result that an extra ancilla is in the state $\ket{1}$ unless $p=q$ and $q_1=0$.
That register is used as a control for the ranged operations, so if $p=q$ and $q_1=0$ then the identity is performed.
We then apply an $S$ gate on this ancilla, with the result that the operations performed are $X_{p,\alpha}Z\ldots ZX_{q,\alpha}$ for $p<q$, $Y_{p,\alpha}Z\ldots ZY_{q,\alpha}$ for $p>q$, and $-Z_{p,\alpha}$ for $p=q$ and $q_1=0$.
For $p=q$ and $q_1=0$ this ancilla is zero, so the phase on the identity is unchanged.
This yields the desired operations with the correct phases, and lastly 
the controlled $Z$ on the $\theta$ register gives the $(-1)^{\theta_1}$ phase factor.

\begin{figure}[tbh]
\centering
\hspace{0mm}
\Qcircuit @R=1em @C=0.75em {
&&&&&   & & & & & && &&
\lstick{|1\rangle} & \qw &\qw          & \qw & \gate{\oplus p=q} &\ctrl{1} &\gate{S}    & \ctrl{1} &\gate{\oplus p=q}       & \qw   &\qw &\\
&&&&&\lstick{\text{control}}&\qw    &\qw                    &\qw        &\ctrl{1}  &\qw && &&&
\qw &\qw                    &\qw    &\qw \qwx & \ctrl{3}   &\ctrl{-1}    &\ctrl{4}       & \qw \qwx       &\ctrl{1}      &\qw &\\
&&&&&\lstick{q_1}           &\qw    &\qw                    &\qw        &\gate{\text{In}_{q_1}}                   \qwx &\qw && &&&
\qw &\qw                    &\qw    &\ctrlo{2}\qwx & \qw        &\qw                &\qw           &\ctrlo{2} \qwx                       &\qw &\qw &\\
&&&&&\lstick{\theta_1}      &\qw    &\qw                    &\qw        &\gate{\text{In}_{\theta_1}}                          \qwx &\qw && &&&
\qw &\qw                    &\qw    &\qw & \qw    &\qw           &\qw    &\qw    &\gate{Z}\qwx &\qw              &\\
&&&&&\lstick{p}             &\qw    &\ustick{\log N-1}\qw   &{/} \qw &\gate{\text{In}_p}                          \qwx &\qw &&=&&&
\qw &\ustick{\log N-1}\qw   &{/}\qw &\gate{\text{In}_p}        & \gate{\text{In}_p}        &\qw       & \qw & \gate{\text{In}_p}  &\qw           &\qw &\\
&&&&&\lstick{q}             &\qw    &\ustick{\log N-1} \qw &{/} \qw &\gate{\text{In}_q}                          \qwx &\qw && &&&
\qw &\ustick{\log N-1}\qw   &{/}\qw &\gate{\text{In}_q} \qwx        & \qw \qwx      & \qw  &\gate{\text{In}_q}     & \gate{\text{In}_q} \qwx          &\qw &\qw &\\
&&&&&\lstick{\alpha}        &\qw    &\qw                   &\qw     &\gate{\text{In}_\alpha}                      \qwx &\qw && &&&
\qw &\qw                    &\qw    &\qw & \gate{\text{In}_\alpha}\qwx &\qw   & \gate{\text{In}_\alpha} \qwx                         &\qw           &\qw             &\qw &\\
&&&&&\lstick{|\psi\rangle}  &\qw    &\ustick{N}      \qw   &{/} \qw &\gate{\textsc{select}_1}        \qwx &\qw && &&&
\qw &\ustick{N}\qw          &{/}\qw &\qw & \gate{\overrightarrow{Z} Y_{p,\alpha}} \qwx &\qw   &\gate{\overrightarrow{Z} X_{q,\alpha}} \qwx &\qw           &\qw    &\qw &\\
}
\caption{\label{fig:selecth} The circuit needed to perform a controlled $\textsc{select}_1$ operation. We have omitted the registers this operation does not act upon for simplicity.
The unitaries labeled as $\protect\overrightarrow{Z} A_j$ apply the operation $Z_0 \cdots Z_{j-1} A_j$ to the target register, depending on the value from the input register, using the technique shown in Figure 9 of \cite{BabbushSpectra}.
This operation can be achieved using an inequality test, followed by a ranged operation via the technique shown in Figure 8 of \cite{BabbushSpectra}.
The controlled $\textsc{select}_2$ operation is completely equivalent except with different control registers.
}
\end{figure}
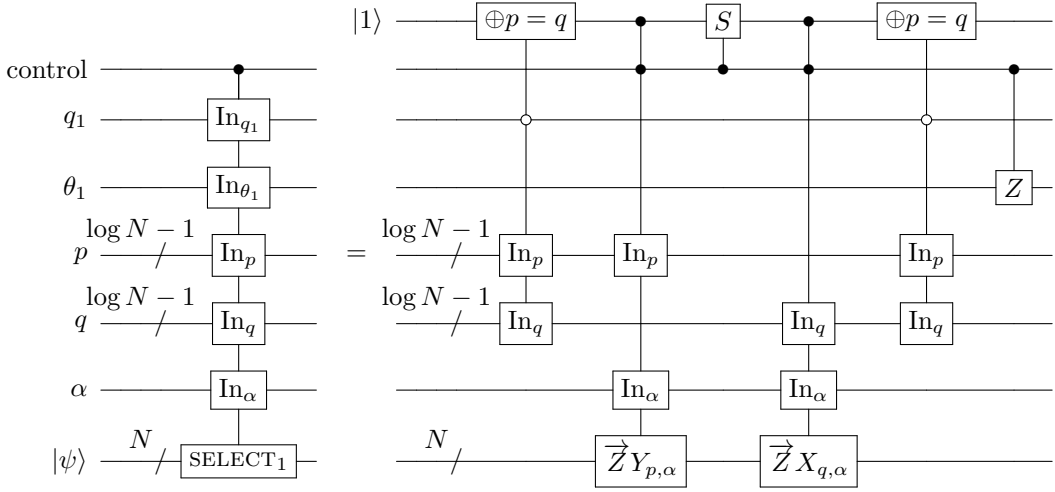

\section{Complexity}
\label{sec:complex}

Let us denote the upper bound on the error required for the eigenvalue estimation by $\Delta E$.  Then following \cite{BabbushSpectra} we find the complexity of the estimation is the cost of each LCU step times $2^m$, where
\begin{equation}
m=\left\lceil\log\left(\frac{\sqrt 2 \pi \lambda}{2\Delta E}\right)\right\rceil .
\end{equation}
Moreover, the error that is allowable for the implementation of each LCU step is
\begin{equation}
    \epsilon = \frac{\sqrt 2 \Delta E}{4\lambda} .
\end{equation}

Some minor costs are as follows.
\begin{enumerate}
\item The cost of the controlled operation in \fig{selecth} is $2N$ Toffoli gates for the two controlled ranged operations, and $2\lceil \log N\rceil$ for the two inequality tests.
These need to be done twice for a total of $4N+4\lceil \log N\rceil$.
\item In the state preparation we initially need to prepare superpositions over $\ell$, $p$, $q$, $r$, $s$, with $\ell\le L$, $N/2>p\ge q$, $N/2>r\ge s$.
This can be achieved by creating an equal superposition over ranges that are powers of two using Hadamards, then flagging success using inequality tests.
The number of Toffolis needed for these inequality tests will correspond to the number of qubits.
Amplitude amplification can give the desired result with amplitude close to 1.
The reflection in the amplitude amplification also needs a number of Toffolis corresponding to the number of qubits, so for $m$ steps of amplitude amplification the number of Toffolis is $3m+1$ times the number of qubits.
This needs to be multiplied by two because there is preparation and inverse preparation at each step.
\item The state preparation needs an inequality test, which has cost $\mu$ for $\mu$ bits of precision in the keep probability, and a cost corresponding to the number of qubits for the controlled swap.
A controlled swap on a pair of qubits can be performed with a single Toffoli and CNOTs.
\item For the state preparation we also do controlled swaps of the $p$ and $q$, as well as $r$ and $s$ registers.
These two controlled swaps cost $2\lceil \log (N/2)\rceil$ Toffolis.
\item Computing the function in \eq{qromreg} requires multiplication by a constant, a regular multiplication, and three additions.
The division by 2 can be achieved by trivially shifting the bits.
The multiplication can be achieved with $2\lceil \log (N/2)\rceil^2$ Toffolis.
\end{enumerate}

In the remainder of this section we quantify the costs of the QROM needed for the state preparation, which is the main contributor to the complexity, then give the total cost.

\subsection{{\micro} orbitals}

The prominence of the {\micro} paper \cite{Reiher2017}, which was the first work to rigorously estimate the T complexity of any quantum algorithm for chemistry makes it an important benchmark. Unfortunately, {\caltech} \cite{Li2019} later argued that there were substantial problems with the orbitals chosen for the {\micro} paper. For the reasons discussed in the paper by {\caltech}, we believe that future papers should compare against this work using only the {\caltech} integrals. But in order to more easily compare with past work, here we analyze the complexity of simulating both {\micro} and {\caltech} FeMoco active spaces. Note that at 152 spin-orbitals the {\caltech} active space is also significantly larger than the 108 spin-orbital {\micro} active space.

Our approach is to choose $L$ by observing the effect of truncation on two different efficient classical correlated approximate methods for molecular electronic structure: Configuration Interaction at the singles and doubles level (CISD) and M\o ller-Plesset perturbation theory to second order (MP2). For a review of both methods, see \cite{Helgaker2002}.

\begin{figure}[b]
\begin{minipage}[t]{.47\textwidth}
\centering
\includegraphics[width=\linewidth]{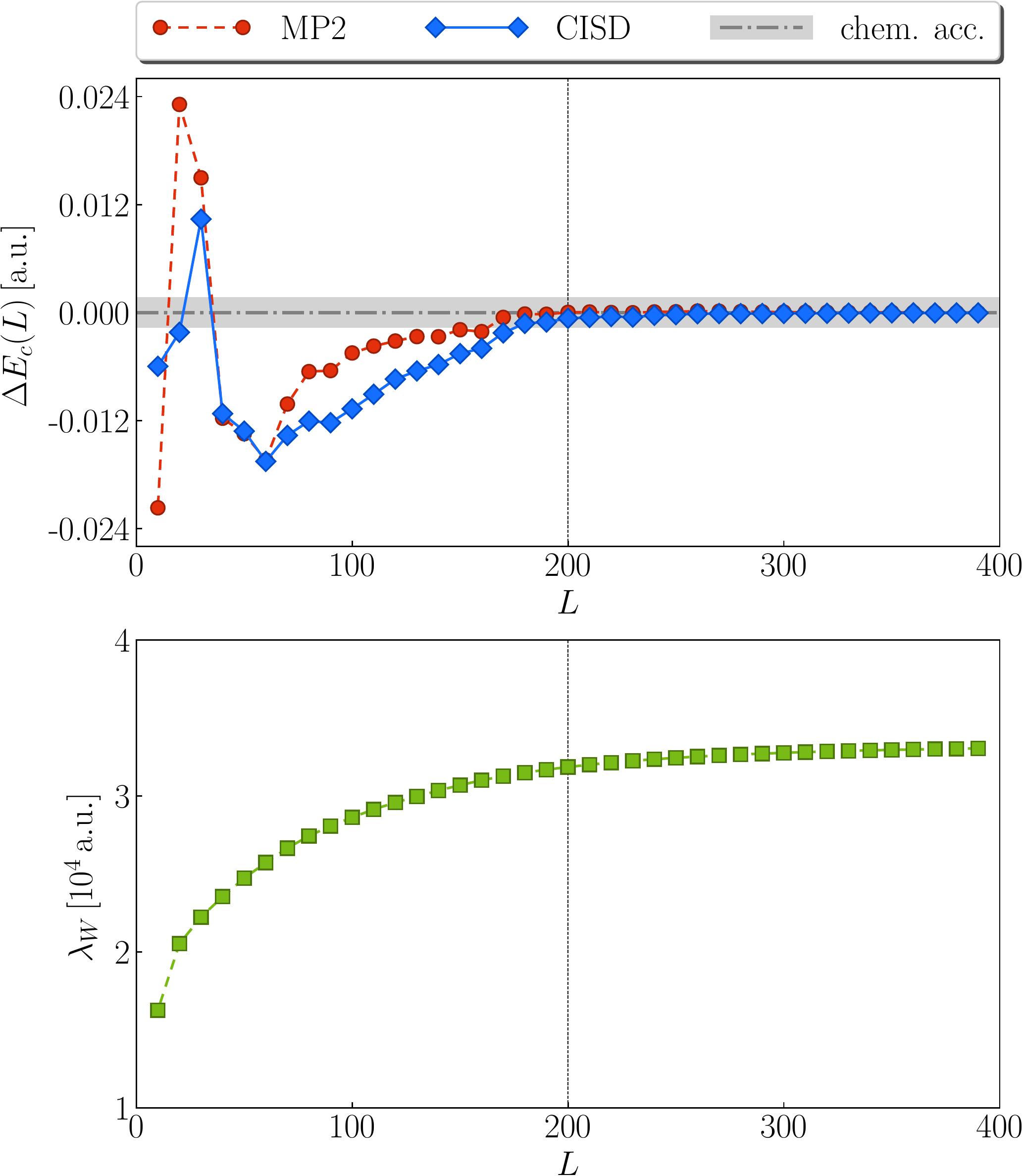}
{\small \phantom{-}~(a)~108 spin-orbital active space from {\micro}.}
\label{fig:microsoft}
\end{minipage}\hspace{8mm}
\begin{minipage}[t]{.47\textwidth}
\centering
\includegraphics[width=\linewidth]{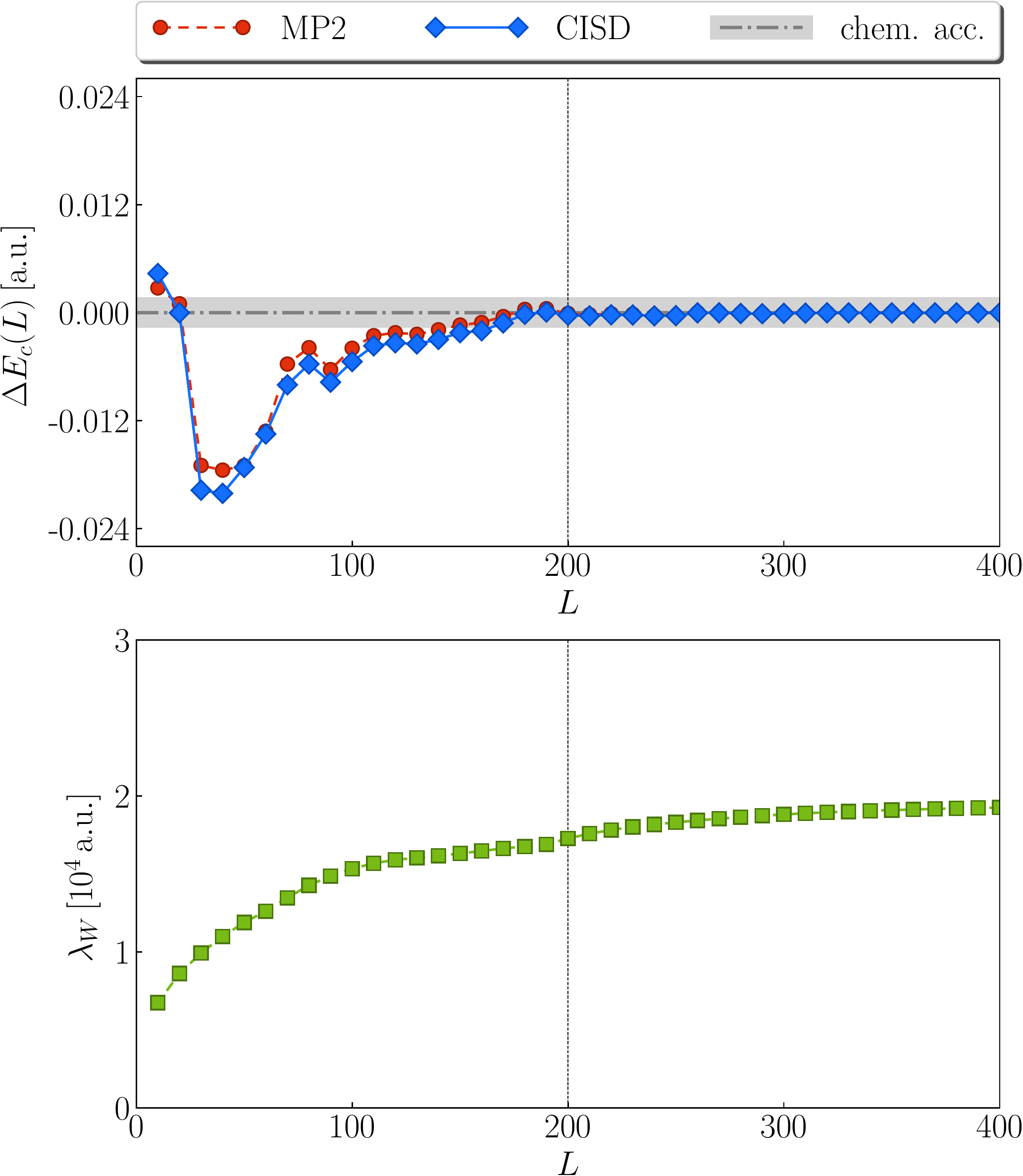}
{\small \phantom{-}~(b)~152 spin-orbital active space from {\caltech}.}
\label{fig:caltech}
\end{minipage}
\caption{\label{fig:four} Top: difference $\Delta E_c$ between truncated and untruncated correlation energy, for the FeMoco cluster with MP2 and CISD methods (red, blue). The grey shaded region represents chemical accuracy; thus, for both active spaces we expect  $L = 200$ is sufficient for our purposes. Bottom: $\lambda_W$ as a function of $L$. For {\micro} \cite{Reiher2017}, $\lambda_T = 1{,}490 \, {\rm  a.u.}$, $\lambda_V = 8{,}373 \, {\rm a.u.}$ and the maximum value of $\lambda_W$ is $34{,}552\,{\rm a.u.}$; thus, we take $\lambda = 36{,}042\,{\rm a.u}$. For {\caltech} \cite{Li2019}, $\lambda_T = 3{,}446 \, {\rm  a.u.}$, $\lambda_V = 4{,}168 \, {\rm a.u.}$ and the maximum value of $\lambda_W$ is $20{,}746\,{\rm a.u.}$; thus, we take $\lambda = 24{,}192\,{\rm a.u.}$}
\end{figure}

We perform CISD/MP2 on the exact Hamiltonian and then perform CISD/MP2 on the truncated Hamiltonian for various truncations and track the discrepancy. In \fig{four} we plot how the energies converge for both the {\micro} \cite{Reiher2017} and {\caltech} \cite{Li2019} integrals. Specifically, we plot the correlation energy (the difference between the mean-field energy and the exact energy) for the reason that the mean-field energy converges much faster than the correlation energy and so it is easier to see the trend this way. In general, CISD and MP2 are very different methods and we would expect truncation to affect them differently. However, both methods appear well converged, for both {\micro} and {\caltech} integrals, by $L=200$. This gives us confidence that the exact ground state energy of the truncated Hamiltonian would also be consistent with the exact ground state energy of the untruncated Hamiltonian by this point. Accordingly, we choose $200$ as the rank of our Coulomb operator. Note that the {\micro} and {\caltech} integrals have very different properties, and so we assume it is a coincidence that both converge around the same value of $L$.

For the integrals of {\micro} with this truncation we obtain $\lambda=36{,}042\,{\rm a.u.}$\ (see \fig{four}).
For both active spaces, we will focus on obtaining the standard ``chemical accuracy'', corresponding to $\Delta E = 0.0016$ a.u.\ \cite{Helgaker2002}.
For the integrals of {\micro}, that gives $\epsilon\approx 1.7\times 10^{-8}$.
The output size for the QROM is approximately $\log(N^2/\epsilon)$ for $N=108$, giving about $40$.
More specifically, the output size for the probabilities in the QROM is given by Eq.\ (36) in \cite{BabbushSpectra}.
In that equation, only the first term is significant, giving
\begin{equation}
\mu = \left\lceil \log\left(\frac{2\sqrt{2}\lambda}{\Delta E}\right)\right\rceil.
\end{equation}
That would give $\mu=26$ bits, except we have three steps of state preparation, which means the number of qubits for the probabilities needs to be increased by 2 to $\mu=28$.
With $N=108$ we need two registers of size $\lceil\log(N/2)\rceil = 6$, as well as two single qubit registers for the $\theta$ values, for a total of $\wid=42$ qubits for steps 3 and 4.
As we take $L=200$,
for step 1 (preparing the superposition over the $\ell$ register), only 8 qubits are needed for $\ell$, for a total of $36$ qubits output.
We will use the QROAM for steps 3 and 4, as these steps have the dominant complexity.

\subsubsection{Dirty ancillae}
If we are attempting to minimize the number of qubits used, it is convenient to combine steps 1 and 3.
That is, we use a state preparation over $\ell$, $p$, and $q$ simultaneously, and output alt values for these three indices.
Then there are only two steps of state preparation, and we can reduce the number of qubits for the keep probabilities to $\mu=27$.
That means the state preparation has an output size of $\wid_1=8+12+2+27=49$.
There will be $\wid_2=41$ qubits used for the output in step 4, because there is one fewer qubit for the keep probability.

If we use dirty qubits, then in the first state preparation we can use the $N=108$ system registers as well as the $\wid_2=41$ ancilla registers that will be used as output in the next step.
We can therefore take $\chunk=4$, which uses $(\chunk-1)\wid_1=147$ qubits, and fits within that size.
Similarly, for the second state preparation, we are able to use the output registers from the first state preparation as dirty qubits.
The cost of the QROAM compute would be $2\lceil \dm/4\rceil + 12\wid$.
For the uncompute, we can take $\chunk=128$, giving cost $2\lceil\dm/\chunk\rceil+4\chunk=2\lceil\dm/128\rceil+512$.

Taking $L=200$ gives $\dm_1=(L+1)(N^2/8+N/4)=298{,}485$ for the first preparation, and Toffoli cost
\begin{equation}
2\lceil\dm_1/4\rceil + 12\wid_1+2\lceil\dm_1/128\rceil+512= 155{,}008.
\end{equation}
For the second state preparation $\dm_2=L(N^2/8+N/4)=297{,}000$, giving Toffoli cost
\begin{equation}
2\lceil\dm_2/4\rceil + 12\wid_2+2\lceil\dm_2/128\rceil+512= 154{,}146.
\end{equation}
The total is 309,154.
The minor costs result in another 1,534 Toffolis, for a total of 310,688 (see \app{mincost}).
We find that the number of qubits for the phase estimation is
\begin{equation}
m = \left\lceil\log\left(\frac{\sqrt{2} \pi \lambda}{2\Delta E} \right)\right\rceil = 26,
\end{equation}
so we obtain an overall complexity (in terms of Toffolis)
\begin{equation}
2^m \times 310688 \approx 2.1\times 10^{13}.
\end{equation}
There is a total of 378
logical qubits needed (see \app{mincost}).

\subsubsection{Large ancilla count}
Alternatively, we can use a large number of ancilla qubits in an attempt to minimize the Toffoli count.
In the compute step it is optimal to take $\chunk=64$, and in the uncompute step it is optimal to take $\chunk=512$.
The combined complexity of compute and uncompute for each step is then
\begin{equation}
\lceil\dm/64\rceil+63\wid+\lceil\dm/512\rceil+512.
\end{equation}
In this case it is better to use steps 3 and 4 as described above, with a separate state preparation for $\ell$ in step 1.
Since there are three steps of state preparation, we should take $\wid=42$.
The Toffoli complexity for step 1 is only 200 using normal QROM.
With the $\dm$ values given above, we obtain a complexity 8,405 for step 3 and 8,379 for step 4.
The minor costs are increased to 1,594.
That gives an overall Toffoli complexity of
\begin{equation}
2^m \times 18578 \approx 1.2\times 10^{12}.
\end{equation}
Altogether there are 3,024 qubits used (see \app{mincost}).

\subsection{{\caltech} orbitals}
An alternative active space for FeMoco was advocated for in \cite{Li2019}. This work argued that the active space Hamiltonian from {\micro} did not properly capture the electronic structure of FeMoco and was classically solvable. {\caltech} introduced an alternative Hamiltonian for the FeMoco active space with $N=152$ spin-orbitals.
There it is found that $\lambda_T = 3{,}446\,{\rm  a.u.}$ and $\lambda_W=20{,}746\,{\rm a.u.}$, for a total of $\lambda = 24{,}192\,{\rm a.u}$.
The smaller value of $\lambda$ means that the number of qubits for the probabilities should be $\mu=27$ regardless of whether we merge steps 1 and 3 or not.
Since $N$ is larger than before, we now need one additional qubit for each of the orbital numbers, for a total of $\wid=43$ qubits.

\subsubsection{Dirty ancillae}
Merging steps 1 and 3, the output size is $\wid_1=51$ for the first state preparation, then $\wid_2=43$ for the second state preparation.
This time taking $\chunk=4$ would use $(\chunk-1)\wid_1=153$ qubits for the first state preparation and
$(\chunk-1)\wid_2=129$ for the second, both of which are small enough to use other qubits as dirty qubits.
We again can take $L=200$ which results in $d_1=(L+1)(N^2/8+N/4)=588{,}126$ for the first preparation and $d_2=L(N^2/8+N/4)=585{,}200$ for the second preparation (step 4).
Using $\chunk=128$ for the uncompute again yields a cost for the first preparation of
\begin{equation}
2\lceil\dm_1/4\rceil + 12\wid_1+2\lceil\dm_1/128\rceil+512= 304{,}378.
\end{equation}
The cost of the second preparation is approximately
\begin{equation}
2\lceil\dm_2/4\rceil + 12\wid_2+2\lceil\dm_2/128\rceil+512= 302{,}772.
\end{equation}
The total of the minor costs is 1,818 for a total of 608,968.
This time we find that $\log(\sqrt 2 \pi \lambda)/(2\Delta E)$ is very slightly larger than 25
\begin{equation}
\log\left(\frac{\sqrt{2} \pi \lambda}{2\Delta E} \right)\approx 25.0015.
\end{equation}
It would be unreasonably inefficient to round up to $m=26$.
Instead we can allow very slightly less error in other parts of the algorithm (which does not affect the complexity significantly because the algorithm depends on that error logarithmically) and take $m=25$.
Then we get a total cost
\begin{equation}
2^m\times 608968\approx 2.0\times 10^{13}.
\end{equation}
The total number of logical qubits is 437.

\subsubsection{Large ancilla count}
If we use a large number of ancilla qubits, we should again take the $\chunk$ sizes as $64$ and $512$ in the compute and uncompute steps, respectively.
We again perform steps 1 and 3 separately for this approach.
The $\wid$ will be $43$ for both steps 3 and 4.
Then using $\lceil\dm/64\rceil+63\wid+\lceil\dm/512\rceil+512$ gives Toffoli costs $13{,}560$ and $13{,}508$ for steps 3 and 4, respectively.
This time the additional step of preparation needs 200 Toffolis for QROM.
The minor costs are increased to 1,872, for
a total of 29,140.
That gives an overall complexity in terms of Toffolis
\begin{equation}
2^m\times 29140 \approx 9.8\times 10^{11}.
\end{equation}
The total number of qubits is 3,143.

In summary, the Toffoli costs are as given in \tab{costs}.
In this table we have given approximate formulae including only the leading terms, and taken $\chunk=4$ for the QROM compute circuits.

\section{Exploiting sparsity in the Coulomb operator}
\label{sec:sparsity}

Next we provide a strategy for further reducing constant factors when qubitizing the non-factorized quantum chemistry Hamiltonians.
It is also possible to apply this strategy to the factorized form, but the result is worse for the case of FeMoco, so we will not address it here.
This strategy will likely reduce the T complexity in practical situations, but only by constant factors. We focus on the form of the Coulomb operator
\begin{equation}
    V = \sum_{\alpha,\beta\in \{\uparrow, \downarrow\}} \sum_{p,q,r,s=1}^{N/2} V_{pqrs}a^\dagger_{p,\alpha} a_{q,\alpha} a^\dagger_{r,\beta} a_{s,\beta},
\end{equation}
which has the truncated form
\begin{equation}
\label{eq:V_truncation}
    \widetilde{V}^{(c)}  = \sum_{\alpha,\beta\in \{\uparrow, \downarrow\}} \sum_{p,q,r,s=1}^{N/2} \widetilde{V}_{pqrs}^{(c)} a^\dagger_{p,\alpha} a_{q,\alpha} a^\dagger_{r,\beta} a_{s,\beta},
    \qquad \qquad
    \widetilde{V}_{pqrs}^{(c)} =
    \begin{cases}
        V_{pqrs}, & \left |V_{pqrs} \right | \geq c,\\
               0, & \left |V_{pqrs} \right | < c .
    \end{cases}
\end{equation}
The purpose of this truncation is to induce sparsity in the operators by removing near-zero elements. The idea of reducing quantum simulation costs by exploiting sparsity in the Coulomb operator was first explored in \cite{McClean2014}, but in the context of Trotter based methods. The idea of the approach here is to choose the value of $c$ to be as large as possible while still leaving classical correlated approximations such as CISD within chemical accuracy (essentially the same procedure we used for choosing $L$ shown in \fig{four}). This is shown in \fig{e2}.

\begin{figure}
\begin{minipage}[t]{.47\textwidth}
\centering
\includegraphics[width=\linewidth]{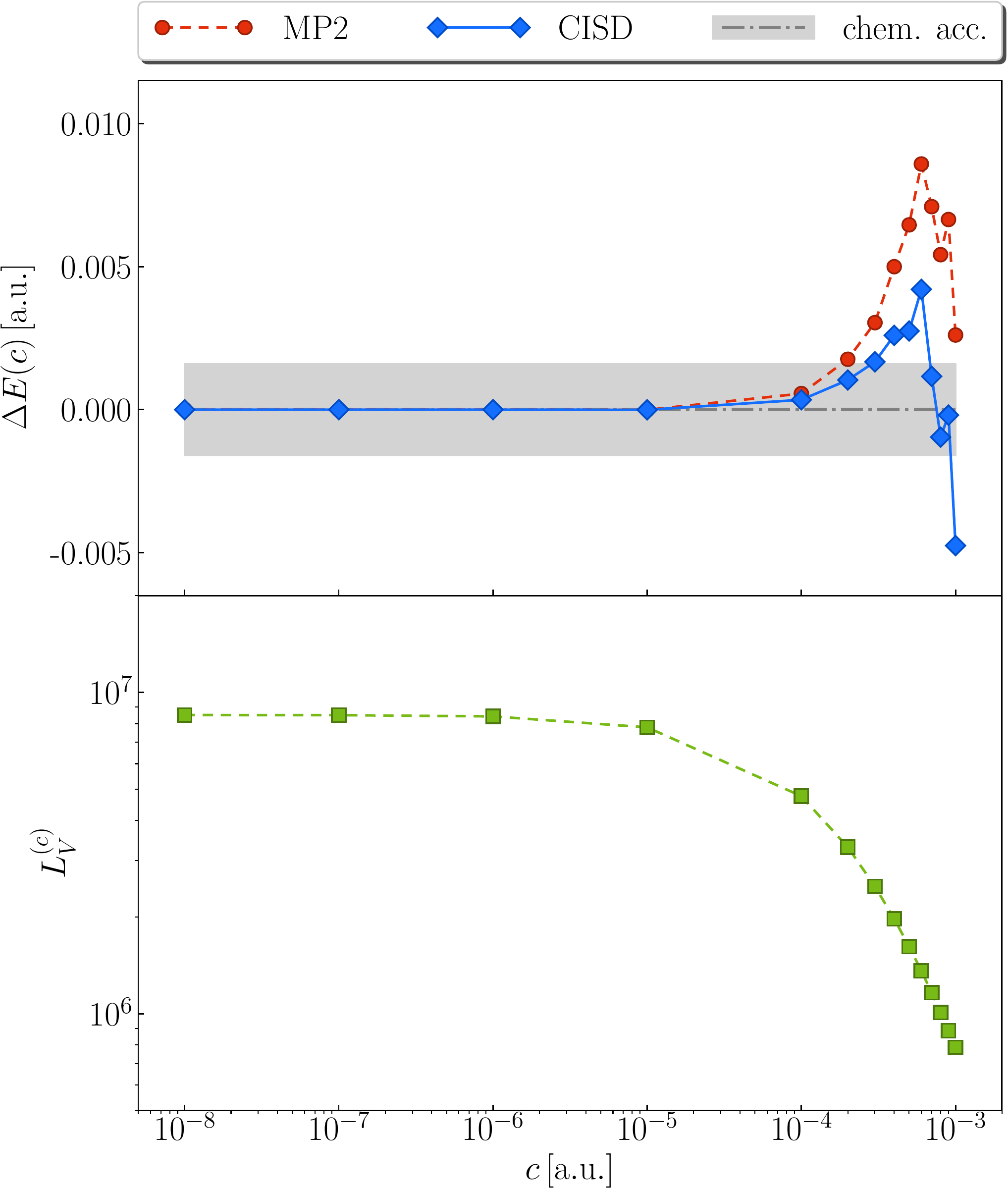}
{{\small \phantom{-}~(a)~108 spin-orbital active space from {\micro}.}}
\label{fig:microsoft_e2}
\end{minipage}\hspace{8mm}
\begin{minipage}[t]{.47\textwidth}
\centering
\includegraphics[width=\linewidth]{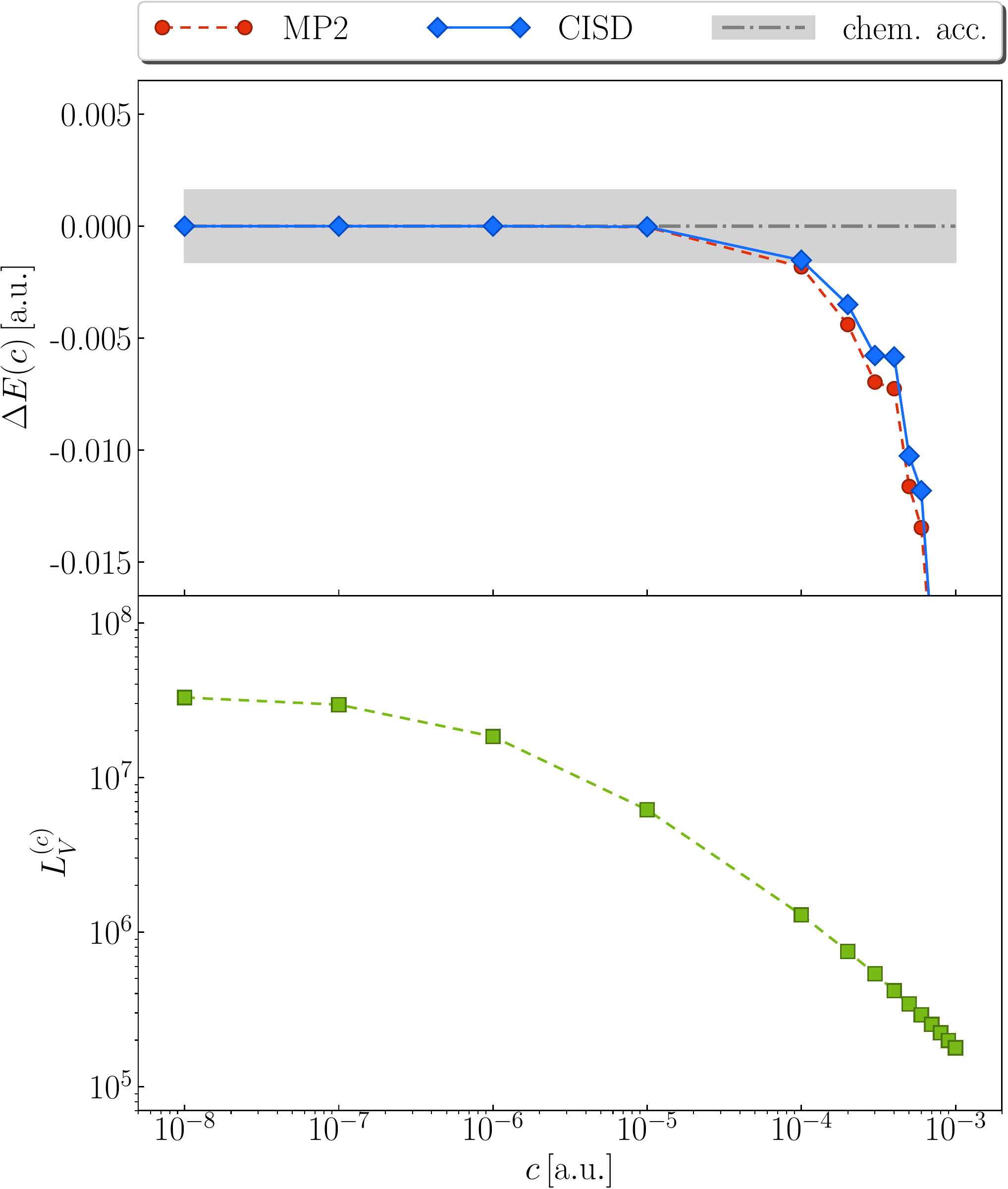}
{{\small \phantom{-}~(b)~152 spin-orbital active space from {\caltech}.}}
\label{fig:caltech_e2}
\end{minipage}
\caption{\label{fig:e2} We show the result of performing the truncation of \eq{V_truncation}. Top: we show difference $\Delta E(c)$ between truncated and untruncated correlation energy, for the FeMoco cluster with MP2 and CISD methods (red, blue) as a function of the $c$ parameter in \eq{V_truncation}. The grey shaded region represents chemical accuracy. Bottom: $L_V^{(c)}$ as a function of $c$. For {\micro} \cite{Reiher2017}, we can safely truncate at $c = 0.0002\, {\rm a.u.}$, which corresponds to $L_V^{(c)} = 3{,}300{,}568$. For {\caltech} \cite{Li2019}, we can safely truncate at $c = 0.0001 \, {\rm a.u.}$, which corresponds to $L_V^{(c)} = 1{,}291{,}648$.}
\end{figure}

We will define $L_V^{(c)}$ as the number of nonzero values of $\widetilde{V}_{pqrs}^{(c)}$. In general, we expect that $L_V^{(c)} = {\cal O}(N^4)$; which is to say that this truncation should not asymptotically change the sparsity of the operators. However, in practice we do expect to find that $L_V^{(c)} < N^4 / 16$; which is to say that we do expect there to be additional sparsity in these operators. While the matter of exactly how much sparsity exists is highly system dependent, we might sometimes desire an algorithm that exploits this sparsity, even if it is highly unstructured.

It is possible to perform a state preparation that has cost dependent on the number of nonzero elements, at the cost of a slightly larger number of ancillae.
Consider the state preparation of \cite{BabbushSpectra}, which has the following steps.
\begin{enumerate}
\item Create an equal superposition over the system registers
\begin{equation}
\frac 1{\sqrt{d}} \sum_{j=1}^d \ket{j}.
\end{equation}
\item Use QROM indexed on the system registers to output alt values and keep values
\begin{equation}
\frac 1{\sqrt{d}} \sum_{j=1}^d \ket{j}\ket{{\rm alt}_j}\ket{{\rm keep}_j}.
\end{equation}
\item Use another ancilla in an equal superposition over $2^\mu$ values where $\mu$ is the number of digits for keep.
Perform an inequality test between this register and the keep register, and swap the contents of the first two registers (the index and alternative index registers) based on the result of this inequality test.
\end{enumerate}
The total number of ancillae used by this state preparation is $2\lceil\log d\rceil+ 2\mu+1$.

In order to perform the preparation in the sparse case, instead of iterating over the index register, we can output the index register in the same way as the alternate index register.
That is, we have a register which iterates over the number of nonzero amplitudes in the state we aim to prepare.
Denoting that number $d$, our steps are as follows.
\begin{enumerate}
\item Create an equal superposition over the register indexing the nonzero entries.
\begin{equation}
\frac 1{\sqrt{d}} \sum_{j=1}^d \ket{j}.
\end{equation}
\item Use QROM indexed on this first register to output ind, alt, and keep values
\begin{equation}
\frac 1{\sqrt{d}} \sum_{j=1}^d \ket{j}\ket{{\rm ind}_j}\ket{{\rm alt}_j}\ket{{\rm keep}_j}.
\end{equation}
\item Use another ancilla in an equal superposition over $2^\mu$ values, and perform an inequality test between this register and the keep register.
Based on the result of this inequality test, swap the contents of the ind and alt registers.
\end{enumerate}
The desired state will then be produced in the second register where we output ind.
The value ${\rm ind}_j$ is simply the index for the $j$th nonzero amplitude.
The correctness of this state preparation routine follows immediately from the correctness of the routine in \cite{BabbushSpectra},
because after the QROM lookup the state in the registers excluding the first is equivalent to that in \cite{BabbushSpectra}.
The Toffoli complexity of this modified state preparation procedure for sparse states depends on the number of nonzero entries rather than the dimension.

For our application, the non-factorized form of the Hamiltonian with truncation of the Coulomb operator can be expressed using the Jordan-Wigner representation.
Using \eq{full_hamiltonian} and the symmetries $T_{pq}=T_{qp}$, $V_{pqrs}=V_{qprs}=V_{pqsr}$, the Hamiltonian can be written as
\begin{align}
	H &= \frac 12 \sum_{\sigma \in \{\uparrow, \downarrow\}}\sum_{p,q=1}^{N/2} T_{pq} (a^\dagger_{p,\sigma} a_{q,\sigma}+a^\dagger_{q,\sigma} a_{p,\sigma}) \nn & \quad + \frac 14 \sum_{\alpha,\beta\in \{\uparrow, \downarrow\}} \sum_{p,q,r,s=1}^{N/2} V_{pqrs}(a^\dagger_{p,\alpha} a_{q,\alpha}+a^\dagger_{q,\alpha} a_{p,\alpha})
	(a^\dagger_{r,\beta} a_{s,\beta}+a^\dagger_{s,\beta} a_{r,\beta}).
\end{align}
Then using the Jordan-Wigner representation via \eq{jw} and truncating $V$, the Hamiltonian can be represented as
\begin{equation}\label{eq:nonfacham}
H \mapsto  \sum_{\sigma \in \{\uparrow, \downarrow\}}\sum_{p,q=1}^{N/2} T_{pq}Q_{pq\sigma}
+ \sum_{\alpha,\beta\in \{\uparrow, \downarrow\}} \sum_{p,q,r,s=1}^{N/2} \widetilde{V}_{pqrs}^{(c)} Q_{pq\alpha} Q_{rs\beta} ,
\end{equation}
where
\begin{equation}
Q_{pq\sigma}=\begin{cases}
    X_{p,\sigma} \vec Z X_{q,\sigma}, & p< q, \\
    Y_{p,\sigma} \vec Z Y_{q,\sigma}, & p> q, \\
 \frac 12\left(\openone - Z_{p,\sigma}\right) , & p=q.
\end{cases}
\end{equation}
Here we have again used the symmetries $T_{pq}=T_{qp}$, $V_{pqrs}=V_{qprs}=V_{pqsr}$.
Using $p<q$ versus $p>q$ to distinguish between $X_{p,\sigma} \vec Z X_{q,\sigma}$ and $Y_{p,\sigma} \vec Z Y_{q,\sigma}$ is a convenient way to perform the controlled operations as described in \fig{selecth}.
This is of the form of a linear combination of unitaries, and the state we need to prepare is of the form
\begin{align}
&\ket{0}\ket{+}\ket{0}\ket{0}\sum_{\sigma \in \{\uparrow, \downarrow\}}\sum_{p,q=1}^{N/2} \sqrt{\frac{|T_{pq}|}{\lambda}} \ket{\theta_{pq00}^{T}}\ket{p,q,\sigma}\ket{0}\nn
    &\quad + \ket{1}\ket{+}\ket{+}\ket{+} \!\!\!\!\!\! \sum_{\alpha,\beta\in \{\uparrow, \downarrow\}}\sum_{p,q,r,s=1}^{N/2} \!\! \sqrt{\frac{|\widetilde{V}_{pqrs}^{(c)}|}{\lambda}} \ket{\theta_{pqrs}^{V}}\ket{p,q,\alpha}\ket{r,s,\beta}.\label{eq:spprep}
\end{align}
The first register is just a convenient replacement for the $\ket{\ell}$ register, and is used to distinguish between the one and two body terms.
The second register is used to distinguish between $\openone$ and $Z_{p,\alpha}$ for $p=q$, and the third register is used to distinguish between $\openone$ and $Z_{r,\beta}$ for $r=s$.
The second, third, and fourth registers will be used to generate the symmetries of the state.
The fifth register is used to give the appropriate sign of the $T_{pq}$ and $\widetilde{V}_{pqrs}^{(c)}$ weightings, which are now combined instead of having two separate registers as before.

We can again use symmetry to reduce the number of coefficients that need to be prepared.
For $V_{pqrs}$ there is symmetry between exchanging $p$ and $q$, between exchanging $r$ and $s$, and between exchanging the $p,q$ and $r,s$ pairs, as described in \eq{symmetries}.
For $T_{pq}$ there is symmetry between $p$ and $q$.
For this reason we will initially only prepare amplitudes for $p\le q$, $r\le s$, and $p\le r$ for $V$, and $p\le q$ for $T$.

To simplify the description, we will introduce some notation,
\begin{equation}
\zeta_{pq} := \begin{cases}
\sqrt{2}, & p<q, \\
1, & p=q, \\
0, & p>q.
\end{cases}
\end{equation}
We will also allow $\zeta$ to be used with four subscripts, which means
\begin{equation}
\zeta_{pqrs} := \begin{cases}
\sqrt{2}, & pq<rs, \\
1, & pq=rs, \\
0, & pq>rs,
\end{cases}
\end{equation}
where the notation $pq<rs$ indicates that either $p<r$ or $p=r$ and $q<s$, and $pq=rs$ indicates that $p=r$ and $q=s$.
That is, if these are the composite indices for the matrix $W$, then $pq<rs$ indicates the upper triangle, and $pq=rs$ indicates the diagonal.

In terms of this notation, the state produced will initially be
\begin{align}
&    \ket{0}\ket{+}\ket{0}\ket{0}\sum_{\sigma \in \{\uparrow, \downarrow\}}\sum_{p,q=1}^{N/2} \zeta_{pq}\sqrt{\frac{|T_{pq}|}{\lambda}}\ket{\theta_{pq}^{T}}\ket{p,q,\sigma}\ket{0} \nonumber \\
&    +\ket{1}\ket{+}\ket{+}\ket{+} \!\!\! \sum_{\alpha,\beta\in \{\uparrow, \downarrow\}}\sum_{p,q,r,s=1}^{N/2}\zeta_{pq}\zeta_{rs}\zeta_{pqrs}
    \sqrt{\frac{|\widetilde{V}_{pqrs}^{(c)}|}{\lambda}}\ket{\theta_{pqrs}^{V}}\ket{p,q,\alpha}\ket{r,s,\beta}.
\end{align}
Using $\zeta$ ensures that the number of nonzero terms is about $1/2$ as much for $T$, and about $1/8$ as much for $V$.
These sparse entries can be prepared by the technique described above for sparse state preparation.
For $T$ the only terms prepared here are those where $p\le q$, and for $V$ only where $p\le q$, $r\le s$, and $pq\le rs$.
Next we perform three steps.
\begin{enumerate}
    \item Swap the $p,q$ and $r,s$ registers controlled on the state of the fourth register.
    \item Swap the $p$ and $q$ controlled on the state of the second register.
    \item Swap the $r$ and $s$ controlled on the state of the third register.
\end{enumerate}

To show the effect of this, we will first show it for the first component of the state, with $0$ in the first register and weightings dependent on $T$.
The first and third controlled swaps have no effect because third and fourth registers are zero.
The second controlled swap gives us
\begin{align}\label{eq:cswap}
&    \ket{0}\ket{0}\ket{0}\ket{0}\sum_{\sigma \in \{\uparrow, \downarrow\}}\sum_{p,q=1}^{N/2} \zeta_{pq}\sqrt{\frac{|T_{pq}|}{2\lambda}}\ket{\theta_{pq}^{T}}\ket{p,q,\sigma}\ket{0}\nn
&\quad +\ket{0}\ket{1}\ket{0}\ket{0}\sum_{\sigma \in \{\uparrow, \downarrow\}}\sum_{p,q=1}^{N/2} \zeta_{pq}\sqrt{\frac{|T_{pq}|}{2\lambda}}\ket{\theta_{pq}^{T}}\ket{q,p,\sigma}\ket{0} \nonumber \\ 
 &  =  \ket{0}\ket{0}\ket{0}\ket{0}\sum_{\sigma \in \{\uparrow, \downarrow\}}\sum_{p,q=1}^{N/2} \zeta_{pq}\sqrt{\frac{|T_{pq}|}{2\lambda}}\ket{\theta_{pq}^{T}}\ket{p,q,\sigma}\ket{0}\nn
&\quad +\ket{0}\ket{1}\ket{0}\ket{0}\sum_{\sigma \in \{\uparrow, \downarrow\}}\sum_{p,q=1}^{N/2} \zeta_{qp}\sqrt{\frac{|T_{pq}|}{2\lambda}}\ket{\theta_{pq}^{T}}\ket{p,q,\sigma}\ket{0} \nonumber \\
 & = \ket{0}\sum_{\sigma \in \{\uparrow, \downarrow\}}\sum_{p,q=1}^{N/2} \sqrt{\frac{|T_{pq}|}{2\lambda}}\left(\zeta_{pq}\ket{0}+\zeta_{qp}\ket{1}\right)\ket{0}\ket{0}\ket{\theta_{pq}^{T}}\ket{p,q,\sigma}\ket{0}
 \nonumber \\
 & = \ket{0}\sum_{\sigma \in \{\uparrow, \downarrow\}}\sum_{p,q=1}^{N/2} \sqrt{\frac{|T_{pq}|}{\lambda}}\ket{\kappa_{pq}}\ket{\theta_{pq}^{T}}\ket{0}\ket{0}\ket{p,q,\sigma}\ket{0},
\end{align}
where we have defined the state labelling
\begin{equation}
\kappa_{pq} := \begin{cases}
0, & p<q, \\
1, & p>q, \\
+, & p=q.
\end{cases}
\end{equation}

Using similar notation we can describe the effect of the controlled swaps between $p,q$ and $r,s$ in a compact way.
The effect of this controlled swap on the second component of the state will be
\begin{equation}
\ket{1} \sum_{\alpha,\beta\in \{\uparrow, \downarrow\}}\sum_{p,q,r,s=1}^{N/2}\zeta_{pq}\zeta_{rs}
    \sqrt{\frac{|\widetilde{V}_{pqrs}^{(c)}|}{\lambda}}\ket{+}\ket{+}\ket{\kappa_{pqrs}}\ket{\theta_{pqrs}^{V}}\ket{p,q,\alpha}\ket{r,s,\beta},
\end{equation}
where
\begin{equation}
\kappa_{pqrs} := \begin{cases}
0, & pq<rs, \\
1, & pq>rs, \\
+, & pq=rs.
\end{cases}
\end{equation}
The reasoning is exactly the same as for the $T$ component of the state.
The net effect is that we remove the $\zeta_{pqrs}$ and the control register is now in the state $\ket{\kappa_{pqrs}}$.
The controlled swaps with $p,q$ and $r,s$ have similar effect, giving the state as in \eq{spprep}, except the second, third and fourth registers are not in $\ket{+}$ states.
Nevertheless, for $p=q$ the second register is in an equal superposition of $\ket{0}$ and $\ket{1}$, so we can we use it to select between $\openone$ and $Z_{p,\sigma}$ for the controlled operations.
Similarly for $r=s$ the third register is in an equal superposition of $\ket{0}$ and $\ket{1}$.
These features are sufficient to give the linear combination of unitaries required.

Hence we can prepare the desired state with about $1/8$ as many nonzero entries using the sparse preparation procedure, and controlled swaps to generate the entries that are identical due to symmetries.
For the controlled operations that need to be performed, note that the form of the Hamiltonian without factorization has $a^\dagger_{p,\sigma} a_{q,\sigma}$ and $a^\dagger_{r,\sigma} a_{s,\sigma}$ operations that are mapped to the Pauli operators in exactly the same way as for the factorized form.
This means that these controlled operations can be performed using exactly the same $\textsc{select}_1$ and $\textsc{select}_2$ operations as for the factorized form.

\section{Complexity for sparse preparation}
\label{sec:sparse_complex}

\subsection{{\micro} orbitals}
Next we use the results of \sec{sparsity} to determine the complexity in the case of using a sparse preparation with the non-factorized Hamiltonian.
For the integrals of {\micro} 
we have $\lambda_T = 1{,}490 \, {\rm  a.u.}$ and $\lambda_V = 8{,}373 \, {\rm a.u.}$ for a total of $\lambda=9{,}863 \, {\rm a.u}$.
The number of qubits for the phase estimation is
\begin{equation}
m = \left\lceil\log\left(\frac{\sqrt{2} \pi \lambda}{2\Delta E} \right)\right\rceil = 24,
\end{equation}
and the output size for the keep probabilities for the QROAM is
\begin{equation}
\mu = \left\lceil \log\left(\frac{2\sqrt{2}\lambda}{\Delta E}\right)\right\rceil =25.
\end{equation}
There are eight registers of size $\lceil \log  (N/2) \rceil$, because the sparse preparation scheme needs to output ind values and alt values of $p,q,r,s$.
There are also $2$ qubits needed for the two output values of $\theta$ (one for the ind and one for the alt values of $p,q,r,s$), as well as $2$ qubits used for ind and alt values of the first register which distinguishes between $T$ and $V$.
As a result the QROAM output size is $\wid = \mu+8\lceil \log  (N/2) \rceil+4$.
The number of orbitals is $N=108$, giving $\lceil \log  (N/2) \rceil=6$ and therefore $\wid=77$.

Here we just consider the use of a large number of ancillae to minimize the Toffoli cost.
The cost of the preparation is then
\begin{equation}
\lceil\dm/\chunk_1\rceil+\wid(\chunk_1-1)
\end{equation}
and of the inverse preparation is
\begin{equation}
\lceil\dm/\chunk_2\rceil+\chunk_2 .
\end{equation}
Using a truncation threshold of $0.0002\,{\rm a.u.}$, we find that $3{,}300{,}568$ nonzero values of $\widetilde{V}_{pqrs}^{(c)}$ are needed.
From the symmetries, the number of unique nonzero values is about 435,023.
Adding to that the $N^2/8+N/4=1485$ unique terms for $T$, we get 436,508.
Then the optimal values are $k_1=2^6$ and $k_2=2^9$, giving a preparation cost of 11,672 and an inverse preparation cost of 1,365, for a total of 13,037.
There are 746 Toffolis for the minor costs (see \app{mincost}) giving a total of 13,783.
That gives an overall complexity in terms of Toffolis
\begin{equation}
2^m\times 13783 \approx 2.3\times 10^{11}.
\end{equation}
The total number of logical qubits needed is 5,103.

\subsection{{\caltech} orbitals}
For the integrals of {\caltech} we have $\lambda_T = 3{,}446 \, {\rm  a.u.}$, $\lambda_V = 4{,}168 \, {\rm a.u.}$ so
$\lambda=7{,}614\,{\rm a.u.}$, and $N=152$.
That gives $m=24$, $\mu=24$, and $\lceil \log  (N/2) \rceil=7$.
That gives the output size for the QROAM of $M=84$.
Using a truncation threshold of $0.0001\,{\rm a.u.}$, we find that 1,291,648 nonzero values are needed.
From the symmetries, the number of unique nonzero values is about 176,572.
Adding to that the $N^2/8+N/4=2926$ unique terms for $T$, we get 179,498.
Using $k_1=2^5$ and $k_2=2^9$ gives a preparation cost of $8{,}214$ and an inverse preparation cost of $863$.
(The optimal value of $k_1$ is actually $2^6$, but the improvement is only slight and it almost doubles the number of ancilla needed.)
The minor costs have a total of 918.
That gives an overall complexity in terms of Toffolis
\begin{equation}
2^m\times 9995 \approx 1.7\times 10^{11}.
\end{equation}
The number of logical qubits used is 2,904.

The Toffoli performance of this algorithm is the best of the alternatives we have so far considered.
As this is the most promising for implementation, we consider a further improvement by better allocating the allowable error between the different parts of the algorithm.
In this case
\begin{equation}
\log\left(\frac{\sqrt{2} \pi \lambda}{2\Delta E} \right) \approx 23.33.
\end{equation}
In \cite{BabbushSpectra} the error is given in Eq.~(23) which is equivalent to
\begin{equation}
\Delta E^2 \approx \lambda^2 \left[\left(\frac{\pi}{2^{m+1}}\right)^2+(\epsilon_{\textsc{prep}}+\pi\epsilon_{\textsc{QFT}})^2\right].
\end{equation}
Here $\epsilon_{\textsc{prep}}$ is the error allowed in the state preparation and $\epsilon_{\textsc{QFT}}$ is the gate synthesis error allowed in the inverse quantum Fourier transform.
The formula for $m$ is obtained by equally allocating the allowable (squared) error between $\left({\pi}/{2^{m+1}}\right)^2$ and $(\epsilon_{\textsc{prep}}+\pi\epsilon_{\textsc{QFT}})^2$.
If we instead allow approximately 26\% more error from the phase estimation, then it is possible to use $m=23$ (which reduces the number of logical qubits to 2,903).
The allowable error in $\epsilon_{\textsc{QFT}}$ can be reduced to compensate, which has a negligible gate cost because it is not multiplied by $2^m$.
As a result, the complexity in terms of Toffolis then becomes
\begin{equation}
2^m\times 9995 \approx 8.4\times 10^{10}.
\end{equation}
This is a full order of magnitude improvement over the low rank factorization method, and \emph{three} orders of magnitude improvement over the approach of \cite{Reiher2017}.

\section{Discussion}
\label{sec:conc}

The cost of performing Toffoli gates may be estimated as follows.
The efficient CCZ factory from \cite{Gidney2018pub} has a rectangular footprint of $12 \times 6$ logical qubit patches.
The paper specifies a code distance of $d=31$ which, in the rotated surface code, means each patch covers $2 \cdot 31^2 \approx 2{,}000$ physical qubits.
The factory outputs a CCZ state every $5.5d$ cycles which, assuming a surface code cycle time of 1 microsecond, is once every 170 microseconds.
Thus, the spacetime volume of the factory is $2{,}000 \cdot (12 \times 6)$ physical qubits times $170$ microseconds which equals roughly $24$ qubitseconds.
Every Toffoli we perform requires at least this much spacetime volume. With the same overhead one can use these techniques to produce two magic states.

\begin{table*}[t]
\small
    \begin{tabular}{ c | c | c | c}
    & leading order scaling &
    {\micro} \cite{Reiher2017} &
   {\caltech} \cite{Li2019} \\    \hline \hline
Toffolis for dirty ancilla algorithm &
    $\sqrt 2 \pi \lambda  NM L / (4 \Delta E)$ & $2.1\times 10^{13}$ & $2.0\times 10^{13}$\\
logical qubits for dirty ancilla algorithm & $N + \log(L^2N^8)+4\mu+m$ & 378 & 437 \\
    \hline
    Toffolis for many ancilla algorithm & $\pi \lambda N\sqrt{L\wid}/\Delta E$ & $1.2\times 10^{12}$ & $9.8\times 10^{11}$ \\
    logical qubits for many ancilla algorithm &
     $N + N\sqrt{L\wid/8}$ & 3,024 & 3,143 \\
         \hline
Toffolis for sparse algorithm & NA & $2.3\times 10^{11}$ & $8.4\times 10^{10}$\\
logical qubits for sparse algorithm & NA & 5,103 & 2,903 \\
    \hline
  \end{tabular}
  \caption{\label{tab:costs} The leading order Toffoli and spatial complexities for the algorithms using dirty ancilla or allowing a large ancilla count.
  The formulae given are approximate, with the counts determined in a more exact way explained in the text. Here $N$ is the number of spin-orbitals, $L$ is the rank of the Coulomb operator, $\lambda$ is the 1-norm of the Hamiltonian as defined in \eq{lambdas}, $\Delta E$ is the target precision for phase estimation, $\wid$ is the output size for the QROAM, $\mu$ is the number of qubits used for ``keep'' probabilities, and $m$ is the number of qubits used for the phase estimation.}
\end{table*}

The lowest Toffoli count that we report in \tab{costs} is $8 \times 10^{10}$.
Combined with the $24$ qubitsecond spacetime volume for distilling a CCZ state, the spacetime volume of the algorithm is about three megaqubitweeks.
Three megaqubitweeks of spacetime volume means that if we use ``only" three million physical qubits then the computation must run for at least a week; if we want the computation to finish in a day, we need at least 23 million qubits.
We want computations to finish in less than a day, and we don't want to use 23 million qubits. This implies that, when attempting to move our algorithm into the regime of practical computations, we should focus on optimizations that reduce spacetime volume (such as exploiting symmetries in the Hamiltonian to reduce the size of QROM reads) instead of optimizations that convert spacetime volume (such as performing parallel distillation of states, which reduces time at the cost of space).

For the purpose of simulating arbitrary basis chemistry Hamiltonians, our approach is the best scaling and also the most practical shown to date. Despite us using a larger and more accurate active space, we have certainly improved over the $10^{14}$ T gates required by \cite{Reiher2017}. Based on the numbers above, a lower bound on the spacetime volume of the \cite{Reiher2017} algorithm is roughly two gigaqubitweeks of distillation, which is about seven hundred times more than our approach. This ignores storage, routing, and Clifford operations, but this level of comparison seems appropriate considering that the large spacetime volume of state distillation is still likely to dominate the overall cost.

While our many ancilla algorithm has ${\cal O}(N^{3/2})$ spatial complexity and would require a few thousand logical qubits to simulate FeMoco, at three megaqubitweeks of spacetime volume required just for state distillation, our algorithm is bottlenecked by the Toffoli complexity, and not by the logical qubit costs. Thus, despite the increased spatial costs, we regard the many qubit algorithm as more practical than the dirty ancilla algorithm. However, recent work \cite{Litinski2019} has suggested that by distinguishing between qubits that are, and are not, used for error detection, one can use a lower code distance for most of the logical qubits required for magic state distillation. While explicit factories realizing these improvements have not yet been laid out, these improvements could significantly reduce our estimates of the required spacetime volume for distillation, perhaps to less than a megaqubitweek.

One might wonder if there are other techniques that could be used to reduce the Toffoli count at the expense of extra ancilla. In our algorithm, the Toffolis are coming from our use of QROM. Unfortunately, a lower bound proven in \cite{Lowpreparation} suggests that no further tradeoffs of this type will be possible that asymptotically reduce the Toffoli count of the QROM we are using. A more promising approach might be to utilize additional structure in order to reduce the effective number of coefficients that must be read from QROM. We presented one such strategy based on leveraging sparsity in the Coulomb operator, and another based on leveraging the rank deficiency of the Coulomb operator.

In some cases, sparsity in the Hamiltonian arises due to locality of interactions (especially for large systems) \cite{McClean2014}, but sparsity also arises for reasons having to do with the symmetry point groups of molecules in real space and the symmetry point groups of molecular orbitals in an active space (this would likely be the origin of sparsity in the FeMoco active space, for instance). It might also be possible to exploit these symmetries (which are embedded in the coefficient tensors) using techniques from group theory. Likewise, it might be possible to further exploit the low rankness of the Coulomb operator by using the second tensor factorization discussed in \cite{Motta2018}.

 Another avenue for reducing the total cost would be to reduce the size of the $\lambda$ parameter. For example, a technique described in Section V of \cite{Rubin2018} may help reduce the $\lambda$ parameter by exploiting $n$-representability conditions. Yet another approach might be to incorporate mean-field background subtraction \cite{Al-Saidi2006}. One could also explore options for selecting the active space differently. For instance, by using different orbitals that are more local, or more symmetric, one could potentially induce more sparsity in the Hamiltonian. Or, one could try to select active space orbitals with a goal of reducing $\lambda$. In both of these contexts, the use of pseudopotentials might help \cite{Vanderbilt1990}.

Of course, it might be the case that the best path forward uses a different algorithm or representation entirely. For instance, it is possible that Trotterization based on the Trotter step of \cite{Motta2018} would provide a significant improvement over the Trotter results in \cite{Reiher2017}. In practice, Trotter errors (which determine the number of Trotter steps required) depend sensitively on the structure of the Trotter step, and this has not yet been analyzed for the approach of \cite{Motta2018}. However, this seems like a promising direction. In that context one might also wonder if using higher-order Trotter formulae could further bring down the costs, as observed for simulating simpler models in \cite{Childs2017}.

Given the high overhead that appear to be required for simulating FeMoco in a molecular orbital basis, we might wonder how practical it would be to perform the simulation in the plane wave basis. No approach based on second quantization is sensible here because one would need perhaps $N=10^6$ plane waves to simulate a FeMoco active space to sufficiently high accuracy. Nevertheless, the first quantized algorithm of \cite{BabbushContinuum} continues to look competitive. While the constant factors still need to be worked out, that approach scales as $\widetilde{\cal O}(N^{1/3}\eta^{8/3}/\Delta E)$ where the number of electrons is $\eta$. If $N=10^6$, $\eta=54$ (as in the {\micro} active space) and $\Delta E = 0.0016 \, {\rm a.u.}$, the quantity $N^{1/3}\eta^{8/3}/\Delta E$ is roughly $10^9$. Thus, if further symmetries can be exploited in this simulation to further reduce costs (given that there will be other overheads), we expect this might be a viable way forward; however, more work is clearly needed.

\acknowledgements

The authors thank Garnet Kin-Lic Chan, Austin Fowler, Yuval Sanders, and Hartmut Neven for helpful discussions. DWB is funded by Australian Research Council Discovery Projects DP160102426 and DP190102633. MM was supported by the United States Department of Energy via the grant {DE-SC}0019374 awarded to Garnet Kin-Lic Chan.

\bibliographystyle{apsrev4-1-fixed}

\appendix
\section{Cost of computing table lookups assisted by dirty ancillae}
\label{app:lookup}

In \cite{Lowpreparation} it is explained how to perform an efficient table lookup (a QROAM read) assisted by dirty ancillae.
Here ``dirty'' ancillae are used to mean ancillae that need not be initialized to some known state before the procedure, and are returned to their initial state at the end.
We provide an equivalent technique, and prove the following result.
\begin{theorem}\label{thdirty}
Given a function $f: \mathbb{Z}_\dm \to \mathbb{Z}_2^\wid$ and $k$ a power of 2 satisfying $1<k<d$,
it is possible to apply the transformation
\begin{equation}
\sum_{j=1}^\dm \psi_j \ket{j} \ket{0} \mapsto \sum_{j=1}^\dm \psi_j \ket{j} \ket{f(j)}
\end{equation}
using $2\lceil\dm/\chunk\rceil + 4\wid(\chunk-1)$ Toffoli gates, $(\chunk-1)\wid$ dirty ancillae and $\lceil\log( \dm/\chunk)\rceil$ clean ancillae.
\end{theorem}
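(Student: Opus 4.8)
The plan is to realize the lookup as a \emph{select--swap} network in the style of \cite{Lowpreparation}, built on a block decomposition of the index together with the unary iteration of \cite{BabbushSpectra}. Since $\chunk$ is a power of two, I first split the index register as $j=\chunk a+b$, where the low $\log\chunk$ bits give $b\in\{0,\dots,\chunk-1\}$ and the high bits give the block index $a$, which ranges over $\lceil\dm/\chunk\rceil$ values; no arithmetic is needed because this is just a relabelling of wires. I then view the clean output register (initially $\ket{0}$) together with the $(\chunk-1)\wid$ dirty ancillae as a bank of $\chunk$ registers of $\wid$ qubits each. The idea is to handle one whole block $\{f(\chunk a),\dots,f(\chunk a+\chunk-1)\}$ per value of $a$, using the low bits $b$ to route the wanted entry $f(\chunk a+b)=f(j)$ into the clean output.

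First I would construct the unary iteration over the $\lceil\dm/\chunk\rceil$ values of $a$ exactly as in \cite{BabbushSpectra} (Section IIIA): this produces a one-hot control for the active block using $\lceil\log(\dm/\chunk)\rceil$ clean ancillae and $\lceil\dm/\chunk\rceil$ Toffolis, after which the block data is written into the register bank by classically-controlled CNOTs (contributing no further Toffolis). The essential complication is that the $\chunk-1$ ancilla registers are \emph{dirty}: a CNOT ``write'' turns a register holding garbage $g$ into $g\oplus f(\chunk a+i)$ rather than $f(\chunk a+i)$, so the borrowed registers cannot simply be read and, more importantly, must be returned to $g$ at the end. I resolve this using the involutivity of XOR together with the $\ket{+}$-toggling device flagged in the main text: the iteration is run \emph{twice}, once to XOR the block into the bank and once to XOR the same data back out of the $\chunk-1$ borrowed registers, with the $\ket{+}$ states arranged so that the output register alone is toggled exactly once and so that the dirty registers are restored. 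This accounts for the $2\lceil\dm/\chunk\rceil$ Toffoli term and keeps the borrowed-qubit count at $(\chunk-1)\wid$.

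For the routing I would use a linear-depth swapping network controlled on the $\log\chunk$ bits of $b$, which in the clean case moves one $\wid$-qubit register to any of $\chunk$ positions using $\wid(\chunk-1)$ Toffolis. Because the swapped registers are dirty, each controlled swap must be \emph{toggle}-controlled (applied and then undone so as to induce no net change on the garbage, as in the addition trick of \cite{GidneyAdder}), doubling its cost; and the network is performed on both the forward and reverse legs of the extraction, giving another factor of two. Together these yield the $4\wid(\chunk-1)$ term. The linear-depth choice (rather than a logarithmic-depth parallel network) is what avoids spreading control qubits over additional dirty ancillae, which would otherwise force every parallel CSWAP to be toggled separately. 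Collecting everything gives $2\lceil\dm/\chunk\rceil+4\wid(\chunk-1)$ Toffolis, $(\chunk-1)\wid$ dirty ancillae, and $\lceil\log(\dm/\chunk)\rceil$ clean ancillae, as claimed.

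The main obstacle I anticipate is the bookkeeping needed to prove \emph{exact} restoration: one must verify that, after the intervening swaps, the second iteration pass XORs precisely the right block entries into precisely the corrupted physical registers, so that every borrowed register returns to its initial value while the output register ends holding $f(j)$ with no spurious phase. The cleanest way to discharge this is to note that the whole circuit is diagonal in $j$, so it suffices to track the contents of the $\chunk$-register bank on a basis state $\ket{j}$ through the sandwich of swap network, iteration, swap network, iteration, confirm the invariant ``output toggled once, ancillae net-unchanged,'' and then appeal to linearity to extend the conclusion to $\sum_j\psi_j\ket{j}$. The gate and ancilla tallies then follow by direct counting.
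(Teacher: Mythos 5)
Your proposal is correct and essentially identical to the paper's own proof: it is the same select--swap construction, with the index split as $j=\chunk a+b$, the clean output plus $(\chunk-1)\wid$ dirty qubits viewed as a bank of $\chunk$ registers, two unary-iteration lookup passes (giving $2\lceil \dm/\chunk\rceil$ Toffolis and the $\lceil\log(\dm/\chunk)\rceil$ clean ancillae), the linear-depth swap network controlled on the low bits $b$, and the $\ket{+}$/Hadamard trick so that the output is written exactly once while the dirt XORed into the borrowed registers cancels between the two passes. One terminological quibble: the factor $4\wid(\chunk-1)$ arises simply because the swap network $S$ and its inverse $S^{-1}$ each appear in both lookup legs (the CSWAPs are controlled on the actual address qubits, so no toggle-control is needed merely because the \emph{swapped} registers are dirty --- toggle-control is what the logarithmic-depth network of \cite{Lowpreparation} would require); but your parenthetical ``applied and then undone'' describes exactly this $S\cdots S^{-1}$ sandwich, so your circuit and count agree with the paper's.
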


\begin{proof}
The technique proceeds as follows (see \fig{lookup}).
\begin{itemize}
    \item Allocate a register $r_0$ with $\wid$ clean qubits in the $|+\rangle$ state. This register will ultimately store the output.
    \item Borrow registers $r_1, \ldots, r_{\chunk-1}$ each of size $\wid$ containing dirty ancillae.
    \item Let $l$ be the superposed integer value of the bottom $\log \chunk$ qubits in the address register. Using a series of $\wid (k-1)$ controlled swaps, permute the registers $r_0, ..., r_{\chunk-1}$ such that $r_0$ ends up where $r_l$ was. The other registers can be permuted in any order. Call this swapping procedure $S$.
    \item Let $h$ be the superposed integer value of the top $\lceil\log( \dm/\chunk)\rceil$ qubits of the address register. Perform a table lookup with address $h$ targeting the $r_0, ..., r_{\chunk-1}$ registers. The data for register $r_l$ at address $h$ is equal to the data from the original table at address $h \cdot \chunk + l$. In effect, this is reading many possible outputs at once. Call this lookup process $T$.
    \item Perform the inverse of the swapping procedure $S$.
    \item Because the qubits in $r_0$ were in the $|+\rangle$ state, they were not affected by $T$ (which only targeted them with controlled bit flips).
    Apply Hadamard operations to $r_0$ so that it will be affected by the next $T$, which will also uncompute the dirt XOR'ed into the other registers.
    \item Perform $S$ then $T$ then $S^{-1}$ again.
    \item $r_0$ is now storing the output. The other registers are restored. Return the borrowed $r_1,\dots, r_{\chunk-1}$ registers.
\end{itemize}

The swapping subroutine $S$ has a Toffoli count of $\wid (\chunk - 1)$.
The table lookup subroutine $T$ has a Toffoli count of $\lceil\dm/\chunk\rceil$.
We compute/uncompute $S$ four times and perform $T$ twice.
Therefore the total Toffoli count is $2\lceil\dm/\chunk\rceil +4\wid(\chunk-1)$.
The space cost of the procedure is
$(\chunk-1)\wid$ dirty ancillae for workspace and $\lceil\log( \dm/\chunk)\rceil$ clean ancillae hidden in the implementation of $T$.
\end{proof}

Note that the transformation also uses $\wid$ qubits to store the output.
The value of $\chunk$ that minimizes the Toffoli count is approximately $\sqrt{2\dm/\wid}$.
In practice the number of available dirty qubits often bounds $k$ to be a much smaller value.
The value of $k$ must be greater than 2 in order to have a Toffoli count lower than a standard lookup.

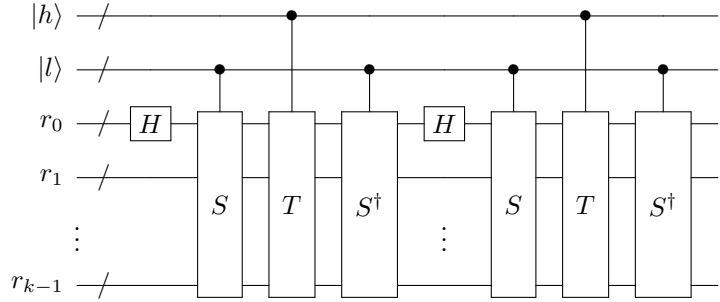
\begin{figure}[h]
\centerline{
\Qcircuit @C=1em @R=0.75em @!R {
\lstick{\ket{h}} & \qw {/} & \qw & \qw & \ctrl{2} & \qw & \qw & \qw& \ctrl{2} & \qw  & \qw \\
\lstick{\ket{l}} & \qw {/} & \qw & \ctrl{1} & \qw & \ctrl{1} &  \qw & \ctrl{1}& \qw  & \ctrl{1} & \qw \\
\lstick{r_0} & \qw {/} & \gate{H} & \multigate{3}{S} & \multigate{3}{T} & \multigate{3}{S^\dagger} &  \gate{H} & \multigate{3}{S} & \multigate{3}{T} & \multigate{3}{S^\dagger} & \qw  \\
\lstick{r_1} & \qw {/} & \qw & \ghost{S} & \ghost{T} & \ghost{S^\dagger} &  \qw & \ghost{S} & \ghost{T} & \ghost{S^\dagger} & \qw \\
\vdots  & & & & & & \vdots \\
\lstick{r_{k-1}} & \qw {/} & \qw & \ghost{S} & \ghost{T} & \ghost{S^\dagger} &  \qw & \ghost{S} & \ghost{T} & \ghost{S^\dagger} & \qw \\
}}\caption{\label{fig:lookup} The sequence of operations used for computing table lookups with dirty ancillae. The output register is $r_0$, and the registers $r_1$ to $r_{k-1}$ are the dirty ancillae.
Note that the scheme in \cite{Lowpreparation} (see Fig.\ 1(d)) uses ``\textsc{Swap}'' for the operation we here call $S$, and ``\textsc{Select}'' for the operation we here call $T$.
The sequence of operations we use here is different than in \cite{Lowpreparation}, resulting in a reduction in the number of registers needed.}
\end{figure}
\FloatBarrier

\section{Cost of computing table lookups assisted by clean ancillae}
\label{app:clean-lookup}

When clean ancillae are available, the optimization from \app{lookup} can be performed more efficiently. The result is as follows.
\begin{theorem}
Given a function $f: \mathbb{Z}_\dm \to \mathbb{Z}_2^\wid$ and $k$ a power of 2 satisfying $1<k<d$,
it is possible to apply the transformation
\begin{equation}
\sum_{j=1}^\dm \psi_j \ket{j} \ket{0} \mapsto \sum_{j=1}^\dm \psi_j \ket{j} \ket{f(j)}
\end{equation}
using $\lceil\dm/\chunk\rceil +\wid(\chunk-1)$ Toffoli gates and $(\chunk-1)\wid+\lceil\log(\dm/\chunk)\rceil$ clean ancillae.
\end{theorem}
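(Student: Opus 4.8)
The plan is to adapt the dirty-ancilla construction of Theorem~\ref{thdirty} (illustrated in \fig{lookup}), exploiting the fact that the ancillae are now clean to strip away the redundant work the dirty case required. First I would split the address, writing $j = hk + l$ where $l$ is the integer value of the bottom $\log k$ qubits and $h$ the value of the top $\lceil\log(d/k)\rceil$ qubits. I allocate the $M$-qubit output register $r_0$ together with $k-1$ further $M$-qubit registers $r_1,\dots,r_{k-1}$ drawn from the clean ancillae (accounting for $(k-1)M$ of them), reserving the remaining $\lceil\log(d/k)\rceil$ clean ancillae for the unary iteration hidden inside the lookup subroutine $T$.

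The argument then rests on the same two primitives as before: a swap network $S$ that routes $r_0$ into slot $l$, with Toffoli cost $M(k-1)$, and a table lookup $T$ addressed by $h$ that writes the block $f(hk+i)$ into slot $i$, with Toffoli cost $\lceil d/k\rceil$ (once the one-hot encoding of $h$ has been produced by unary iteration, the data is copied out with CNOTs only, so the output size $M$ does not enter this count). The key simplification over the dirty case is to initialise the $k-1$ ancilla registers in the state $\ket{+}^{\otimes M}$ rather than leaving them dirty. Since $X\ket{+}=\ket{+}$, the controlled bit-flips performed by $T$ act trivially on every slot currently holding $\ket{+}$; hence after routing $r_0$ (in $\ket{0}$) into slot $l$ and applying $T$, slot $l$ receives exactly $f(hk+l)=f(j)$ while every other slot is untouched, and a final layer of Hadamards returns the ancillae to $\ket{0}$. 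This is precisely why a single invocation of $T$ suffices, whereas the dirty construction needed a second lookup solely to uncompute the data XORed into the dirty registers by the first; removing that round is what takes the cost from $2\lceil d/k\rceil + 4M(k-1)$ down to the claimed $\lceil d/k\rceil + M(k-1)$.

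The step I expect to be the main obstacle is the bookkeeping of the swap network, specifically verifying that one does not pay for an explicit inverse routing: the naive $S$-then-$T$-then-$S^{-1}$ round trip would cost $2M(k-1)$, so I must check that the $\ket{+}$-immunity lets the return permutation be absorbed into the (Toffoli-free) Hadamard clean-up, leaving only the single $M(k-1)$ routing pass in the count. I would also need to handle the boundary case where $d$ is not a multiple of $k$, so that the topmost lookup address covers fewer than $k$ live entries. Correctness then follows exactly as in Theorem~\ref{thdirty}, by tracking the state slot-by-slot through $S$, $T$, and the Hadamard layer, noting that the $M$ output qubits are separate from the $(k-1)M + \lceil\log(d/k)\rceil$ clean ancillae counted in the statement.
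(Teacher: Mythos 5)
Your construction has a genuine gap at exactly the step you flagged as the main obstacle, and the obstacle is fatal as stated. With your ordering ($S$ first, then $T$, with the ancilla registers in $\ket{+}^{\otimes \wid}$), after the lookup the word $f(j)$ sits in slot $l$, which is an \emph{address-dependent, superposed} location. Getting it into a fixed output register requires coherently routing it back, i.e.\ controlled swaps conditioned on the low address bits acting nontrivially on the data itself. The fact that every other slot holds an identical $\ket{+}^{\otimes\wid}$ state does not make this routing free: Hadamards and measurements on the non-data slots cannot move the data word (indeed, an $X$-basis measurement of the register that holds $f(j)$ in some branch would destroy that data, not relocate it). So the ``$\ket{+}$-immunity'' that saves the day in the dirty-ancilla proof of Theorem~\ref{thdirty} buys you nothing here, and your circuit honestly costs $\lceil\dm/\chunk\rceil + 2\wid(\chunk-1)$ Toffolis, not the claimed $\lceil\dm/\chunk\rceil + \wid(\chunk-1)$. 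Note also that you cannot rescue the count by doing $T$ before $S$ within your setup: precisely because $\ket{+}$ is immune to the controlled bit flips, a lookup performed first would write data only into $r_0$ (giving $f(h\chunk)$ rather than $f(j)$) and into none of the $\ket{+}$ registers.

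The paper's proof avoids this by exploiting cleanness in the opposite way. Since all $\chunk$ registers start in $\ket{0}$ at \emph{fixed} locations, one performs the lookup $T$ (addressed by $h$) \emph{first}, so that register $r_i$ receives the definite word $f(h\chunk+i)$ for every $i$; then a \emph{single} swap pass, controlled on $l$, routes $r_l$ into the output position. That is one $T$ and one $S$, giving $\lceil\dm/\chunk\rceil + \wid(\chunk-1)$. The price is that the remaining $\chunk-1$ registers now hold address-correlated junk rather than returning to $\ket{0}$; the paper disposes of this by measuring those qubits in the $X$ basis immediately (costing no Toffolis) and deferring the required phase fixup to the eventual uncomputation of the lookup, using the measurement-based technique of \app{unlookup}. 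So the missing idea in your proposal is twofold: reverse the order of $S$ and $T$ (which is only possible because the registers are clean), and handle the resulting junk with measurement plus classically conditioned phase corrections rather than insisting the ancillae be restored unitarily.
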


\begin{proof}
The steps are as follows.
\begin{itemize}
    \item Allocate registers $r_0, \ldots, r_{\chunk-1}$ each with $\wid$ clean qubits initialized to the $|0\rangle$ state.
    \item Let $h$ be the superposed integer value of the top $\lceil\log(\dm/\chunk)\rceil$ qubits of the address register. Perform a table lookup with address $h$ targeting the $r_0, \dots, r_{\chunk-1}$ registers. The data for register $r_l$ at address $h$ is equal to the data from the original table at address $h \cdot \chunk + l$. In effect, this is reading many possible outputs at once.
    \item Let $l$ be the superposed integer value of the bottom $\log \chunk$ qubits in the address register. Using a series of $\wid\chunk$ controlled swaps, permute the registers $r_0, \dots, r_{\chunk-1}$ such that $r_l$ ends up where $r_0$ was. The other registers can be permuted in any order.
    \item $r_0$ is now storing the output.
    \item Note that every computational basis value of the address register results in a specific computational basis value for registers $r_0, ..., r_{k-1}$ at this point. We have performed the equivalent of table lookup targeting $r_0, ..., r_{k-1}$.
    We can use the uncomputation strategy from \app{unlookup} to uncompute this effective lookup.
    The first thing done by that strategy is to measure all output qubits in the $X$ basis.
    We will not be using any of the qubits from registers $r_1, ..., r_{\chunk -1}$ until that point so they can be measured now instead of later.
    This frees the clean ancillae for other uses.
    Keep the measurement results so they can be used by the uncomputation process.
\end{itemize}

The swapping subroutine of this algorithm has a Toffoli count of $\wid(\chunk - 1)$.
The table lookup subroutine has a Toffoli count of $\lceil\dm/\chunk\rceil$.
We perform each exactly once, therefore the total Toffoli count is $\lceil\dm/\chunk\rceil + \wid(\chunk-1)$.
The space cost of the procedure is
$(\chunk-1)\wid$ clean ancillae for workspace and $\lceil\log(\dm/\chunk)\rceil$ clean ancillae hidden in the implementation of the table lookup subroutine.
\end{proof}

Again the transformation uses $\wid$ qubits to store the output.
The value of $\chunk$ that minimizes the Toffoli count is approximately $\sqrt{\dm/\wid}$.
In practice the number of available clean qubits may bound $k$ to be a much smaller value.
\FloatBarrier

\section{Efficient uncomputation of table lookups using measurement based uncomputation}
\label{app:unlookup}

Next we consider the cost of uncomputing a table lookup, reversing the procedure described in the preceding two appendices.
The result is as follows.
\begin{theorem}
Given a function $f: \mathbb{Z}_\dm \to \mathbb{Z}_2^\wid$ and $k$ a power of 2 satisfying $1<k<d$,
it is possible to apply the transformation
\begin{equation}
\sum_{j=1}^\dm \psi_j \ket{j} \ket{f(j)} \mapsto \sum_{j=1}^\dm \psi_j \ket{j} \ket{0}
\end{equation}
using $\lceil\dm/\chunk\rceil +\chunk$ Toffoli gates and $\chunk+\lceil\log(\dm/\chunk)\rceil$ clean ancillae, or alternatively using $2\lceil\dm/\chunk\rceil +4\chunk$ Toffoli gates, $\chunk-1$ dirty ancillae and $\lceil\log(\dm/\chunk)\rceil+1$ clean ancillae.
\end{theorem}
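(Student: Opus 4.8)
The plan is to prove this by \emph{measurement-based uncomputation}, which is exactly what lets the Toffoli cost become independent of the output width $\wid$. The key observation is that reversing a lookup need not re-run the expensive $\wid$-wide data read: one can instead destroy the output register by measurement and then repair the only residual effect with a cheap single-bit phase lookup.

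First I would measure each of the $\wid$ output qubits in the $X$ basis. Writing the output as $\ket{f(j)}=\bigotimes_{b=1}^{\wid}\ket{f(j)_b}$ and using $\ket{f(j)_b}=\tfrac{1}{\sqrt2}\bigl(\ket{+}+(-1)^{f(j)_b}\ket{-}\bigr)$, a measurement outcome $\vec m\in\{0,1\}^{\wid}$ occurs with probability $2^{-\wid}$ independent of the data, and leaves each output qubit in a known $X$ eigenstate ($\ket{+}$ if $m_b=0$, $\ket{-}$ if $m_b=1$) which is reset to $\ket 0$ by a classically controlled Clifford and hence no Toffolis. The net effect on the index register is the kickback phase $(-1)^{\sum_b m_b f(j)_b}=(-1)^{\vec m\cdot f(j)}$, so the state becomes $\sum_{j}\psi_j(-1)^{\vec m\cdot f(j)}\ket{j}\ket{0}$. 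This step costs zero Toffoli gates, and it is precisely where the $\wid$-dependence disappears.

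It then remains to cancel the residual phase by applying $\ket j\mapsto(-1)^{\phi(j)}\ket j$ with the single-bit function $\phi(j):=\vec m\cdot f(j)\bmod 2$; note that $\phi$ is classically known once $\vec m$ has been measured and that $(-1)^{\phi(j)}$ is its own inverse. I would implement this as a \emph{phase} table lookup of the one-bit value $\phi(j)$ into a target qubit held in the $\ket-$ state, so that the lookup produces exactly the phase $(-1)^{\phi(j)}$ by kickback. This is a width-$\wid=1$ instance of the constructions already proved: using the clean-ancilla read of \app{clean-lookup} gives Toffoli count $\lceil \dm/\chunk\rceil+\chunk$ with $\chunk+\lceil\log(\dm/\chunk)\rceil$ clean ancillae (the $\lceil \dm/\chunk\rceil$ from the select subroutine, the $O(\chunk)$ from the swap network routing the selected bit, and the ancilla total being the $\chunk-1$ workspace qubits plus the single $\ket-$ target plus the $\lceil\log(\dm/\chunk)\rceil$ address qubits), while using the dirty-ancilla read of \app{lookup} gives $2\lceil \dm/\chunk\rceil+4\chunk$ Toffolis with $\chunk-1$ dirty and $\lceil\log(\dm/\chunk)\rceil+1$ clean ancillae, matching the factor-two and factor-four blow-up of that construction specialized to $\wid=1$.

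The main obstacle I expect is bookkeeping rather than the idea: one must check that the width-one specialization of the earlier theorems reproduces the stated constants exactly (the $+\chunk$ versus $+(\chunk-1)$ in the swap network, and the $+1$ clean ancilla for the phase target), and that the internal workspace of the phase fixup is itself cleaned without reintroducing any dependence on $\wid$. I would also verify the edge cases in which $\dm$ is not a multiple of $\chunk$ (handled by the ceilings, exactly as in \app{lookup} and \app{clean-lookup}) and confirm that routing the selected bit with a linear-depth controlled-swap network keeps the swap cost at the claimed $O(\chunk)$ while allowing the borrowed dirty registers to be returned to their initial state.
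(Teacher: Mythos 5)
Your first step is exactly the paper's: measure the $\wid$ output qubits in the $X$ basis, absorb the result into a classically known one-bit function $\phi(j)=\vec m\cdot f(j)\bmod 2$, and observe that all $\wid$-dependence disappears at that point, leaving only a phase fixup $\ket j\mapsto(-1)^{\phi(j)}\ket j$. Where your proposal has a genuine gap is the claim that this fixup "is a width-$\wid=1$ instance of the constructions already proved." Taken literally, that invocation does not deliver the theorem. For the clean case it is circular: the theorem of \app{clean-lookup} obtains its Toffoli count only by leaving junk in the $\chunk-1$ workspace registers and deferring their cleanup to the uncomputation strategy of this very theorem. And even ignoring circularity, a data lookup (XOR $\phi(j)$ into a register) followed by a CNOT-kickback onto $\ket{-}$ still leaves the looked-up bit and the $\chunk-1$ permuted junk bits entangled with the address; erasing them costs another lookup, giving roughly $2\lceil\dm/\chunk\rceil+2\chunk$ in the clean case rather than the claimed $\lceil\dm/\chunk\rceil+\chunk$ (or recurses into exactly the problem being solved).

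What closes this gap in the paper is a specific junk-free modification of the lookup, not a black-box call: in the clean case (\fig{unlookupc}) one flips a clean qubit to $\ket 1$, swap-routes it to position $l$ to form a one-hot unary encoding ($\chunk-1$ Toffolis), and Hadamards all $\chunk$ registers so the selected one is $\ket{-}$ and the other $\chunk-1$ are $\ket{+}$; the block lookup $T$ then acts as a pure phase (X gates do nothing to $\ket{+}$), so it is applied only once at cost $\lceil\dm/\chunk\rceil$, and the one-hot register is erased afterwards using only measurements and Clifford gates (\fig{qrom}--\fig{qromclif}). In the dirty case (\fig{unlookupd}) the phase is produced by inserting $Z$ gates between two $S\,T\,S^\dagger$ passes, the second $T$ restoring the dirty registers, which is the origin of the factors $2$ and $4$. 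Your counts match the theorem because you reverse-engineered them from the earlier appendices, but the device that actually achieves them---placing the $\ket{\pm}$ states inside the swap network so that no junk is ever created, and erasing the routing register with Cliffords---is the central content of the proof and is what your "bookkeeping" step would have to supply.
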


\begin{proof}
Whenever a quantum circuit uncomputes a qubit $q$ by performing a series of unitary operations that result in the qubit being in the $|0\rangle$ state, the circuit can be optimized by application of the deferred measurement principle.
For example, suppose the last operation involving $q$ is a CNOT targeting $q$.
Then the circuit can be optimized by measuring $q$ in the $X$ basis before the CNOT, then replacing CNOT with a $Z$ gate conditioned on the measurement result.
This is an optimization because, in the surface code, $Z$ gates require no spacetime volume.

This general pattern of ``take an $X$-axis interaction that clears a qubit and use an $X$ basis measurement and a classically-conditioned phase fixup operation  instead" applies to many constructions, including table lookups.
Let's consider the nature of the phase fixup task we must perform when we eagerly measure the output qubits of a table lookup in the $X$ basis.

We can think of each entry in the table as corresponding to a multi-control multi-target CNOT operation.
There is one control (or anti-control) for each address qubit, and one target (or skip) for each output qubit.
Because we will be measuring the output qubits, it is helpful to flip our perspective and think of the output qubits as the controls and the address qubits as the targets.
For example, suppose the entry at address 2 of the table is the bit string $10011000$.
This means that the table lookup must toggle the qubits at offset 0, 3, and 4 of the target register conditioned on the address register storing 2.
Or, equivalently, this means that the table lookup must negate the amplitude of the $|2\rangle$ state of the address register, conditioned on the $X_0 \cdot X_3 \cdot X_4$ observable of the qubits in the target register.
Because we measured the output qubits in the $X$ basis, we know the value of the $X_0 \cdot X_3 \cdot X_4$ observable and thus know whether or not we need to negate the amplitude of the $|2\rangle$ state of the address register.
In fact, using the observables we measured, we can figure out for every state $|j\rangle$ of the address register whether or not $|j\rangle$ needs to be negated.
Performing the necessary negations uncomputes the table lookup.

Let $S$ be the set of all address register states that need to have their amplitude negated.
This will be some arbitrary subset of the states from $|0\rangle$ to $|\dm-1\rangle$.
We now transform the task of negating all the states in $S$ into the task of performing a table lookup of size $\dm/2$ with output size $2$.
Let $q$ be the least significant qubit of the address register, and $u$ be a clean ancilla qubit in the $|1\rangle$ state.
Apply a CNOT from $q$ onto $u$, then apply a Hadamard transform to each.
Now define a ``fixup table" $F$ with entries $F_j$ each specifying two output bits defined by $S$:

\begin{equation}
F_j =
\begin{cases}
    00 & (2j \notin S) \land (2j+1 \notin S) \\
    01 & (2j \in S) \land (2j+1 \notin S) \\
    10 & (2j \notin S) \land (2j+1 \in S) \\
    11 & (2j \in S) \land (2j+1 \in S) \\
\end{cases}
\end{equation}

After preparing $q$ and $u$, perform a table lookup from $F$ onto $q$ and $u$.
The address of the lookup is all of the qubits of the address register, except for the least significant qubit.
Because of how we defined $F$ and how we prepared $q$ and $u$, this negates the phase of all states from $S$.
This uncomputes the table lookup, and we finish the uncomputation by uncomputing the preparation of $q$ and $u$.
See \fig{unlookup} for a quantum circuit showing an overview of the process.

This technique implements the uncomputation of a table lookup with address size $\dm$ and output size $\wid$ in terms of computing a table lookup with address size $\dm/2$ and output size $2$.
Reducing the address and output size increase the effectiveness of the techniques explained in \app{clean-lookup} and \app{lookup}, since a larger value of $k$ can be used.

By combining this technique and \app{clean-lookup}, if $\chunk$ additional clean ancillae are available, the Toffoli cost of uncomputing a table lookup with address size $\dm$ and output size $\wid$ can be reduced to $\lceil \dm/\chunk\rceil + \chunk$.
The procedure to use is shown in \fig{unlookupc}, where $r_0$ is initially $\ket{0}$, which is flipped to $\ket{1}$ by the $X$ gate.
The controlled swap $S$ shifts that $\ket{1}$ to position $l$, giving a one-hot unary encoding of the value in the bottom $\log k$ qubits in the address register.
The Hadamards and $T$ then yield the correct phase factor.
The reverse controlled swap then erases the unary encoding.
The first controlled swap has cost $k-1$, with the cost reduced as compared to the table lookup because the outputs are single qubits instead of $\wid$-qubit registers.
The $T$ has cost $\lceil \dm/\chunk\rceil$, and the final controlled swap may be performed with Clifford gates.
The reason is that a unary encoding may be erased using measurements and Clifford gates.
There are $k$ ancillae used for the unary encoding, and as before there are $\lceil \log(\dm/\chunk)\rceil$ ancillae needed for the implementation of $T$.

Combining this technique with \app{lookup} instead, if $\chunk$ additional dirty ancillae are available, the Toffoli cost of uncomputing a table lookup with address size $\dm$ and output size $\wid$ can be reduced to $2\lceil \dm/\chunk\rceil + 4\chunk$.
The procedure to use is shown in \fig{unlookupd}, which is a slight modification over that for the table lookup.
As before $r_0$ is a clean register, and $r_1$ to $r_{k-1}$ can be dirty registers, though now they need just be qubits.
The first $S$, $T$, and $S^\dagger$ have no effect on the target qubit $r_0$.
Then the $Z$ gate changes the state of this qubit to $(\ket{0}-\ket{1})/\sqrt 2$, so the next $T$ yields the correct phase factor.
At the end the $Z$ and $H$ return the state of $r_0$ to $\ket{0}$.
There are $k-1$ dirty ancillae used, one clean ancilla for $r_0$, and $\lceil \log(\dm/\chunk)\rceil$ clean ancillae needed for the implementation of $T$.

In the case of clean ancillae, the Toffoli cost is minimized for $\chunk\approx \sqrt{\dm}$, where the Toffoli count is $2 \sqrt{\dm}$.
For dirty ancillae, the cost is minimized at $\chunk\approx\sqrt{\dm/2}$, where the Toffoli count is $\sqrt{32 \dm}$.\end{proof}

To explain in more detail how to erase the unary register using Clifford gates, see \fig{qrom} to \fig{qromclif}.
\fig{qrom} shows how to map binary to unary using controlled swaps.
The sequence of controlled swaps is the same as in \cite{Lowpreparation}.
If we first flip the top qubit of the unary register to 1, the controlled swap network moves the 1 to the correct position.
To uncompute the unary register, we do this process in reverse. We expand each controlled swap into a CNOT-Toffoli-CNOT construction, as shown in \fig{qrominv}.
The construction is considerably simplified because for the second CNOT in each group the control is known to be in the $\ket{0}$ state, so the CNOT can be omitted entirely.
Furthermore, when the output of a Toffoli is known to be zero we can use the uncomputation trick instead, leaving us with a circuit using no non-Clifford gates, as shown in \fig{qromclif}.

\newcommand{\cgate}[1]{*+<.6em>{#1} \POS ="i","i"+UR;"i"+UL **\dir{-};"i"+DL **\dir{-};"i"+DR **\dir{-};"i"+UR **\dir{-},"i" \cw}
\newcommand{\igate}[1]{*+<.6em>{#1} \POS ="i","i"+UR;"i"+UL **\dir{-};"i"+DL **\dir{-};"i"+DR **\dir{-};"i"+UR **\dir{-},"i"}
\newcommand{\imultigate}[2]{*+<1em,.9em>{\hphantom{#2}} \POS [0,0]="i",[0,0].[#1,0]="e",!C *{#2},"e"+UR;"e"+UL **\dir{-};"e"+DL **\dir{-};"e"+DR **\dir{-};"e"+UR **\dir{-},"i"}
\newcommand{\ighost}[1]{*+<1em,.9em>{\hphantom{#1}}}
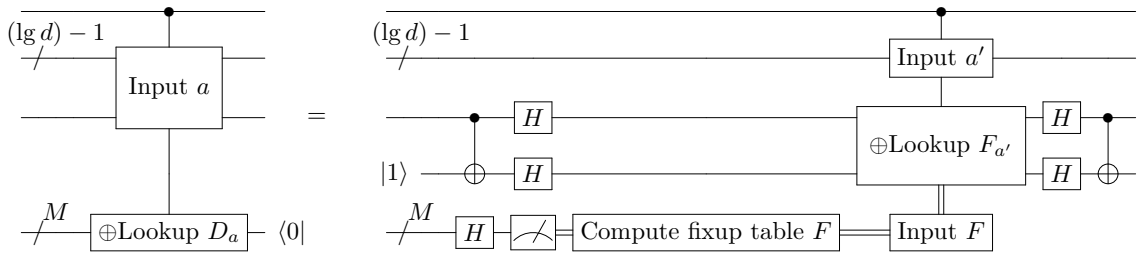
\begin{figure}[h]
\resizebox{\textwidth}{!}{
\Qcircuit @R=1em @C=0.7em {
  &\qw    &                  \qw&\qw &                               \ctrl{1}      &\qw                       &&& &&&&&\qw    &                  \qw&\qw &     \qw&     \qw&                                     \qw&                    \ctrl{1}                       &     \qw&     \qw&\qw&\\
  &\qw {/}&\ustick{(\lg \dm)-1}\qw&\qw &       \multigate{1}{\text{Input }a}         &\qw                       &&& &&&&&\qw {/}&\ustick{(\lg \dm)-1}\qw&\qw &     \qw&     \qw&                                     \qw&                   \gate{\text{Input }a^\prime}    &     \qw&     \qw&\qw&\\
  &\qw    &                  \qw&\qw &              \ghost{\text{Input }a}         &\qw                       &&&=&&&&&\qw    &                  \qw&\qw &\ctrl{1}&\gate{H}&                                     \qw&\multigate{1}{\oplus\text{Lookup }F_{a^\prime}}\qwx&\gate{H}&\ctrl{1}&\qw&\\
  &       &                     &    &                                         \qwx&                          &&& &&&&&       &\lstick{|1\rangle}   &\qw &   \targ&\gate{H}&                                     \qw&       \ghost{\oplus\text{Lookup }F_{a^\prime}}    &\gate{H}&   \targ&\qw&\\
  &\qw {/}&\ustick{\wid}        \qw&\qw &        \gate{\oplus \text{Lookup }D_{a}}\qwx& \rstick{\langle0|} \qw   &&& &&&&&\qw {/}&\ustick{\wid}        \qw&\qw &\gate{H}&\meter  &\cgate{\text{Compute fixup table }F}    &                         \cgate{\text{Input }F}\cwx&        &        &   &\\
}
}
    \caption{
        Uncomputing a table lookup using eager measurement and phase fixups.
        Reduces the effective address size from $\dm$ to $\dm/2$ and the effective output size from $\wid$ to $2$.
    }
        \label{fig:unlookup}
\end{figure}

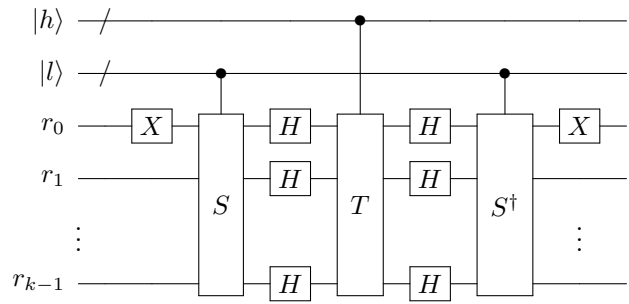
\begin{figure}[htb]
\centerline{
\Qcircuit @C=1em @R=0.7em @!R {
\lstick{\ket{h}} & \qw {/} & \qw & \qw & \qw & \ctrl{2} & \qw & \qw & \qw & \qw \\
\lstick{\ket{l}} & \qw {/} & \qw & \ctrl{1} & \qw & \qw & \qw & \ctrl{1} &  \qw & \qw \\
\lstick{r_0} & \qw & \gate{X}  & \multigate{3}{S} & \gate{H} & \multigate{3}{T} & \gate{H} & \multigate{3}{S^\dagger} & \gate{X} & \qw  \\
\lstick{r_1} & \qw & \qw & \ghost{S} & \gate{H} & \ghost{T} & \gate{H} & \ghost{S^\dagger} &  \qw & \qw \\
\vdots  & & & & &  & & & \vdots \\
\lstick{r_{k-1}} & \qw & \qw & \ghost{S} & \gate{H} & \ghost{T} & \gate{H} & \ghost{S^\dagger} &  \qw & \qw \\
}}\caption{\label{fig:unlookupc} The sequence of operations used for applying the phase fixup with clean ancillae. The registers $r_0$ to $r_{k-1}$ are single qubits, initially set to $\ket{0}$.}
\end{figure}

\begin{figure}[htb]
\centerline{
\Qcircuit @C=1em @R=0.7em @!R {
\lstick{\ket{h}} & \qw {/} & \qw & \qw & \ctrl{2} & \qw & \qw & \qw& \ctrl{2} & \qw  & \qw & \qw & \qw \\
\lstick{\ket{l}} & \qw {/} & \qw & \ctrl{1} & \qw & \ctrl{1} &  \qw & \ctrl{1}& \qw  & \ctrl{1} & \qw & \qw & \qw \\
\lstick{r_0} & \qw & \gate{H} & \multigate{3}{S} & \multigate{3}{T} & \multigate{3}{S^\dagger} &  \gate{Z} & \multigate{3}{S} & \multigate{3}{T} & \multigate{3}{S^\dagger} &  \gate{Z} &  \gate{H} & \qw  \\
\lstick{r_1} & \qw & \qw & \ghost{S} & \ghost{T} & \ghost{S^\dagger} &  \qw & \ghost{S} & \ghost{T} & \ghost{S^\dagger} & \qw & \qw & \qw \\
\vdots  & & & & & & \vdots \\
\lstick{r_{k-1}} & \qw & \qw & \ghost{S} & \ghost{T} & \ghost{S^\dagger} &  \qw & \ghost{S} & \ghost{T} & \ghost{S^\dagger} & \qw & \qw & \qw \\
}}\caption{\label{fig:unlookupd} The sequence of operations used for applying the phase fixup with dirty ancillae. The register $r_0$ is initially set to $\ket{0}$, whereas $r_1$ to $r_{k-1}$ are dirty qubits.}
\end{figure}
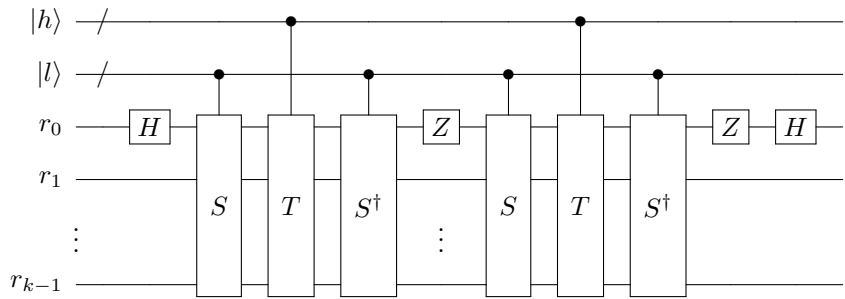

\begin{figure}[htb]
\centerline{
\Qcircuit @C=1em @R=.1em @!R {
& \qw & \ctrl{4} & \qw & \qw & \qw & \qw& \qw & \qw & \qw \\
& \qw & \qw & \ctrl{4} & \ctrl{5} &  \qw & \qw& \qw & \qw & \qw  \\
& \qw & \qw & \qw & \qw &  \ctrl{5} & \ctrl{6} & \ctrl{7} & \ctrl{8} & \qw   \\
& \gate{X} & \qswap & \qswap & \qw & \qswap & \qw & \qw & \qw & \qw  \\
& \qw & \qswap & \qw & \qswap &\qw & \qswap &  \qw & \qw & \qw  \\
& \qw & \qw & \qswap & \qw & \qw & \qw & \qswap & \qw & \qw \\
& \qw & \qw & \qw & \qswap & \qw & \qw &\qw & \qswap & \qw \\
& \qw & \qw & \qw & \qw & \qswap & \qw & \qw & \qw & \qw  \\
& \qw & \qw & \qw & \qw & \qw & \qswap & \qw & \qw & \qw  \\
& \qw & \qw & \qw & \qw & \qw & \qw & \qswap & \qw & \qw  \\
& \qw & \qw & \qw & \qw & \qw & \qw & \qw & \qswap & \qw 
}}\caption{\label{fig:qrom} This circuit maps a binary register in the top three qubits to a one-hot unary register in the bottom 8 qubits.}
\end{figure}
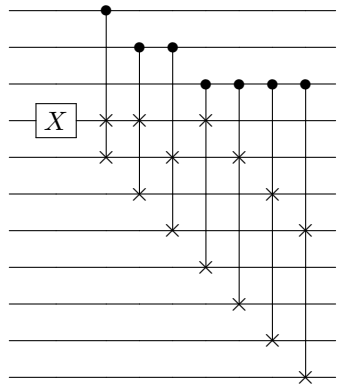

\FloatBarrier

\begin{figure}[htb]
\centerline{
\Qcircuit @C=1em @R=.4em @!R {
& \qw & \qw & \qw & \qw & \qw & \qw & \qw & \qw & \qw & \qw & \qw & \qw & \qw & \qw & \qw & \qw & \qw & \qw & \qw & \ctrl{3} & \qw & \qw & \qw \\
& \qw & \qw & \qw & \qw &  \qw & \qw & \qw & \qw & \qw & \qw & \qw & \qw & \qw & \ctrl{3} & \qw & \qw & \ctrl{2} & \qw & \qw & \qw & \qw & \qw & \qw  \\
& \qw & \ctrl{4} & \qw & \qw & \ctrl{3} & \qw & \qw & \ctrl{2} & \qw & \qw & \ctrl{2} & \qw & \qw & \qw & \qw & \qw & \qw & \qw & \qw & \qw & \qw & \qw & \qw  \\
& \qw & \qw & \qw & \qw & \qw & \qw & \qw & \qw & \qw & \targ & \ctrl{4} & \targ & \qw & \qw & \qw & \targ & \ctrl{2} & \targ & \targ & \ctrl{1} & \targ & \targ & \qw \\
& \qw & \qw & \qw & \qw & \qw & \qw & \targ & \ctrl{4} & \targ & \qw & \qw & \qw & \targ & \ctrl{2} & \targ & \qw & \qw & \qw & \ctrl{-1} & \targ & \ctrl{-1} & \qw & \qw \\
& \qw & \qw & \qw & \targ & \ctrl{4} & \targ & \qw & \qw & \qw & \qw & \qw & \qw & \qw & \qw & \qw & \ctrl{-2} & \targ & \ctrl{-2} & \qw & \qw & \qw & \qw & \qw \\
& \targ & \ctrl{4} & \targ & \qw & \qw & \qw & \qw & \qw & \qw & \qw & \qw & \qw & \ctrl{-2} & \targ & \ctrl{-2} & \qw & \qw & \qw & \qw & \qw & \qw & \qw & \qw \\
& \qw & \qw & \qw & \qw & \qw & \qw & \qw & \qw & \qw & \ctrl{-4} & \targ & \ctrl{-4} & \qw & \qw & \qw & \qw & \qw & \qw & \qw & \qw & \qw & \qw & \qw \\
& \qw & \qw & \qw & \qw & \qw & \qw & \ctrl{-4} & \targ & \ctrl{-4} & \qw & \qw & \qw & \qw & \qw & \qw & \qw & \qw & \qw & \qw & \qw & \qw & \qw & \qw \\
& \qw & \qw & \qw & \ctrl{-4} & \targ & \ctrl{-4} & \qw & \qw & \qw & \qw & \qw & \qw & \qw & \qw & \qw & \qw & \qw & \qw & \qw & \qw & \qw & \qw & \qw \\
& \ctrl{-4} & \targ & \ctrl{-4} & \qw & \qw & \qw & \qw & \qw & \qw & \qw & \qw & \qw & \qw & \qw & \qw & \qw & \qw & \qw & \qw & \qw & \qw & \qw & \qw
}}\caption{\label{fig:qrominv} The inverse of the circuit for mapping binary to unary, with the controlled swaps implemented using CNOTs and Toffolis.}
\end{figure}
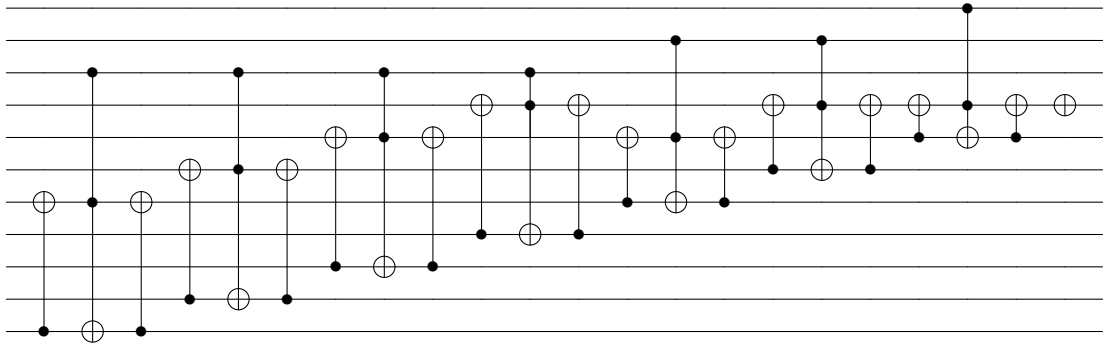

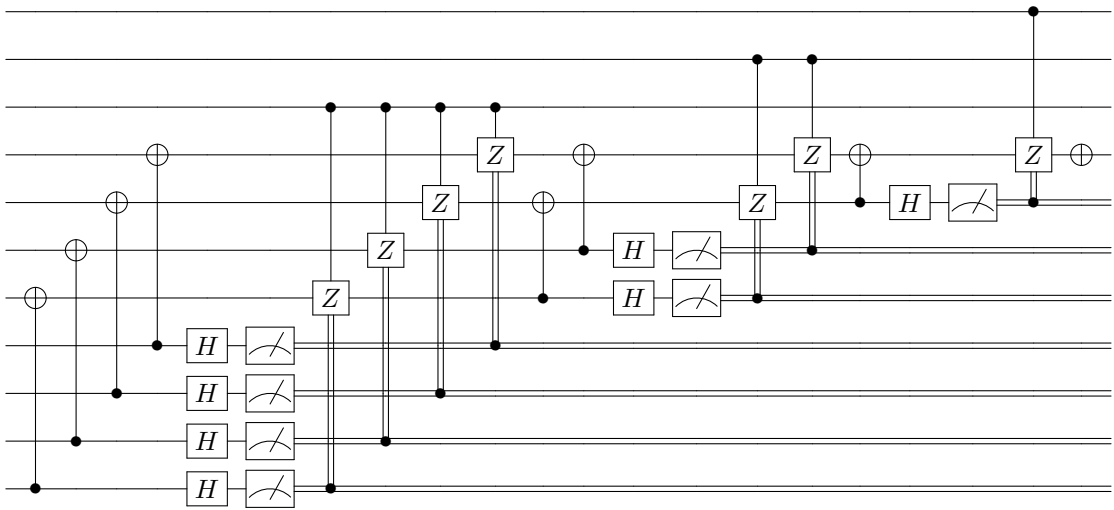
\begin{figure}[htb]
\centerline{
\Qcircuit @C=0.7em @R=.4em @!R {
& \qw & \qw & \qw & \qw & \qw & \qw & \qw & \qw & \qw & \qw & \qw & \qw & \qw & \qw & \qw & \qw & \qw & \qw & \qw & \ctrl{3} & \qw & \qw  \\
& \qw & \qw & \qw & \qw &  \qw & \qw & \qw & \qw & \qw & \qw & \qw & \qw & \qw & \qw & \ctrl{3} & \ctrl{2} & \qw & \qw & \qw & \qw & \qw & \qw    \\
& \qw & \qw & \qw & \qw & \qw & \qw & \ctrl{4} & \ctrl{3} & \ctrl{2} & \ctrl{1} & \qw & \qw & \qw & \qw & \qw & \qw & \qw & \qw & \qw & \qw & \qw & \qw   \\
& \qw & \qw & \qw & \targ & \qw & \qw & \qw & \qw & \qw & \gate{Z} & \qw & \targ & \qw & \qw & \qw & \gate{Z} & \targ & \qw & \qw & \gate{Z} & \targ & \qw \\
& \qw & \qw & \targ & \qw & \qw & \qw & \qw & \qw & \gate{Z} & \qw \cwx & \targ & \qw & \qw & \qw & \gate{Z} & \qw \cwx & \ctrl{-1} & \gate{H} & \meter & \control \cw \cwx & \cw & \cw \\
& \qw & \targ & \qw & \qw & \qw & \qw & \qw & \gate{Z} & \qw \cwx & \qw \cwx & \qw & \ctrl{-2} & \gate{H} & \meter &  \cw \cwx & \control \cw \cwx & \cw & \cw & \cw & \cw & \cw & \cw  \\
& \targ & \qw & \qw & \qw & \qw & \qw & \gate{Z} & \qw \cwx & \qw \cwx & \qw \cwx & \ctrl{-2} & \qw & \gate{H} & \meter & \control \cw \cwx & \cw & \cw & \cw & \cw & \cw & \cw & \cw \\
& \qw & \qw & \qw & \ctrl{-4} & \gate{H} & \meter & \cw \cwx & \cw \cwx & \cw \cwx & \control \cw \cwx & \cw & \cw & \cw & \cw & \cw & \cw & \cw & \cw & \cw & \cw & \cw & \cw \\
& \qw & \qw & \ctrl{-4} & \qw & \gate{H} & \meter & \cw \cwx & \cw \cwx & \control \cw \cwx & \cw & \cw & \cw & \cw & \cw & \cw & \cw & \cw & \cw & \cw & \cw & \cw & \cw \\
& \qw & \ctrl{-4} & \qw & \qw & \gate{H} & \meter & \cw \cwx & \control \cw \cwx & \cw & \cw & \cw & \cw & \cw & \cw & \cw & \cw & \cw & \cw & \cw & \cw & \cw & \cw  \\
& \ctrl{-4} & \qw & \qw & \qw & \gate{H} & \meter & \control \cw \cwx & \cw & \cw & \cw & \cw & \cw & \cw & \cw & \cw & \cw & \cw & \cw & \cw & \cw & \cw & \cw
}}\caption{\label{fig:qromclif} The circuit in \fig{qrominv} with the Toffolis replaced with measurements and controlled operations, and the second CNOT in each group omitted.}
\end{figure}

\section{The scaling $\lambda$ in general contexts}
\label{app:lambda_general}

Here we briefly discuss how we expect $\lambda$ might scale for more general systems, beyond the FeMoco system considered in the rest of this work. In the plane wave basis, one can obtain a clean bound of $\lambda = {\cal O}(N^2)$ \cite{BabbushLow}. One might roughly expect similar scaling in a more general context, based on intuition about electrons interacting pairwise which would imply the spectral norm of the Hamiltonian should go roughly quadratically in $N$. However, for arbitrary basis sets the relationships between $\lambda$, basis size, basis type, molecular structure, etc.\ are very complex and difficult to rigorously bound. Here we use numerics to provide insight into how $\lambda$ scales.

In \figx{hydrogens}{a} we show how $\lambda$ scales for a chain of Hydrogen atoms when basis resolution is fixed and the system grows towards the thermodynamic (large system size) limit (e.g.~a chain with many Hydrogen atoms). We focus on an atomic chain because chains approach their thermodynamic limit much faster than other configurations of atoms. We see in all of our numerics that $\lambda_T \leq \lambda_V \leq \lambda_W$. For the hydrogen chains we see that $\lambda_V = {\cal O}(N^{2.2})$ and $\lambda_W = {\cal O}(N^{2.5})$. It is interesting that $\lambda_W$ has slightly asymptotically worse scaling than $\lambda_V$; however, the difference is far less than the factor of $N$ that we save by performing the low rank truncation (resulting in us scaling like $\lambda_W$ instead of like $\lambda_V$).

Slightly different behavior occurs when we fix the system (in this case the ${\rm H}_4$ system) and grow towards the continuum limit (the limit of an arbitrarily large basis), shown in \figx{hydrogens}{b}. Here, we see that $\lambda_V = {\cal O}(N^{2.7})$ and $\lambda_W = {\cal O}(N^3)$. Overall the scaling of $\lambda$ is a bit worse but $\lambda_W / \lambda_V$ still scales as roughly ${\cal O}(N^{0.3})$.

In both of these numerical experiments we find that $\lambda$ is scaling worse than $\Omega(N^{1.5})$, which is the condition we require for our $\widetilde{\cal O}(N^{3/2}\lambda)$ scaling to be better than the $\widetilde{\cal O}(\lambda^2)$ scaling of \cite{Campbell2019}. Still, it is clear that more research is required to fully understand the relationship between $\lambda$ and molecular structure, and how the size of $\lambda$ might be reduced.

\begin{figure}[h]
\begin{minipage}[t]{.47\textwidth}
\centering
\includegraphics[width=\linewidth]{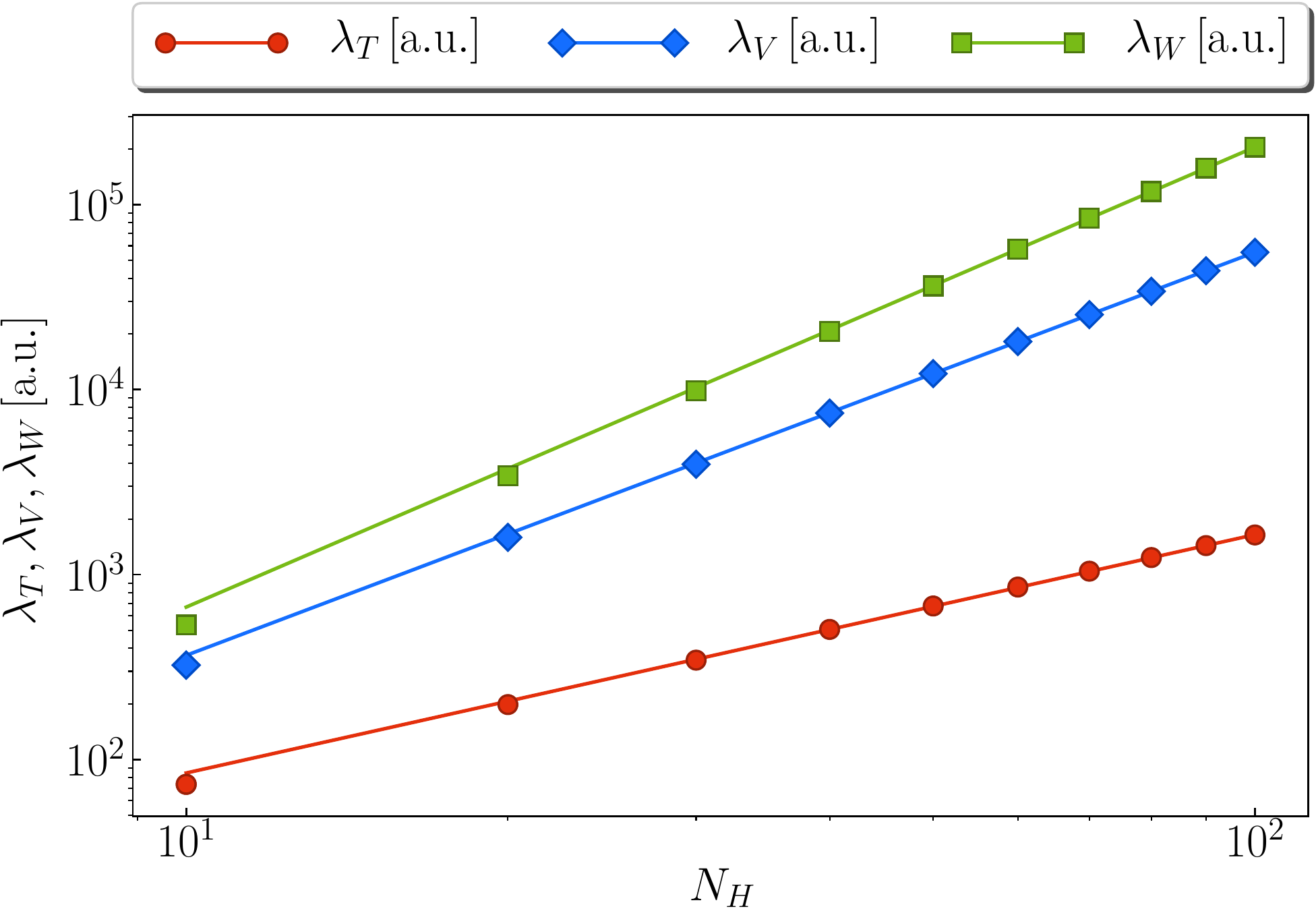}
{\phantom{-}~{\small (a)~$\lambda$ values as function of Hydrogen chain size.}}
\label{fig:h_chain}
\end{minipage}\hspace{8mm}
\begin{minipage}[t]{.47\textwidth}
\centering
\includegraphics[width=\linewidth]{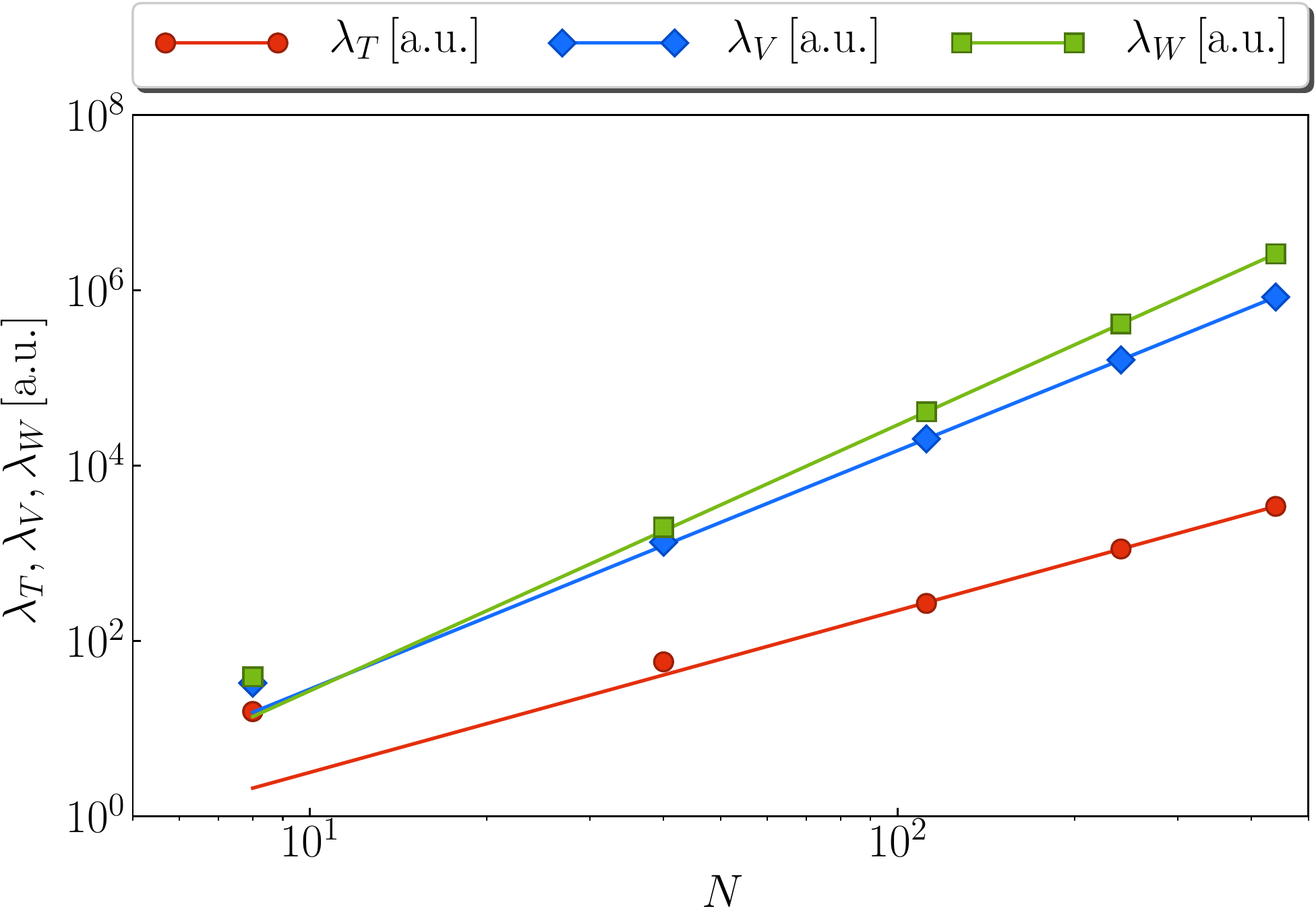}
{\phantom{-}~{\small (b)~$\lambda$ values as function of basis size for ${\rm H}_4$.}}
\label{fig:h4}
\end{minipage}
\caption{\label{fig:hydrogens} The scaling of $\lambda_T$, $\lambda_V$ and $\lambda_W$ for simple systems growing towards continuum and thermodynamic limits. In (a) we see a linear chain of Hydrogen atoms represented in the STO-6G basis, as a function of the number of Hydrogens, $N_H$, in the chain. The atoms have a spacing of 1.4 Bohr radii. For this system the number of spin-orbitals, $N$, is twice the number of Hydrogens, $N_H$. The linear regressions suggest $\lambda_T = {\cal O}(N_H^{1.3})$, $\lambda_V = {\cal O}(N_H^{2.2})$ and $\lambda_W = {\cal O}(N_H^{2.5})$. In (b) we see a plaquette of four Hydrogen atoms (${\rm H}_4$) with a spacing of 2 Bohr radii on each side, as function of the number of spin-orbitals (qubits, $N$) used to represent the system. The points here correspond to progressively larger basis sets between STO-6G and cc-pV5Z (the first point seems to have a slightly different trend, likely because STO-6G uses very different primitive functions than the correlation consistent basis sets). The linear regressions suggest $\lambda_T = {\cal O}(N^{1.9})$, $\lambda_V = {\cal O}(N^{2.7})$ and $\lambda_W = {\cal O}(N^{3.0})$.}
\end{figure}
\FloatBarrier

\section{Detailed costings}\label{app:mincost}
Here we give the details for the minor Toffoli costs and the numbers of logical qubits used.
\subsection{{\micro} orbitals}
For the case where the low rank factorization approach is used, and the small number of dirty qubits is used, the minor costs are as follows.
\begin{enumerate}
    \item $4(N+\lceil \log N\rceil)=460$ Toffolis for the controlled operations.
    \item For the initial equal superposition state preparation, we have 8 qubits for $\ell$, and 6 qubits for each of the $p,q,r,s$ registers for a total of 32.
    To obtain a final amplitude close to one it is convenient to also prepare an ancilla in an equal superposition of 15 out of 16 basis states (on 4 qubits).
    Then the number of qubits is increased to 36, and two steps of amplitude amplification takes 454 Toffolis (see \app{prep}). The final amplitude is close enough to one that it does not affect the complexity at the precision we are giving.
    \item The inequality tests and controlled swaps for the state preparation need to be done twice for each of the two preparations, because there is a cost to prepare and a cost to unprepare. We are preparing 21 qubits in the first step, and 13 qubits in the second step.
    The total cost is therefore $2(2\mu+21+13)=176$.
    \item The controlled swaps used for the symmetries have a cost of $4\lceil \log (N/2)\rceil=24$ Toffolis, taking into account preparation and inverse preparation.
    \item A careful accounting of the number of Toffolis needed to evaluate $s$ in \eq{qromreg} gives 105 (see \app{arithmetic}). We incur this cost four times, because we have two state preparations and two inverse preparations.  The total is therefore $4\times 105=420$.
\end{enumerate}
The total of these costs is 1,534.

The numbers of qubits used are as follows.
\begin{enumerate}
\item The system is represented on $N=108$ qubits.
\item We are preparing a state with a number of qubits
$\lceil \log L\rceil +6+4\lceil\log(N/2)\rceil=38$.
In this expression the first term takes account of $\ell$, the $+6$ takes account of $q_1,q_2,\theta_{pq}^{(\ell)},\theta_{rs}^{(\ell)},\alpha$, and $\beta$, and the last term takes account of the $p,q,r,s$ registers.
\item There are 4 ancillae used in preparing the equal superposition state, plus a flag qubit for success.
\item Computing \eq{qromreg} (for the register to iterate over for the QROAM) twice needs $19\times 2=38$ qubits for the output.
\item There are $49+41=90$ qubits used as output in the two steps of the state preparation.
\item The QROAM uses $\lceil\log(d/\chunk)\rceil=17$ clean qubits for the preparation.
A smaller number of clean qubits are used for the inverse preparation, and we may reuse the same qubits.

\item Each state preparation needs $\mu=27$ qubits to store a superposition state to perform the inequality comparison, plus a flag qubit for success.
\item The number of qubits needed for the phase estimation $m = 26$.
\end{enumerate}
Adding all these qubit counts together gives 378.

Next we consider the case where the large number of ancilla qubits is used.
The minor costs are the same as before, except now there are three inequality tests and controlled swaps needed for the state preparation, and $\mu$ is increased to $28$.
In preparing the superposition over $\ell$, the size of the registers acted upon by the controlled swap is $\lceil \log L\rceil=8$.
In preparing the superposition over $p$ and $q$, there are two registers of size $\lceil\log(N/2)\rceil=6$, as well as the qubit register storing $\theta_{pq}^{(\ell)}$, for a total of 13.
The number of qubits in preparing the superposition over $r$ and $s$ is the same.
Therefore the Toffoli cost of the inequality tests and controlled swaps is increased from 176 to $6\mu+2(8+13+13)=236$.
The minor costs therefore are increased to 1,594.

Evaluating the number of qubits used is similar to the case for the dirty ancillae considered above.
\begin{enumerate}
\item The system is represented on $N=108$ qubits.
\item We are preparing a state with a number of qubits
$\lceil \log L\rceil +6+4\lceil\log(N/2)\rceil=38$.
\item There are 4 ancillae used in preparing the equal superposition state, plus a flag qubit for success.
\item Computing \eq{qromreg} twice needs $19\times 2=38$ qubits for the output.
\item There are $8+42+42=92$ qubits used as output in the three steps of the state preparation.
\item The QROAM uses $(\chunk-1)\wid+\lceil\log(d/\chunk)\rceil= 2{,}659$ qubits.
These are erased, so we need only count the cost of those used in step 4 (since those may reuse the qubits used for step 3).
\item Each state preparation needs $\mu=28$ qubits to store a superposition state to perform the inequality comparison, as well as a single qubit to store the result of the inequality comparison.
We need not count the cost of the qubits used in step 4, because we may reuse qubits from the QROAM.
\item The number of qubits needed for the phase estimation $m = 26$.
\end{enumerate}
Altogether there are 3,024 qubits used.

Third for the {\micro} orbitals, we consider the costs for the sparse state preparation approach.
In this case the minor costs are as follows.
\begin{enumerate}
    \item $4(N+\lceil \log N\rceil)=460$ Toffolis for the controlled operations.
    \item The preparation of the equal superposition state is just for a single register of 19 bits. It is more efficient to use an 6-qubit ancilla and prepare a superposition over 19 basis states in that ancilla.
Then the amplitude amplification takes one step, and is on 25 bits so has a cost of 160.
    \item In this case there is a \emph{single} state preparation for 26 qubits, so the cost for the preparation plus inverse preparation is $2(\mu + 26) = 102$.
    \item This time there is a single state preparation, but three controlled swaps used to generate the symmetries.
    Two of these controlled swaps are on registers of size $\lceil\log (N/2)\rceil$, and one is on a register of twice the size, so the cost is the same as before $4\lceil \log (N/2)\rceil=24$.
\end{enumerate}
This time there is no additional cost from arithmetic, because we are not using \eq{qromreg}.
The total of the minor costs is therefore 746.

The numbers of qubits used are as follows.
\begin{enumerate}
\item The $N=108$ system registers.
\item The state being prepared has size $7+4\lceil\log(N/2)\rceil=31$ qubits.
Here the 7 includes the qubit distinguishing between $T$ and $V$, the three qubits used for the symmetries, and $\theta$, $\alpha$, and $\beta$.
\item Preparation of the equal superposition state uses 6 ancilla qubits, plus there is a flag qubit for success.
\item The 19 qubits that we iterate over for the QROAM.
\item The qubits needed for the QROAM $\chunk_1\wid$, which includes the output $\wid$ registers and another $(\chunk_1-1)\wid$ working output registers.
We subtract $2+4\lceil\log(N/2)\rceil=26$ qubits that are in the main output register that we have already accounted for.
Here $\chunk_1=64$ and $M=77$ gives 4,902.
\item The QROAM uses another $\lceil\log(d/\chunk_1)\rceil=12$ clean qubits.
\item Each state preparation needs $\mu=25$ qubits to store a superposition state to perform the inequality comparison, as well as a flag qubit.
These do not add to the cost because we can reuse qubits used in the QROAM.
\item The number of qubits needed for the phase estimation $m = 24$.
\end{enumerate}
The total number of logical qubits needed is then 5,103.

\subsection{{\caltech} orbitals}
For the {\caltech} orbitals using the small number of dirty ancillae,
the minor costs are as follows.
\begin{enumerate}
    \item $4(N+\lceil \log N\rceil)=640$ Toffolis for the controlled operations.
    \item This time there are 36 qubits, because the $p,q,r,s$ registers need 7 qubits each.
    In this case it is more efficient to perform amplitude amplification separately to prepare the equal superpositions over $\ell,p,q$ and $r,s$ separately.
    Preparing an ancilla with 11 out of 16 basis states (4 qubits) in the first preparation and 17 out of 32 basis states (5 qubits) in the second, we can achieve amplitude close to one with two steps of amplitude amplification. The total Toffoli cost is 534 (see \app{prep}).
    \item We are preparing 23 qubits in the first step, and 15 qubits in the second step.
    The total cost for the inequality tests and controlled swaps is $2(2\mu+23+15)=184$.
    \item The controlled swaps used for the symmetries have a cost of $4\lceil \log (N/2)\rceil=28$ Toffolis.
    \item A careful accounting of the cost of evaluating \eq{qromreg} gives 108 (see \app{arithmetic}), and this is incurred four times for a total of $4\times 108=432$.
\end{enumerate}
The total of these costs is 1,818.

This time the qubits used are as follows.
\begin{enumerate}
\item The system is on $N=152$ qubits.
\item The number of qubits in the prepared state is $\lceil\log L\rceil+6+4\lceil\log(N/2)\rceil=42$.
\item There are 9 qubits used for preparing the equal superposition state, plus one to flag success.
\item Computing \eq{qromreg} twice needs $20\times 2=40$ qubits for the output.
\item The two QROMs output $51+43=94$ qubits.
\item The QROAM uses $\lceil\log(d/\chunk)\rceil=18$ clean qubits for the preparation.
A smaller number of clean qubits are used for the inverse preparation, and we may reuse the same qubits as before.
\item The two size $\mu=27$ registers for the inequality tests, plus the two qubits flagging the results.
\item The number of qubits for the phase estimation is $m=25$.
\end{enumerate}
The total number of logical qubits is 437.

Next we consider the {\caltech} orbitals with a large number of clean ancillae.
This time the minor costs are the same, except the cost of the inequality tests and controlled swaps is increased to $6\mu+2(8+15+15)=238$ (instead of 184).
Here the number of qubits that must be swapped in preparing the superposition over $\ell$ is $\lceil \log L \rceil=8$.
The number of qubits for preparing the superposition over $p$ and $q$ is 15, because $p$ and $q$ have registers of $\lceil\log(N/2)\rceil=7$ qubits each, and there is the qubit storing $\theta_{pq}^{(\ell)}$.
The preparation over $r$ and $s$ also needs swaps with 15 qubits.
Therefore the minor costs are increased to 1,872.

The numbers of logical qubits used are as follows.
\begin{enumerate}
\item The system is represented on $N=152$ qubits.
\item We are preparing a state with a number of qubits
$\lceil \log L\rceil +6+4\lceil\log(N/2)\rceil=42$.
\item There are 9 ancillae used in preparing the equal superposition state, plus a flag qubit for success.
\item Computing \eq{qromreg} twice needs $20\times 2=40$ qubits for the output.
\item There are $8+43+43=94$ qubits used as output in the three steps of the state preparation.
\item The QROAM uses $(\chunk-1)\wid+\lceil\log(d/\chunk)\rceil= 2{,}723$ qubits.
These are erased, so we need only count the cost of those used in step 4 (since those may reuse the qubits used for step 3).
\item Each state preparation needs $\mu=27$ qubits to store a superposition state to perform the inequality comparison, and a flag qubit.
We need not count the cost of the qubits used in step 4, because we may reuse qubits from the QROAM.
\item The number of qubits needed for the phase estimation $m = 26$.
\end{enumerate}
The total number of qubits is therefore 3,143.

Lastly, we consider the costs with the {\caltech} orbitals with the sparse state preparation approach.
The minor Toffoli costs are as follows.
\begin{enumerate}
\item $4(N+\lceil \log N\rceil)=640$ Toffolis for the controlled operations.
\item The equal superposition state to prepare is on 18 qubits, and to obtain final amplitude close to one we use 3 ancilla qubits and prepare a superposition over 3 of the 8 basis states. The single step of amplitude amplification takes 142 Toffolis.
\item The cost for the preparation plus inverse preparation on 30 qubits is $2(\mu + 30) = 108$.
\item The controlled swaps used to generate the symmetries have a cost of $4\lceil \log (N/2)\rceil=28$.
\end{enumerate}
These minor costs have a total of 918.

For the number of qubits, we have the following.
\begin{enumerate}
\item The $N=152$ system registers.
\item The state being prepared has size $7+4\lceil\log(N/2)\rceil=35$ qubits.
\item Preparation of the equal superposition state uses 3 ancilla qubits, plus there is a flag qubit for success.
\item The 18 qubits that we iterate over for the QROAM.
\item The qubits needed for the QROAM $\chunk_1\wid$, which includes the output $\wid$ registers and another $(\chunk_1-1)\wid$ working output registers.
We need to subtract $2+4\lceil\log(N/2)\rceil=30$ qubits that are part of the state being prepared that we have already accounted for.
Here $\chunk=32$ and $M=84$ giving 2,658.
\item The QROAM uses another $\lceil\log(d/\chunk_1)\rceil=13$ clean qubits.
\item Each state preparation needs $\mu=24$ qubits to store a superposition state to perform the inequality comparison.
These do not add to the cost because we can reuse qubits used in the QROAM.
\item The number of qubits needed for the phase estimation $m = 24$.
\end{enumerate}
These give a total of 2,904.
\FloatBarrier

\section{Preparation of equal superposition states}
\label{app:prep}
Here we explain in more detail how the number of Toffolis to prepare equal superposition states were determined.
In the case of low rank factorization,
we aim to prepare an equal superposition state
\begin{equation}
\frac 1{(N^2/8+N/4)\sqrt{L+1}}    \sum_{\ell=0}^L \sum_{p=0}^{N/2-1} \sum_{q=0}^{p} \sum_{r=0}^{N/2-1} \sum_{s=0}^{r} \ket{\ell,p,q,r,s}.
\end{equation}
Here we take all variables to start at zero, because that is how they would be encoded in practice.
(In the body of the paper we took $p,q,r,s$ to start from 1 for simplicity.)
A way to prepare this state is to initially use Hadmards to prepare equal superpositions over all registers, over larger ranges that are powers of 2.
Then we perform inequality tests to check that $\ell\le L$, $p\ge q$, $p<N/2$, $r\ge s$, $r< N/2$ are satisfied.
The amplitude for success can then be brought close to 1 by amplitude amplification.

With $N=108$ (the {\micro} orbitals), there are 6 qubits for each of $p,q,r,s$, and 8 qubits for $\ell$, for a total of 32. That means the state flagged by success is
\begin{equation}
\frac 1{\sqrt{2^{32}}}    \sum_{\ell=0}^L \sum_{p=0}^{N/2-1} \sum_{q=0}^{p} \sum_{r=0}^{N/2-1} \sum_{s=0}^{r} \ket{\ell,p,q,r,s}.
\end{equation}
This has amplitude of
\begin{equation}
   \sin\phi= \frac{(N^2/8+N/4)\sqrt{L+1}}{2^{16}}\approx 0.321 .
\end{equation}
If we were to use $\step=2$ steps of amplitude amplification, then the amplitude would be
\begin{equation}
    \sin((2\step+2)\phi) \approx 0.998 .
\end{equation}
Although that is close to 1, we can do considerably better by introducing another 4 qubits, and preparing an equal superposition of 15 basis states out of 16 for those.
Then the initial amplitude is
\begin{equation}
   \sin\phi= \frac{(N^2/8+N/4)\sqrt{15(L+1)}}{2^{18}}\approx 0.311 .
\end{equation}
Then the amplitude after two steps of amplitude amplification is
\begin{equation}
    \sin((2\step +2)\phi) \approx 0.99994 .
\end{equation}

For the Toffoli cost, there are four facts we need to keep in mind (see \app{arith}).
\begin{enumerate}
    \item To perform an inequality test between two variables (numbers encoded in quantum registers) with an equal number of qubits, the number of Toffolis required is equal to the number of qubits.
    \item To perform an inequality test between a variable and a constant given classically (like $N/2$) requires a number of Toffolis equal to the number of qubits \emph{minus} one.
    \item If there is an inequality test between a variable and a constant that is a multiple of a power of 2, then the number of Toffolis is decreased by that power.
    \item A reflection about zero on a register takes a number of Toffolis two less than the number of qubits.
    This is because it is equivalent to a multiply controlled Toffoli, with one of the qubits as target.
\end{enumerate}

At the end we will flag on success for the inequality tests for the state we aim to prepare.
This is because there is not perfect amplitude for success, and we aim to eliminate the error from the cases where the inequality tests are not satisfied.
However, we do \emph{not} need to perform the final inequality test for the ancilla, because that was just to adjust the amplitude.

The costs in the state preparation are as follows.
\begin{enumerate}
\item The inequality tests between variables, $p\ge q$ and $r\ge s$, take 6 Toffolis each, for a total of 12.
\item The inequality tests $p<N/2$ and $r< N/2$ would take 5 Toffolis, except we can save a Toffoli on each because the constant is $N/2=54=2\times 27$.
These two therefore cost 8 Toffolis.
\item The inequality test $\ell\le L$ is with a constant on 8 qubits so has a cost of 7.
We could also save some Toffolis if we were to choose $L+1=200$ instead of 201, but we will not do that for consistency with the rest of the paper.
\item The inequality test with 15 for the ancilla has a cost of 3.
\item There is a total of $8+6\times 4+4=36$ qubits, so reflection about the entire state has a cost of 34.
\item There are 6 inequality tests, so reflection about success has a cost of 4 Toffolis.
\item At the end we wish to flag on success of the 5 inequality tests for the state, which has a cost of 4.
\end{enumerate}

The total cost of the inequality tests for the state is 27.
For $\step$ steps of amplitude amplification, these inequality tests are performed $2\step+1$ times, for a cost of $27(2\step+1)$.
The inequality test on the ancilla is performed $2\step$ times, for a cost of $6\step$.
The two reflections have a combined cost of $34+4=38$.  They are performed once for each step of amplitude amplification for a cost of $38\step$.
Together with the final check to produce the flag qubit, the cost is
$30(2\step+1)+6\step+38\step+4$.
With two steps of amplitude amplification the cost is 227.
Because we need preparation and inverse preparation for each step in the LCU approach, we have a cost of 454.

With the {\caltech} orbitals $N=152$, and we will find it convenient to prepare the superpositions over $\ell,p,q$ and $r,s$ separately.
Each of the $p,q,r,s$ registers now has 7 qubits, and $\ell$ still has 8 qubits.
For the preparation over $\ell,p,q$, we prepare an equal superposition over all registers, then perform inequality tests to give a state flagged on success
\begin{equation}
\frac 1{\sqrt{2^{22}}}    \sum_{\ell=0}^L \sum_{p=0}^{N/2-1} \sum_{q=0}^{p} \ket{\ell,p,q}
\end{equation}
This has amplitude of
\begin{equation}
   \sin\phi= \frac{\sqrt{(N^2/8+N/4)(L+1)}}{2^{11}}\approx 0.374 .
\end{equation}
In this case we will also prepare a superposition over 11 out of 16 basis states on another 4 qubits.
Then the initial amplitude is
\begin{equation}
   \sin\phi= \frac{\sqrt{11(N^2/8+N/4)(L+1)}}{2^{13}}\approx 0.310 .
\end{equation}
The amplitude after $\step=2$ steps of amplitude amplification is
\begin{equation}
    \sin((2\step+2)\phi) \approx 0.99997 .
\end{equation}
The costs are as follows.
\begin{enumerate}
\item The inequality test $p\ge q$ has a cost of 7.
\item In this case $N/2=152/2=76 = 4\times 19$.
This means we can save two Toffolis, and the cost for the inequality test $p<N/2$ is 4.
\item The cost of the inequality test $\ell\le L$ is 7.
\item The inequality test on the ancilla has a cost of 3.
\item The reflection about zero for the $8+2\times 7+4=26$ qubit state has cost of 24.
\item The reflection for the output qubits from the 4 inequality tests has cost 2.
\item Flagging success of all inequality tests for the state has cost 2.
\end{enumerate}
There is cost $7+4+7=18$ for the inequality tests for the state, 3 for the ancilla inequality test, and $24+2=26$ for the reflections, giving total cost $18(2\step+1)+6\step+26\step + 1=155$.
Because we need preparation and inverse preparation that gives a cost of 310.

Next, for the preparation of $r,s$ the inequality tests give a state flagged on success of
\begin{equation}
\frac 1{\sqrt{2^{22}}}  \sum_{r=0}^{N/2-1} \sum_{s=0}^{r} \ket{\ell,r,s}
\end{equation}
with amplitude
\begin{equation}
   \sin\phi= \frac{\sqrt{N^2/8+N/4}}{2^7}\approx 0.423 .
\end{equation}
We use another 5 qubits, and prepare a superposition of 17 basis states out of 32 for those.
That gives amplitude
\begin{equation}
   \sin\phi= \sqrt{\frac{17(N^2/8+N/4)}{2^{19}}}\approx 0.308 .
\end{equation}
The two steps of amplitude amplification give
\begin{equation}
    \sin((2\step+2)\phi) \approx 0.999986 .
\end{equation}
The costs are as follows.
\begin{enumerate}
\item The inequality test $r\ge s$ has a cost of 7.
\item The inequality test $r<N/2$ has cost 4.
\item The inequality test on the ancilla has a cost of 4.
\item The reflection about zero for the $2\times 7+5=19$ qubit state has cost of 17.
\item The reflection for the output qubits from the 3 inequality tests has cost 1.
\item Flagging success of inequality tests has cost 1.
\end{enumerate}
There is cost $7+4=11$ for the inequality tests for the state, 4 for the ancilla, and $19+1=20$ for the reflections, giving total cost $11(2\step+1)+8\step+20\step + 1=112$.
Taking account of the forward and reverse preparation we need 224.
Adding to that the 310 Toffolis for preparing the equal superposition over $\ell,p,q$ we get 534.

Next, we consider the preparation of the equal superposition state over a single register for the sparse preparation.
For $N=108$ for the {\micro} orbitals, we need to prepare an equal superposition over 436,508 states, which can be represented on 19 qubits.
We also prepare an equal superposition over 19 basis states out of 64 for 6 qubits.
Then we have an initial amplitude of
\begin{equation}
   \sin\phi= \sqrt{\frac{19\times 436508}{2^{25}}} \approx 0.497 .
\end{equation}
The amplitude after a \emph{single} step of amplitude amplification is
\begin{equation}
    \sin((2\step+2)\phi) \approx 0.99995 .
\end{equation}
The costs are as follows.
\begin{enumerate}
\item Because $436508=4\times 109127$, we can save 2 Toffolis and the main inequality test takes 16 Toffolis. That is done three times for a cost of 48.
\item The inequality test on the ancilla takes 5 Toffolis, and is done twice for a cost of 10.
\item There is a reflection on all qubits.  There are $19+5=24$, so the cost is 22.
\item There is a reflection on two qubits output from the inequality test, with no cost.
\item We need only flag success of the main inequality test, with no extra cost.
\end{enumerate}
The total is therefore $48+10+22=80$.
The multiplying by 2 for the preparation and inverse preparation gives 160.

In preparing the {\caltech} orbitals with $N=152$, we need a superposition over 179,498 basis states, which needs 18 qubits.
We create a superposition over 3 basis states out of 8 on a 3-qubit ancilla to give an initial amplitude of
\begin{equation}
   \sin\phi= \sqrt{\frac{3\times 179498}{2^{21}}} \approx 0.507 .
\end{equation}
A single step of amplitude amplification gives amplitude
\begin{equation}
    \sin((2\step+2)\phi) \approx 0.9997 .
\end{equation}
The costs are as follows.
\begin{enumerate}
\item Since 179,498 is a multiple of 2, we need 16 Toffolis for that inequality test. That is done three times for a cost of 48.
\item The inequality test on the ancilla takes 2 Toffolis, and is done twice for a cost of 4.
\item There is a reflection on all qubits.  There are $18+3=21$, so the cost is 19.
\item There is a reflection on two qubits output from the inequality test, with no cost.
\item We need only flag success of the main inequality test, with no extra cost.
\end{enumerate}
The total of these costs is 71.
Multiplying by 2 to account for preparation and inverse preparation gives us 142.
\FloatBarrier

\section{Complexity of computing $s$}
\label{app:arithmetic}
Here we determine the complexity of computing $s$ in \eq{qromreg}, which is
\begin{equation}
s = \ell (N^2/8+N/4) + p(p+1)/2 + q .
\end{equation}
The strategy to compute this function efficiently is as follows.
\begin{enumerate}
\item Copy $p$ into the output register.
\item For qubit $k$ encoding $p$, use it to control addition of $2^kp$ to the output register.
After this the value in the output register is $p(p+1)$.
\item Discard the least significant qubit in the output register. This qubit must be zero, and the effect of discarding it is to divide by 2, giving $p(p+1)/2$. 
\item Add $q$ to the output register.
\item For bit $k$ of $N^2/8+N/4$, if it is nonzero add $2^k\ell$ to the output register.
\end{enumerate}

For the controlled addition we can control copying of the register to be added to a new register,
then control addition of that ancilla register to the output register.
Normally, if there are $n$ qubits then that will result in $n$ Toffolis in addition to the Toffolis for the addition.
The ancilla register can be erased without further non-Clifford gates by measurement in the $X$ basis and classically controlled Clifford gates \cite{GidneyAdder}.
In our case, we can save one Toffoli because we are controlling copying $p$ by a qubit from $p$.
Copying that qubit of $p$ to the ancilla controlled on itself is just achieved by a CNOT gate.
The cost of the additions just corresponds to the number of qubits that need be acted upon minus one.

In the case of the multiplication by a constant, the additions we have a sequence of classically controlled additions.
For example, for $N=108$, where $N^2/8+N/4$ is equal to $1485=1024+256+128+64+8+4+1$.
That means, to add $\ell (N^2/8+N/4)$, we add $\ell$, then $4\ell$, then $8\ell$, and so forth.
When we are adding a multiple of a power of 2, then we save a number of Toffolis corresponding to that power, because we do not need to act on the less-significant qubits.
See \app{arith} for an in-depth discussion of the exact costs of addition and subtraction.
In general we aim to save Toffolis by adding numbers that are multiples of larger powers of 2 towards the end.
This is because the number may be larger, but we save Toffolis because we do not need to act on the less-significant qubits.
We can improve the complexity slightly be adding $q$ earlier in the computation than indicated above, but it has a less clear interpretation because the $q$ is added during the multiplication.

First let us consider $N=108$, so $p$ and $q$ are encoded in $n=6$ qubits and have maximum possible values of 53.
The sequence of operations and their costs are then as described below.
In the following we use $p_0$, $p_1$, and so forth to denote the successive qubits encoding $p$.
The expression $p(p+1)/2$ can be written in terms of these values as
\begin{align}
    &p(p+1)/2\nn
    &=p/2+p_0 p/2 + p_1 p + 2 p_2 p + 2^2 p_3 p + 2^3 p_4 p + 2^4 p_5 p \nn
    &=[p_0+2p_1+2^2p_2+2^3p_3+2^4p_4+2^5p_5]/2+[p_0+2p_0(2p_1+2^2p_2+2^3p_3+2^4p_4+2^5p_5)]/2 \nn &\quad +2p_1(2p_1+2(2^2p_2+2^3p_3+2^4p_4+2^5p_5)]/2+2^2p_2[2^2p_2+2(2^3p_3+2^4p_4+2^5p_5)]/2\nn &\quad +2^3p_3[2^3p_3+2(2^4p_4+2^5p_5)]/2+2^4p_4[2^4p_4+2(2^5p_5)]/2+2^{10}p_5/2 \label{eq:secline} \\
    &=[p_1+2p_2+2^2p_3+2^3p_4+2^4p_5]+[p_0+2p_0p_1+2^2p_0p_2+2^3p_0p_3+2^4p_0p_4+2^5p_0p_5]\nn &\quad +2[p_1+2^2p_1p_2+2^3p_1p_3+2^4p_1p_4+2^5p_1p_5]+2^3[p_2+2^2p_2p_3+2^3p_2p_4+2^4p_2p_5]\nn &\quad +2^5[p_3+2^2p_3p_4+2^3p_3p_5]+2^7[p_4+2^2p_4p_5]+2^9p_5 \, .\label{eq:bitform}
\end{align}
In \eq{secline} we have combined the terms like $p_0p_1$, $p_1p_0$ and given a factor of 2.
\begin{enumerate}
\item Copy $p_1$ to $p_5$ into the output ancilla, which has no Toffoli cost. That gives the first term in square brackets in \eq{bitform}.
\item Controlled on $p_0$, copy $p$ into the working ancilla, with cost $n-1=5$. That gives the second term in square brackets in \eq{bitform}.
\item Add the working ancilla to the output ancilla, with cost $n=6$, then erase the working ancilla with Clifford gates.
\item Add $q$ to the output.  The maximum value in the output is now $(3/2)p+q$ which has a maximum of $132$, so has 8 bits.  The cost of the addition is therefore 7.
\item Controlled on $p_1$, copy $p_1$ to $p_5$ into working ancilla, with cost $4$. We use one zeroed qubit between $p_1$ and $p_1p_2$, because the third term in brackets in \eq{bitform} has $p_1$ then $2^2p_1p_2$.
\item Add the working ancilla [corresponding to the third term in brackets in \eq{bitform}] times 2 to the output.  The maximum value in the output is now $227$, and needs 8 bits, and the quantity being added is a multiple of $2$, so the cost is 6.
\item Controlled on $p_2$, copy $p_2$ to $p_5$ into working ancilla, with cost $3$.
\item Add the working ancilla [corresponding to the fourth term in brackets in \eq{bitform}] times $2^3$ to the output.  The maximum value in the output is now $381$, and needs 9 bits. Since we are adding a multiple of $2^3$ the cost is 5.
\item Controlled on $p_3$, copy $p_3$ to $p_5$ into the working ancilla, with cost $2$.
\item Add the working ancilla [corresponding to the fifth term in brackets in \eq{bitform}] times $2^5$ to the output.  The maximum value in the output is now $669$, and needs 10 bits. Since we are adding a multiple of $2^5$ the cost is 4.
\item Controlled on $p_4$, copy $p_4$ and $p_5$ into working ancilla, with cost $1$.
\item Add the working ancilla [corresponding to the sixth term in brackets in \eq{bitform}] times $2^7$ to the output.  The maximum value in the output is now $972$, and needs 10 bits. Since we are adding a multiple of $2^7$ the cost is 2.
\item Copy $p_5$ into working ancilla, with no Toffoli cost.
\item Add the working ancilla times $2^9$ to the output.  The maximum value in the output is now $N^2/8+N/4-1=1484$, and needs 11 bits. Since we are adding a multiple of $2^9$ the cost is 1.
\item Add $\ell$.  After adding the maximum value is $1484+L=1684$, which needs 11 bits.  The cost is 10.
\item Add $4\ell$.  After that maximum value is $1684+4L=2484$ which needs 12 bits, and we need to act on $12-2=10$ bits, so the cost is 9.
\item Add $8\ell$.  The maximum value is $2484+8L=4084$ with 12 bits.  We act on $12-3=9$ bits, so the cost is 8.
\item Add $64\ell$ for a maximum value of $4084+64L=16884$ with 15 bits.  We act on $15-6=9$ bits, for a cost of 8.
\item Add $128\ell$ for a maximum of $16884 + 128L =42484$ and 16 bits.  We act on $16-7=9$ bits for a cost of 8.
\item Add $256\ell$ for a maximum of $42484+256L=93684$ and 17 bits.  We act on $17-8=9$ bits for a cost of 8.
\item Add $1024\ell$ for a maximum of $93684+1024L=298484$ and 19 bits.  We act on $19-10=9$ bits for a cost of 8.
\end{enumerate}
Adding all these costs together gives a total Toffoli cost of $5+6+7+4+6+3+5+2+4+1+2+1+10+9+8\times 5=105$.

Next consider $N=152$, so $p$ and $q$ are encoded in $n=7$ qubits and have maximum possible values of 75.
Then we have $N^2/8+N/4=2926$.
In this case it is more efficient to use the fact that $2926=2048 + 1024 - 128 - 16 - 2$ and perform some subtractions.
The expression $p(p+1)/2$ can be written in terms of $p_j$ as
\begin{align}
    p(p+1)/2&=[p_1+2p_2+2^2p_3+2^3p_4+2^4p_5+2^5p_6]\nn
    &\quad +[p_0+2p_0p_1+2^2p_0p_2+2^3p_0p_3+2^4p_0p_4+2^5p_0p_5+2^6p_0p_6]\nn &\quad +2[p_1+2^2p_1p_2+2^3p_1p_3+2^4p_1p_4+2^5p_1p_5+2^6p_1p_6]\nn & \quad +2^3[p_2+2^2p_2p_3+2^3p_2p_4+2^4p_2p_5+2^5p_2p_6]\nn &\quad +2^5[p_3+2^2p_3p_4+2^3p_3p_5+2^4p_3p_6]+2^7[p_4+2^2p_4p_5+2^3p_4p_6]\nn &\quad +2^9[p_5+2^2p_5p_6]+2^{11}p_6 \, .\label{eq:bitform2}
\end{align}
Then the sequence of elementary steps we need to perform is as follows.
\begin{enumerate}
\item Copy $p_1$ to $p_6$ into the output ancilla, which has no Toffoli cost. That gives the first term in square brackets in \eq{bitform2}.
\item Controlled on $p_0$, copy $p$ into the working ancilla, with cost $n-1=6$. That gives the second term in square brackets in \eq{bitform2}.
\item Add the working ancilla to the output ancilla. The maximum value is $113$, which needs 7 bits so the cost is $6$.
\item Add $q$ to the output.  The maximum value in the output is now $(3/2)p+q$ which has a maximum of $187$, so has 8 bits.  The cost of the addition is therefore 7.
\item Controlled on $p_1$, copy $p_1$ to $p_6$ into working ancilla, with cost $5$.
\item Add the working ancilla [corresponding to the third term in brackets in \eq{bitform2}] times 2 to the output.  The maximum value in the output is now $333$, and needs 9 bits. Since we are adding a multiple of 2 the cost is 7.
\item Controlled on $p_2$, copy $p_2$ to $p_6$ into working ancilla, with cost $4$.
\item Add the working ancilla [corresponding to the fourth term in brackets in \eq{bitform2}] times $2^3$ to the output.  The maximum value in the output is now $583$, and needs 10 bits. Since we are adding a multiple of $2^3$ the cost is 6.
\item Controlled on $p_3$, copy $p_3$ to $p_6$ into the working ancilla, with cost $3$.
\item Add the working ancilla [corresponding to the fifth term in brackets in \eq{bitform2}] times $2^5$ to the output.  The maximum value in the output is now $939$, and needs 10 bits. Since we are adding a multiple of $2^5$ the cost is 4.
\item Controlled on $p_4$, copy $p_4$ to $p_6$ into working ancilla, with cost $2$.
\item Add the working ancilla [corresponding to the fifth term in brackets in \eq{bitform2}] times $2^7$ to the output.  The maximum value in the output is now $1579$, and needs 11 bits. Since we are adding a multiple of $2^7$ the cost is 3.
\item Controlled on $p_5$, copy $p_5$ and $p_6$ into working ancilla, with cost $1$.
\item Add the working ancilla [corresponding to the fifth term in brackets in \eq{bitform2}] times $2^9$ to the output.  The maximum value in the output is now $2091$, and needs 12 bits. Since we are adding a multiple of $2^9$ the cost is 2.
\item Copy $p_6$ into working ancilla, with no Toffoli cost.
\item Add the working ancilla times $2^{11}$ to the output.  The maximum value in the output is now $N^2/8+N/4-1=2925$, and needs 12 bits. Since we are adding a multiple of $2^{11}$ the addition can be performed with Cliffords.
\item Add $1024\ell$.  After adding the maximum value is $2925+1024L=207725$, which needs 18 bits.  Since we added a multiple of $2^{10}$ the cost is 7.
\item Subtract $128\ell$.  After that maximum value is $207725-128L=182125$, and we need to act on $18-7=11$ bits, so the cost is 10.
\item Subtract $16\ell$.  The maximum value is $182125-16L=178925$, and we act on $17-4=13$ bits, so the cost is 12.
\item Subtract $2\ell$ for a maximum value of $178925-2L=178525$, and we act on $17-1=16$ bits, for a cost of 15.
\item Add $2048\ell$ for a maximum of $178525+2048L=588125$ and 20 bits.  We act on $20-11=9$ bits for a cost of 8.
\end{enumerate}
Adding all these costs together gives a total Toffoli cost of $6+6+7+5+7+4+6+3+4+2+3+1+2+7+10+12+15+8=108$.
\FloatBarrier

\section{Costs of addition, subtraction and inequality tests}
\label{app:arith}
Here we describe the exact costs for addition and subtraction and inequality testing.
In \fig{adder}, the addition circuit from \cite{GidneyAdder} is reproduced.
This circuit is for addition of 5-qubit variables modulo $2^5$.
In the following we will refer to numbers given in quantum registers as ``variables'', and numbers given classically as ``constants''.
It can be seen that the cost of this circuit is 4 Toffolis (those on the right can be performed with measurements and Clifford gates).
More generally, if there is modular addition (with the modulus $2^n$) on $n$-qubit variables, the Toffoli cost is $n-1$.
The same circuit can be used for non-modular addition if it is known that the carry qubit would be zero (in this case the number would be less than $2^5$).
Then the cost for addition of $n$-qubit variables without a carry qubit is therefore $n-1$.

\begin{figure}\centerline{
\Qcircuit @C=0.9em @R=.6em {
\lstick{i_0} & \ctrl{2} & \qw & \qw& \qw & \qw & \qw& \qw & \qw & \qw& \qw & \qw & \qw & \qw & \qw& \qw &  \qw & \qw& \qw & \qw & \qw & \ctrl{2} & \ctrl{1} & \rstick{i_0} \qw \\
\lstick{t_0} & \control \qw & \qw & \qw& \qw &  \qw & \qw& \qw & \qw & \qw & \qw & \qw& \qw &  \qw & \qw& \qw & \qw & \qw &  \qw & \qw& \qw & \control \qw & \targ & \rstick{(t+i)_0} \qw \\
&  & \ctrl{2} & \qw& \ctrl{3} &  \qw & \qw & \qw& \qw & \qw & \qw &  \qw & \qw & \qw& \qw & \qw & \qw & \qw & \ctrl{3} & \qw & \ctrl{1} & \qw \\
\lstick{i_1} & \qw & \targ & \ctrl{2}& \qw &  \qw & \qw & \qw& \qw & \qw & \qw  & \qw & \qw & \qw & \qw & \qw & \qw & \qw & \qw & \ctrl{2} & \targ & \qw & \ctrl{1} & \rstick{i_1} \qw  \\
\lstick{t_1} & \qw & \targ & \control\qw& \qw &  \qw & \qw & \qw& \qw & \qw & \qw & \qw & \qw & \qw & \qw & \qw & \qw & \qw & \qw & \control \qw & \qw & \qw & \targ & \rstick{(t+i)_1} \qw \\
&  & & & \targ &  \ctrl{2} & \qw& \ctrl{3} & \qw &  \qw & \qw & \qw& \qw & \qw & \qw & \ctrl{3} & \qw & \ctrl{1} & \targ & \qw \\
\lstick{i_2} & \qw & \qw & \qw&\qw & \targ &  \ctrl{2} & \qw & \qw& \qw & \qw & \qw & \qw & \qw & \qw & \qw & \ctrl{2} & \targ & \qw & \qw & \qw & \qw & \ctrl{1} & \rstick{i_2} \qw \\
\lstick{t_2} & \qw & \qw & \qw& \qw &\targ &\control  \qw & \qw & \qw& \qw & \qw & \qw & \qw & \qw & \qw & \qw & \control\qw & \qw & \qw & \qw & \qw & \qw & \targ & \rstick{(t+i)_2} \qw \\
& & &  & & & & \targ &  \ctrl{2} & \qw& \ctrl{3} & \qw &  \ctrl{3} & \qw & \ctrl{1} & \targ & \qw  \\
\lstick{i_3} & \qw & \qw & \qw& \qw & \qw & \qw&\qw & \targ &  \ctrl{2} & \qw &  \qw& \qw & \ctrl{2} & \targ & \qw & \qw & \qw & \qw & \qw & \qw & \qw & \ctrl{1} & \rstick{i_3} \qw \\
\lstick{t_3} & \qw & \qw & \qw& \qw & \qw & \qw& \qw &\targ &  \control \qw & \qw &  \qw &  \qw & \control \qw & \qw & \qw & \qw & \qw & \qw & \qw & \qw & \qw & \targ & \rstick{(t+i)_3} \qw \\
& & & & & &  & & & & \targ &  \ctrl{2} & \targ & \qw &  \\
\lstick{i_4} & \qw & \qw & \qw & \qw & \qw & \qw& \qw & \qw & \qw&\qw & \qw &  \qw & \qw & \qw &  \qw & \qw & \qw &  \qw & \qw & \qw & \qw & \ctrl{1} & \rstick{i_4} \qw \\
\lstick{t_4} & \qw & \qw & \qw & \qw & \qw & \qw& \qw & \qw & \qw& \qw &\targ &  \qw & \qw & \qw &  \qw & \qw & \qw & \qw & \qw & \qw & \qw & \targ & \rstick{(t+i)_4} \qw
}}\caption{\label{fig:adder}A circuit to perform addition on 5 qubits modulo $2^5$. This circuit is also sufficient for non-modular addition if we know there will not be overflow.}
\end{figure}
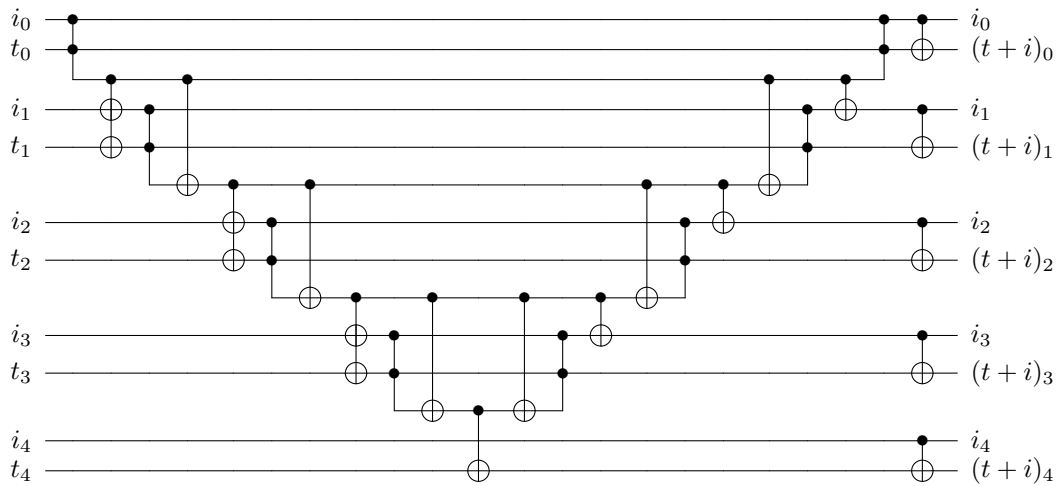

Alternatively, if we wish to perform non-modular addition, and compute a carry qubit, then the circuit will be as in \fig{addcarry}.
That shows addition of 4-qubit variables, with a Toffoli cost of 4.
More generally, the addition of $n$-qubit variables with a carry qubit has cost $n$.
In either case (with or without the carry qubit) we can make further savings if one of the numbers being added is known to be a multiple of a power of 2.
Say $i$ is a multiple of 2, so the final digit $i_0$ is 0.
Then the first carry qubit (the third line in \fig{adder} and \fig{addcarry}) is zero, and the CNOTs it controls have no effect and may be omitted.
As a result, we may perform the circuit as before except starting from $i_1$ and $t_1$ instead of $i_0$ and $t_0$.
This means that we save a Toffoli.
More generally, if there are $k$ trailing zeros for one of the numbers (it is a multiple of $2^k$), then we may save $k$ Toffolis.

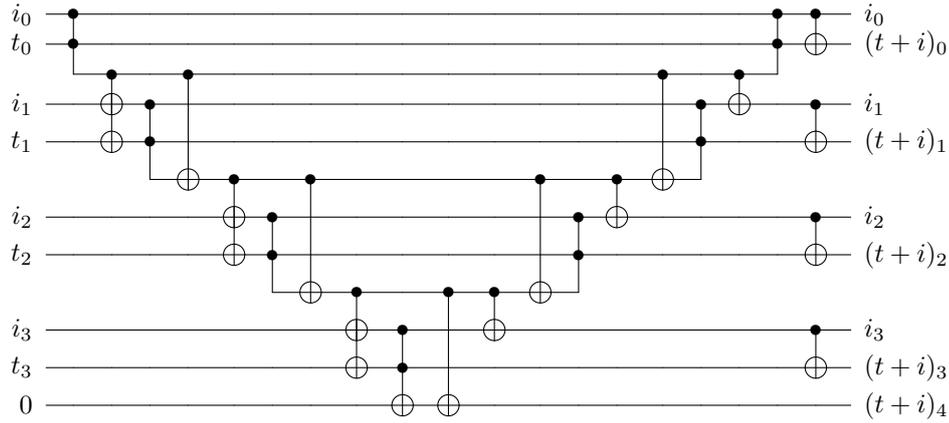
\begin{figure}\centerline{
\Qcircuit @C=0.9em @R=.6em {
\lstick{i_0} & \ctrl{2} & \qw & \qw& \qw & \qw & \qw& \qw & \qw  & \qw & \qw & \qw& \qw &  \qw & \qw& \qw & \qw & \qw & \ctrl{2} & \ctrl{1} & \rstick{i_0} \qw \\
\lstick{t_0} & \control \qw & \qw & \qw& \qw &  \qw & \qw& \qw  & \qw & \qw &  \qw & \qw& \qw & \qw & \qw &  \qw & \qw& \qw & \control \qw & \targ & \rstick{(t+i)_0} \qw \\
& & \ctrl{2} & \qw& \ctrl{3} &  \qw & \qw & \qw& \qw & \qw & \qw & \qw& \qw & \qw & \qw & \ctrl{3} & \qw & \ctrl{1} & \qw \\
\lstick{i_1} & \qw & \targ & \ctrl{2}& \qw &  \qw & \qw & \qw& \qw   & \qw & \qw & \qw & \qw & \qw & \qw & \qw & \ctrl{2} & \targ & \qw & \ctrl{1} & \rstick{i_1} \qw  \\
\lstick{t_1} & \qw & \targ & \control\qw& \qw &  \qw & \qw & \qw& \qw  & \qw & \qw & \qw & \qw & \qw & \qw & \qw & \control \qw & \qw & \qw & \targ & \rstick{(t+i)_1} \qw \\
 & & & & \targ &  \ctrl{2} & \qw& \ctrl{3} & \qw &  \qw & \qw & \qw & \ctrl{3} & \qw & \ctrl{1} & \targ & \qw \\
\lstick{i_2} & \qw & \qw & \qw&\qw & \targ &  \ctrl{2} & \qw & \qw& \qw & \qw & \qw & \qw & \ctrl{2} & \targ & \qw & \qw & \qw & \qw & \ctrl{1} & \rstick{i_2} \qw \\
\lstick{t_2} & \qw & \qw & \qw& \qw &\targ &\control  \qw & \qw & \qw& \qw  & \qw & \qw & \qw & \control\qw & \qw & \qw & \qw & \qw & \qw & \targ & \rstick{(t+i)_2} \qw \\
& & & & & & & \targ &  \ctrl{2} & \qw& \ctrl{3} & \ctrl{1} & \targ & \qw  \\
\lstick{i_3} & \qw & \qw & \qw& \qw & \qw & \qw&\qw & \targ &  \ctrl{2} & \qw & \targ & \qw & \qw & \qw & \qw & \qw & \qw & \qw & \ctrl{1} & \rstick{i_3} \qw \\
\lstick{t_3} & \qw & \qw & \qw& \qw & \qw & \qw& \qw &\targ &  \control \qw & \qw &  \qw &  \qw & \qw & \qw & \qw & \qw & \qw & \qw & \targ & \rstick{(t+i)_3} \qw \\
\lstick{0} & \qw & \qw & \qw & \qw & \qw & \qw & \qw & \qw & \targ  & \targ & \qw & \qw & \qw & \qw & \qw & \qw & \qw & \qw & \qw & \rstick{(t+i)_4} \qw
}}\caption{\label{fig:addcarry}A circuit to perform addition on 4 qubits, with a single carry output bit.}
\end{figure}

Next consider the case where one of the numbers is given classically, so is a constant.
We can use the same circuit, and use NOT gates to prepare $i$ in the desired state corresponding to our classically given number.
In practice we could make further simplifications to reduce the number of gates, but in most cases they do not reduce the number of Toffolis.
One which does is to note the value of $i_0$.
If $i_0=1$, then the first Toffoli can be replaced with a CNOT, saving a single Toffoli.
That means that, for adding a constant, we always have a Toffoli cost 1 less than adding a variable.
Moreover, if $i_0=0$, then we can make the same simplification as described above for variables.
That means that if the constant is a multiple of $2^k$ then we can save $k$ Toffolis.
This saving is in addition to the saving of 1 because we are adding a constant. 

\begin{figure}\centerline{
\Qcircuit @C=0.9em @R=.6em {
\lstick{i_0}  & \ctrl{1}& \ctrl{2}        & \qw     & \qw& \qw & \qw & \qw& \qw & \qw & \qw& \qw & \qw & \qw & \qw & \qw& \qw &  \qw & \qw& \qw & \qw & \qw & \ctrl{2} & \rstick{i_0} \qw \\
\lstick{t_0}  & \targ   & \control \qw & \qw     & \qw& \qw &  \qw & \qw& \qw & \qw & \qw & \qw & \qw& \qw &  \qw & \qw& \qw & \qw & \qw &  \qw & \qw& \qw & \control \qw  & \rstick{(t-i)_0} \qw \\
                  &           &                   & \ctrl{1} & \qw& \ctrl{3} &  \qw & \qw & \qw& \qw & \qw & \qw &  \qw & \qw & \qw& \qw & \qw & \qw & \qw & \ctrl{3} & \qw & \ctrl{2} & \qw \\
\lstick{i_1}  & \ctrl{1}& \qw            & \targ & \ctrl{2}      & \qw &  \qw & \qw & \qw& \qw & \qw & \qw  & \qw & \qw & \qw & \qw & \qw & \qw & \qw & \qw & \ctrl{2}      & \targ & \qw  & \rstick{i_1} \qw  \\
\lstick{t_1} & \targ    & \qw            & \qw   & \control\qw& \qw &  \qw & \qw & \qw& \qw & \qw & \qw & \qw & \qw & \qw & \qw & \qw & \qw & \qw & \qw & \control \qw & \targ & \qw & \rstick{(t-i)_1} \qw \\
&  & & & & \targ &  \ctrl{1} & \qw& \ctrl{3} & \qw &  \qw & \qw & \qw& \qw & \qw & \qw & \ctrl{3} & \qw & \ctrl{2} & \targ & \qw \\
\lstick{i_2} & \ctrl{1} & \qw & \qw & \qw&\qw & \targ &  \ctrl{2} & \qw & \qw& \qw & \qw & \qw & \qw & \qw & \qw & \qw & \ctrl{2} & \targ & \qw & \qw & \qw & \qw  & \rstick{i_2} \qw \\
\lstick{t_2} & \targ & \qw & \qw & \qw& \qw &\qw &\control  \qw & \qw & \qw& \qw & \qw & \qw & \qw & \qw & \qw & \qw & \control\qw & \targ & \qw & \qw & \qw & \qw  & \rstick{(t-i)_2} \qw \\
& & & & & & & & \targ &  \ctrl{1} & \qw& \ctrl{3} & \qw &  \ctrl{3} & \qw & \ctrl{2} & \targ & \qw  \\
\lstick{i_3} & \ctrl{1} & \qw & \qw & \qw& \qw & \qw & \qw&\qw & \targ &  \ctrl{2} & \qw &  \qw& \qw & \ctrl{2} & \targ & \qw & \qw & \qw & \qw & \qw & \qw & \qw & \rstick{i_3} \qw \\
\lstick{t_3} & \targ & \qw & \qw & \qw& \qw & \qw & \qw& \qw &\qw &  \control \qw & \qw &  \qw &  \qw & \control \qw & \targ & \qw & \qw & \qw & \qw & \qw & \qw & \qw & \rstick{(t-i)_3} \qw \\
& & & & & & & & & & & \targ &  \ctrl{2} & \targ & \qw &  \\
\lstick{i_4} & \ctrl{1} & \qw & \qw & \qw & \qw & \qw & \qw& \qw & \qw & \qw&\qw & \qw &  \qw & \qw & \qw &  \qw & \qw & \qw &  \qw & \qw & \qw & \qw  & \rstick{i_4} \qw \\
\lstick{t_4} & \targ & \qw & \qw & \qw & \qw & \qw & \qw& \qw & \qw & \qw& \qw &\targ &  \qw & \qw & \qw &  \qw & \qw & \qw & \qw & \qw & \qw & \qw & \rstick{(t-i)_4} \qw
}}\caption{\label{fig:sub}A circuit to perform subtraction on 5 qubits modulo $2^5$.}
\end{figure}
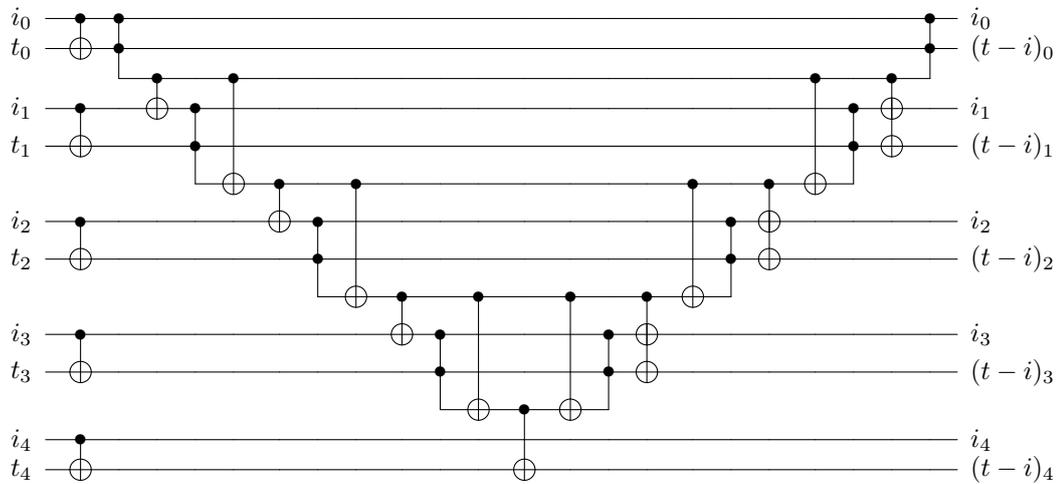

In order to perform subtraction, one can simply reverse the circuit for addition, as shown in \fig{sub}.
This circuit is for modular subtraction of $n$-qubit variables, and can also be used for non-modular subtraction if it is known that $t\ge i$.
The cost to subtract two $n$-qubit numbers is $n-1$.
Again we can make a saving if it is known that $i$ is a multiple of a power of 2.
If $i$ is a multiple of $2^k$, then we can save $k$ Toffolis, via exactly the same reasoning as for addition.
Similarly, if $i$ is a constant, then we can save a further Toffoli.

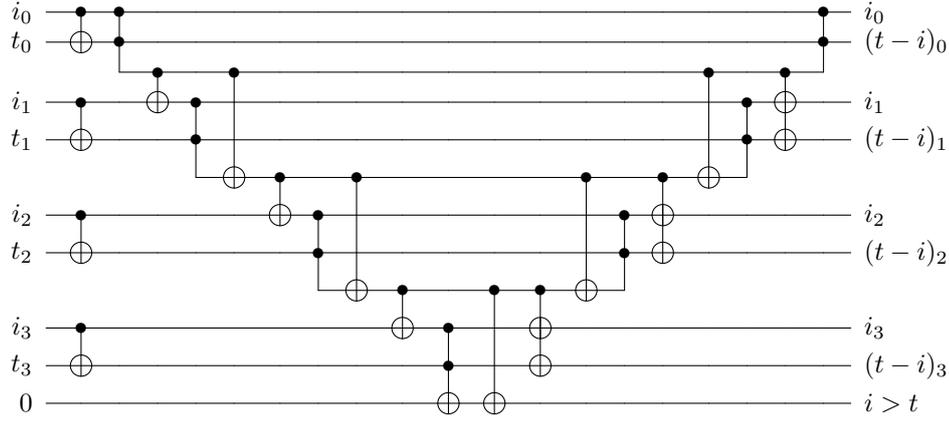
\begin{figure}\centerline{
\Qcircuit @C=0.9em @R=.6em {
\lstick{i_0}  & \ctrl{1}& \ctrl{2}        & \qw     & \qw& \qw & \qw & \qw& \qw & \qw & \qw& \qw & \qw& \qw &  \qw & \qw& \qw & \qw & \qw & \ctrl{2} & \rstick{i_0} \qw \\
\lstick{t_0}  & \targ   & \control \qw & \qw     & \qw& \qw &  \qw & \qw& \qw & \qw & \qw  &  \qw & \qw& \qw & \qw & \qw &  \qw & \qw& \qw & \control \qw  & \rstick{(t-i)_0} \qw \\
                  &           &                   & \ctrl{1} & \qw& \ctrl{3} &  \qw & \qw & \qw& \qw & \qw & \qw& \qw & \qw & \qw & \qw & \ctrl{3} & \qw & \ctrl{2} & \qw \\
\lstick{i_1}  & \ctrl{1}& \qw            & \targ & \ctrl{2}      & \qw &  \qw & \qw & \qw& \qw & \qw & \qw & \qw & \qw & \qw & \qw & \qw & \ctrl{2}      & \targ & \qw  & \rstick{i_1} \qw  \\
\lstick{t_1} & \targ    & \qw            & \qw   & \control\qw& \qw &  \qw & \qw & \qw & \qw & \qw & \qw & \qw & \qw & \qw & \qw & \qw & \control \qw & \targ & \qw & \rstick{(t-i)_1} \qw \\
&  & & & & \targ &  \ctrl{1} & \qw& \ctrl{3} & \qw &  \qw & \qw & \qw& \ctrl{3} & \qw & \ctrl{2} & \targ & \qw \\
\lstick{i_2} & \ctrl{1} & \qw & \qw & \qw&\qw & \targ &  \ctrl{2} & \qw  & \qw & \qw & \qw & \qw & \qw & \ctrl{2} & \targ & \qw & \qw & \qw & \qw  & \rstick{i_2} \qw \\
\lstick{t_2} & \targ & \qw & \qw & \qw& \qw &\qw &\control  \qw & \qw & \qw& \qw  & \qw & \qw & \qw & \control\qw & \targ & \qw & \qw & \qw & \qw  & \rstick{(t-i)_2} \qw \\
& & & & & & & & \targ &  \ctrl{1} & \qw& \ctrl{3} & \ctrl{2} & \targ & \qw  \\
\lstick{i_3} & \ctrl{1} & \qw & \qw & \qw& \qw & \qw & \qw&\qw & \targ &  \ctrl{2} & \qw & \targ & \qw & \qw & \qw & \qw & \qw & \qw & \qw & \rstick{i_3} \qw \\
\lstick{t_3} & \targ & \qw & \qw & \qw& \qw & \qw & \qw& \qw &\qw &  \control \qw & \qw &  \targ &  \qw & \qw & \qw & \qw & \qw & \qw & \qw & \rstick{(t-i)_3} \qw \\
\lstick{0} & \qw & \qw & \qw & \qw & \qw& \qw & \qw & \qw& \qw &\targ &  \targ & \qw &  \qw & \qw & \qw & \qw & \qw & \qw & \qw & \rstick{i>t} \qw
}}\caption{A circuit to perform an inequality test on 4 qubits. The output bit is $(t-i)_4$, and will be 1 if $t<i$. This is because it performs modular subtraction on 5 bits, with $i_4$ and $t_4$ guaranteed to be zero.  The maximum values of $i$ and $t$ are 15.  If we subtract $i$ from $t$ modulo 32, and $t<i$, then the result will be between 17 and 31.  These numbers all have bit 5 equal to 1.}
\end{figure}

As was noted in \cite{Cuccaro}, subtraction can be used for inequality testing as well.
That is, if we perform modular subtraction $t-i$, then the carry qubit will carry the information about whether $i>t$.
In \fig{ineq}, we show the circuit for an inequality test on two 4-qubit variables.
We have taken the circuit in \fig{sub} for two 5-qubit variables, taken $i_5$ and $t_5$ to be zero, and simplified.
There the fourth carry qubit can take the role of the output flagging if $i>t$.
We no longer need the CNOT between that ancilla and the $t_4$ qubit, because that was just to copy out the value of the carry qubit, and we remove the second CNOT and Toffoli with the carry qubit as target, because they had the role of erasing that ancilla in the subtraction circuit.
Note that this circuit still gives the result of the subtraction in other registers.
To restore the $t$ register, we can use the circuit shown in \fig{ineq2}.

Note that with 4-qubit integers, they may represent numbers from 0 to 15, and the result of the subtraction can be from $-15$ to $+15$.
Since the subtraction is performed modulo 32, negative numbers from $-15$ to $-1$ will become 17 to 31, in which case the fifth qubit will be in the state $\ket{1}$.
Since negative numbers result from $t-i<0$, so $t<i$, this carry qubit is flagging the result of the inequality test.
We could also test $t\le i$ by reversing the action of the circuit between $t$ and $i$, giving a qubit which has the result of the inequality test $t>i$.
A NOT gate would give the result of the inequality test $t\le i$.
The inequality test on $n$-qubit variables has cost $n$.
Again, if $i$ is known to be a multiple of $2^k$, then we can save $k$ Toffolis, and if $i$ is a constant then we can save an additional Toffoli.

In summary, the rules for counting costs with $n$-bit integers are as follows.
\begin{enumerate}
\item Addition with a carry qubit has cost $n$.
\item Addition with no carry qubit has cost $n-1$.
\item Subtraction has cost $n-1$.
\item Inequality tests have cost $n$.
\item If $i$ is a multiple of $2^k$, there is a saving of $k$.
\item If $i$ is a constant then there is a saving of 1.
\end{enumerate}
In the case of addition or subtraction, $i$ is the number being added or subtracted from the target register.
For the inequality test we are testing $t<i$.

\begin{figure}[htb]
\centerline{
\Qcircuit @C=0.9em @R=.6em {
\lstick{i_0}  & \ctrl{1}& \ctrl{2}        & \qw     & \qw& \qw & \qw & \qw& \qw & \qw & \qw& \qw & \qw& \qw &  \qw & \qw& \qw & \qw & \qw & \ctrl{2} & \ctrl{1} & \rstick{i_0} \qw \\
\lstick{t_0}  & \targ   & \control \qw & \qw     & \qw& \qw &  \qw & \qw& \qw & \qw & \qw  &  \qw & \qw& \qw & \qw & \qw &  \qw & \qw& \qw & \control \qw  & \targ & \rstick{t_0} \qw \\
                  &           &                   & \ctrl{1} & \qw& \ctrl{3} &  \qw & \qw & \qw& \qw & \qw & \qw& \qw & \qw & \qw & \qw & \ctrl{3} & \qw & \ctrl{1} & \qw \\
\lstick{i_1}  & \ctrl{1}& \qw            & \targ & \ctrl{2}      & \qw &  \qw & \qw & \qw& \qw & \qw & \qw & \qw & \qw & \qw & \qw & \qw & \ctrl{2}      & \targ & \qw  & \ctrl{1} & \rstick{i_1} \qw  \\
\lstick{t_1} & \targ    & \qw            & \qw   & \control\qw& \qw &  \qw & \qw & \qw & \qw & \qw & \qw & \qw & \qw & \qw & \qw & \qw & \control \qw & \qw & \qw & \targ & \rstick{t_1} \qw \\
&  & & & & \targ &  \ctrl{1} & \qw& \ctrl{3} & \qw &  \qw & \qw & \qw& \ctrl{3} & \qw & \ctrl{1} & \targ & \qw \\
\lstick{i_2} & \ctrl{1} & \qw & \qw & \qw&\qw & \targ &  \ctrl{2} & \qw  & \qw & \qw & \qw & \qw & \qw & \ctrl{2} & \targ & \qw & \qw & \qw & \qw  & \ctrl{1} & \rstick{i_2} \qw \\
\lstick{t_2} & \targ & \qw & \qw & \qw& \qw &\qw &\control  \qw & \qw & \qw& \qw  & \qw & \qw & \qw & \control\qw & \qw & \qw & \qw & \qw & \qw  & \targ & \rstick{t_2} \qw \\
& & & & & & & & \targ &  \ctrl{1} & \qw& \ctrl{3} & \ctrl{1} & \targ & \qw  \\
\lstick{i_3} & \ctrl{1} & \qw & \qw & \qw& \qw & \qw & \qw&\qw & \targ &  \ctrl{2} & \qw & \targ & \qw & \qw & \qw & \qw & \qw & \qw & \qw & \ctrl{1} & \rstick{i_3} \qw \\
\lstick{t_3} & \targ & \qw & \qw & \qw& \qw & \qw & \qw& \qw &\qw &  \control \qw & \qw &  \qw &  \qw & \qw & \qw & \qw & \qw & \qw & \qw & \targ & \rstick{t_3} \qw \\
\lstick{0} & \qw & \qw & \qw & \qw & \qw& \qw & \qw & \qw& \qw &\targ &  \targ & \qw &  \qw & \qw & \qw & \qw & \qw & \qw & \qw & \qw & \rstick{t<i} \qw
}}\caption{\label{fig:ineq}A circuit to perform an inequality test on 4 bits. This time the circuit restores the original values in the $t$ registers, so it does not alter this register and just outputs the result of the inequality test.}
\end{figure}
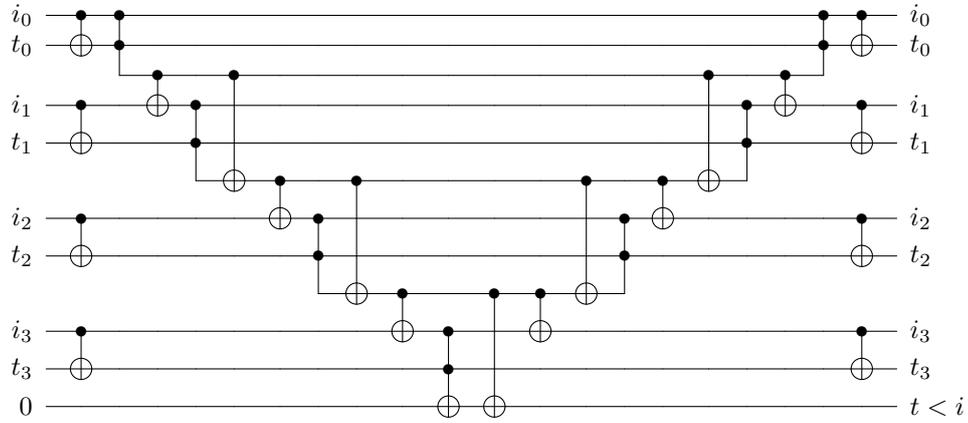

\begin{figure}[htb]
\centerline{
\Qcircuit @C=0.9em @R=.6em {
\lstick{i_0}  & \qw & \qw        & \qw     & \qw& \qw & \qw & \qw& \qw & \qw & \qw& \qw & \qw& \qw &  \qw & \qw& \qw & \qw & \qw & \qw & \qw & \rstick{i_0} \qw \\
\lstick{t_0}  & \targ   & \ctrl{1} & \qw     & \qw& \qw &  \qw & \qw& \qw & \qw & \qw  &  \qw & \qw& \qw & \qw & \qw &  \qw & \qw& \qw & \ctrl{1}  & \targ & \rstick{t_0} \qw \\
                  &           &                   & \ctrl{1} & \qw& \ctrl{3} &  \qw & \qw & \qw& \qw & \qw & \qw& \qw & \qw & \qw & \qw & \ctrl{3} & \qw & \ctrl{1} & \qw \\
\lstick{i_1}  & \ctrl{1}& \qw            & \targ & \ctrl{2}      & \qw &  \qw & \qw & \qw& \qw & \qw & \qw & \qw & \qw & \qw & \qw & \qw & \ctrl{2}      & \targ & \qw  & \ctrl{1} & \rstick{i_1} \qw  \\
\lstick{t_1} & \targ    & \qw            & \qw   & \control\qw& \qw &  \qw & \qw & \qw & \qw & \qw & \qw & \qw & \qw & \qw & \qw & \qw & \control \qw & \qw & \qw & \targ & \rstick{t_1} \qw \\
&  & & & & \targ &  \ctrl{1} & \qw& \ctrl{3} & \qw &  \qw & \qw & \qw& \ctrl{3} & \qw & \ctrl{1} & \targ & \qw \\
\lstick{i_2} & \ctrl{1} & \qw & \qw & \qw&\qw & \targ &  \ctrl{2} & \qw  & \qw & \qw & \qw & \qw & \qw & \ctrl{2} & \targ & \qw & \qw & \qw & \qw  & \ctrl{1} & \rstick{i_2} \qw \\
\lstick{t_2} & \targ & \qw & \qw & \qw& \qw &\qw &\control  \qw & \qw & \qw& \qw  & \qw & \qw & \qw & \control\qw & \qw & \qw & \qw & \qw & \qw  & \targ & \rstick{t_2} \qw \\
& & & & & & & & \targ &  \ctrl{1} & \qw& \ctrl{3} & \ctrl{1} & \targ & \qw  \\
\lstick{i_3} & \ctrl{1} & \qw & \qw & \qw& \qw & \qw & \qw&\qw & \targ &  \ctrl{2} & \qw & \targ & \qw & \qw & \qw & \qw & \qw & \qw & \qw & \ctrl{1} & \rstick{i_3} \qw \\
\lstick{t_3} & \targ & \qw & \qw & \qw& \qw & \qw & \qw& \qw &\qw &  \control \qw & \qw &  \qw &  \qw & \qw & \qw & \qw & \qw & \qw & \qw & \targ & \rstick{t_3} \qw \\
\lstick{0} & \qw & \qw & \qw & \qw & \qw& \qw & \qw & \qw& \qw &\targ &  \targ & \qw &  \qw & \qw & \qw & \qw & \qw & \qw & \qw & \qw & \rstick{t<i} \qw
}}\caption{\label{fig:ineq2}A circuit to perform an inequality test on 4 bits, with $i$ given classically and $i_0$ equal to 1. We can save one Toffoli, and the first ancilla is not needed because we can control directly on $t_0$.}
\end{figure}
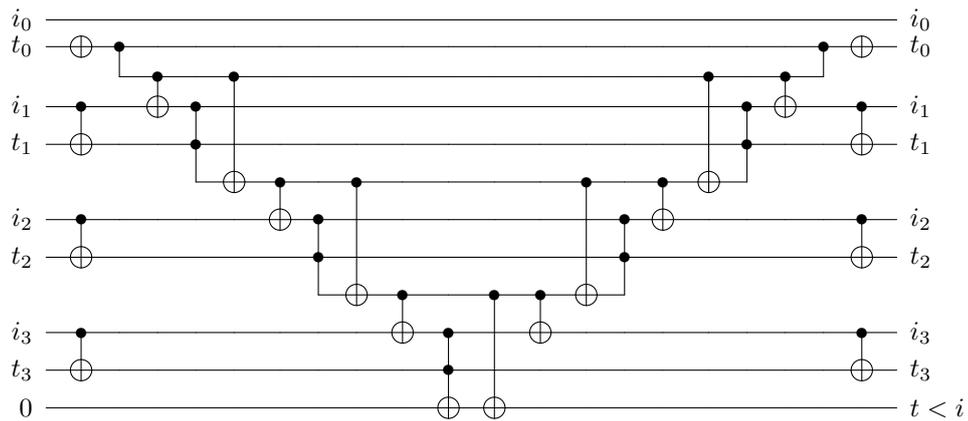

\end{document}